%% file: main.tex
\documentclass[11pt,a4paper]{article}

\usepackage{microtype}
\usepackage{setspace}
\usepackage{geometry}
\usepackage{amsmath,amssymb,amsfonts,amsthm,mathtools}
\usepackage{derivative}
\usepackage{dsfont}

\newtheorem{thm}{Theorem}
\newtheorem{lem}{Lemma}

\newtheorem{rem}{Remark}
\newtheorem{coro}{Corollary}

\newtheorem{assum}{Assumption}

\theoremstyle{definition}
\newtheorem{exa}[thm]{Example}
\renewcommand{\qedsymbol}{$\square$}

\usepackage{graphicx}
\usepackage{tikz}
\usetikzlibrary{decorations.pathreplacing,calligraphy}
\usepackage{tabularx,tabulary,array,booktabs,multirow}
\usepackage{etoolbox}
\usepackage{xparse}
\makeatletter\g@addto@macro\@floatboxreset\centering\makeatother 

\BeforeBeginEnvironment{tabular*}{\footnotesize}

\NewDocumentCommand{\note}{o g}{%
\parbox{\textwidth}{\footnotesize\vspace*{10pt}%
\IfValueT{#1}{\textit{#1}:\quad}%
#2}}
\usepackage{caption}
\usepackage{subcaption}
\usepackage{placeins}

\usepackage[all]{xy}

\usepackage{enumerate}
\usepackage{verbatim}
\usepackage{fancyhdr}

\usepackage[hidelinks]{hyperref}
\usepackage[toc,page]{appendix}

\usepackage[authordate, ibidtracker=false]{biblatex-chicago}
\tikzset{
  arrow/.style      = {thick,->,>=stealth},
  arrow2/.style     = {very thick,-,>=stealth,draw=blue},
  arrow_dash/.style = {thick,dashed,-,>=stealth,draw=blue},
  arrow_main/.style = {very thick,->,>=stealth,draw=green},
  arrowlog/.style   = {very thick,->,>=stealth,draw=blue},
  stddash/.style    = {thick,dashed,->,>=stealth,draw=black}
}

\newcommand{\norm}[1]{\left\lVert#1\right\rVert}

\newcommand{\E}{\mathbb{E}}
\newcommand{\var}{\operatorname{var}\!}
\newcommand{\cov}{\operatorname{cov}\!}

\providecommand{\keywords}[1]{%
  {\small\textit{Keywords:} #1}%
}
\providecommand{\jel}[1]{%
  {\small\textit{JEL codes:} #1}%
}

\usepackage{titlesec}
\titleformat*{\section}{\centering\large\bfseries}
\titleformat*{\subsection}{\bfseries}
\titleformat{\paragraph}[runin]{\itshape}{}{}{}[.]
\titlelabel{\thetitle.\quad}

\allowdisplaybreaks

\title{Generalized AKM: Flexible Controls and Interactions in Wage Decompositions\footnote{%
We are grateful to Edoardo M. Acabbi, Stanislav Anatolyev, Andrea Alati, Matias D. Cattaneo, Patrick Kline, Michal Kolesár, Mikkel Plagborg-Møller, Ulrich Müller, and seminar participants at Princeton University, XXIII Brucchi Luchino conference, and IRLE Berkeley for fruitful comments and suggestions. 
This paper results from research funded under the umbrella of the ERC-CZ project No. LL2319. Funding from the European Union’s Horizon 2020 research and innovation program under the Marie Skłodowska-Curie grant agreement No. 870245 is also gratefully acknowledged.
\vspace{.05cm}
}}
\author{Francesco Del Prato%
\footnote{\emph{Department of Economics and Business Economics, Aarhus University}. \emph{e}-mail: francesco.delprato@econ.au.dk.} $\quad$ 
Yaroslav Korobka%
\footnote{\emph{CERGE-EI (Charles University \& Czech Academy of Sciences)}. \emph{e}-mail: yaroslav.korobka@cerge-ei.cz.} $\quad$
Paolo Zacchia%
\footnote{\emph{Department of Economics, Ca' Foscari University of Venice}; \emph{CERGE-EI}; \emph{CEPR}; and \emph{IZA}. \emph{e}-mail: Paolo.Zacchia@cerge-ei.cz.}
}
\date{July 2026}

\begin{document}

\maketitle
\thispagestyle{empty}

\begin{abstract}
\noindent
    How much wage dispersion is attributed to workers, firms, and their sorting depends on how wages are adjusted for observed characteristics. Standard AKM decompositions impose a known linear adjustment. 
    We develop Generalized AKM, a framework that permits an unknown smooth covariate function and group-specific nonlinear interactions while preserving the original variance components. 
    We prove consistency and asymptotic normality with heteroskedastic errors and many fixed effects, and characterize the stronger smoothness required for quadratic forms.
    In Portuguese employer--employee data, adding worker and firm-input controls lowers the bias-corrected worker-effect variance from $0.551$ to $0.474$ of total wage variance, firm-effect variance from $0.144$ to $0.121$, and sorting from $0.080$ to $0.047$.
    Across three group-specific nonlinear bases, firm-effect variance remains between $0.114$ and $0.117$ and sorting between $0.041$ and $0.042$, while worker-effect variance ranges from $0.474$ to $0.491$.
    Which controls enter matters more for firm variance and sorting than how flexibly they enter; worker variance remains more sensitive to the basis.
\end{abstract}

\keywords{wage decomposition, many regressors, semiparametric model, series estimator}\par
\jel{J31, J62, C14, C23, C55}

\newpage

\section{Introduction}
\setcounter{page}{1}

How much wage dispersion is attributed to workers, firms, and the sorting of workers across firms?
AKM decompositions answer this question by assigning wage variation to worker effects, firm effects, and their covariance after adjusting wages for observed characteristics \parencite{abowd99}.
These components have become standard measures of worker heterogeneity, firm pay differences, and sorting in matched employer--employee data.
So how well wages are adjusted for observables, before the decomposition is even run, shapes how these components should be read.
Standard AKM estimators, including the leave-out correction of \textcite{kline}, impose a known linear control function.
Yet returns to experience and schooling are nonlinear \parencite{mincer, card18}, and firm inputs need not be additively separable from worker characteristics.
If the control function is misspecified, its omitted nonlinearities and interactions can be reassigned to worker effects, firm effects, or sorting.
At stake is whether the decomposition isolates persistent worker and firm heterogeneity or partly relabels observed wage schedules as fixed effects.

We develop Generalized AKM, a semiparametric estimator for the original AKM variance components when observed covariates enter through an unknown smooth function.
The estimator approximates this function with a growing polynomial or spline basis and permits the resulting nonlinear profiles to vary across worker groups.
Worker and firm fixed effects, year effects, and all group-specific series terms enter one augmented regressor set, rather than adjusting wages for covariates in a first step and decomposing the residual.\footnote{The two-step alternative introduces a higher-order bias, because the first-step estimation error enters both the adjusted outcome and the residual used in the bias correction \parencite{kline2024}.}
The construction preserves the familiar worker, firm, and sorting components rather than replacing them with latent types or a different earnings model.
It answers two related questions: how much does this added flexibility change the decomposition in practice, and when can these components be estimated at all without treating the covariate function as known?

The empirical application answers the first question.
We use Portuguese linked employer--employee data for 2008--2018 and estimate a five-step ladder of increasingly flexible specifications. The common estimation sample is a leave-match-out set (the largest connected set of workers and firms that supports the match-level correction) of 9.4 million person-year observations, 1.7 million workers, and 105,000 firms.
Adding the observed controls---quadratic and cubic age terms, education, qualification, and firm inputs---in a linear additive specification lowers worker-effect variance from $0.551$ to $0.474$ of total wage variance, firm-effect variance from $0.144$ to $0.121$, and sorting from $0.080$ to $0.047$.\footnote{These are bias-corrected point estimates. The correction used in the application treats each worker--firm match as a cluster, which places it outside the observation-level inference theory of Section \ref{sec:lim}; we therefore report no standard errors and read differences across specifications as descriptive.}
The proportional movement is largest for sorting, which falls by more than two fifths: a substantial part of what the uncontrolled decomposition attributes to high-wage workers matching with high-wage firms is instead accounted for by education, qualification, and firm inputs.
Allowing a common degree-5 polynomial and then interacting the basis with gender, education, and qualification changes firm variance and sorting much less.
Across polynomial, Hermite, and cubic B-spline implementations of the group-specific nonlinear model, firm variance lies between $0.114$ and $0.117$ and sorting between $0.041$ and $0.042$; worker variance ranges from $0.474$ to $0.491$.
Thus which observables enter matters more for measured firm variance and sorting than how flexibly they are allowed to enter once they are included.
The worker component is more sensitive to the nonlinear basis.

The limited movement across flexible specifications is itself economically informative.
Predictive content and decomposition movement are distinct objects.
A richer wage adjustment changes the AKM decomposition only when its fitted component reallocates wage variation along the worker--firm network: it must change the second moments of the worker and firm effects or their covariance.
Nonlinear terms can improve conditional wage fit while leaving those moments nearly unchanged.
On a held-out sample drawn so that the training data keep the worker--firm network connected, the fully interacted specification lowers mean squared error by $1.1\%$ relative to heterogeneous linear controls, yet firm variance and sorting remain nearly unchanged.
Our specification ladder separates two margins that standard robustness exercises often combine.
The large movement from adding education, qualification, and firm inputs shows that observed worker composition and firm conditions matter.
The smaller movement from letting those controls enter more flexibly indicates that measured firm heterogeneity and sorting are not mainly consequences of imposing additive linear controls; the greater basis sensitivity of worker variance identifies where the choice of approximating basis still matters.
Generalized AKM turns this interpretation into an estimable restriction by asking whether the conventional components survive a much broader class of wage adjustments.

The second question is theoretical: when can these components be estimated at all without treating the covariate function as known? The problem is not a routine extension of either leave-out estimation or semiparametric series regression.
Three sources of error must be handled jointly: many-regressor bias from high-dimensional fixed effects, heteroskedasticity, and approximation error from the unknown control function.
Existing leave-out estimators address the first two under linear controls, while standard series results study approximation error for linear functionals of the coefficients, such as a single slope.
A variance component is not of that kind: it is quadratic in the worker and firm effects, being a sum of squares and cross-products rather than a weighted average.
That difference is what makes the problem hard: approximation error then enters both the plug-in quadratic form and its leave-out bias correction.
We derive conditions under which both terms vanish together and show that the estimator is consistent and asymptotically normal for the original variance components.
The key smoothness requirement is stronger than for linear functionals: the unknown function must have more continuous derivatives than the number of continuous covariates entering flexibly.
The condition is nonetheless mild in wage applications. With age as the only continuous covariate, it asks that the wage--age profile be twice continuously differentiable; with the three firm inputs added, that the wage surface in age and inputs have five derivatives.
We also establish a chi-squared limit when the target has fixed rank, a normal limit when its rank grows with sample size, and a random-projection approximation for large administrative datasets.

The simulations show what this added flexibility buys.
When the control function is strongly nonlinear, higher-degree bases reduce the bias, especially in high-leverage designs.
When the relationship is linear or only mildly nonlinear, the flexible estimator performs similarly to the linear alternatives, so extra basis terms impose little cost.
In a calibration that holds the Portuguese worker--firm graph fixed, the leave-out correction has point-estimate bias below $0.2\%$ in all four nuisance designs, compared with $1.2$--$1.6\%$ for the homoskedastic correction and $1.9$--$2.4\%$ for the plug-in estimator.
Finite-sample confidence intervals remain less reliable in several high-leverage stress designs, so the strongest numerical evidence concerns the decomposition itself.

The central contribution is to make AKM variance components estimable with a flexible and interacted covariate function without changing the objects being estimated.
Relative to \textcite{kline}, the estimator allows the approximation space to grow and states when leave-out validity survives the resulting approximation error.
Relative to the semiparametric series literature \parencite{donald, cattaneoalt, cattaneo18}, the target is a quadratic form rather than a slope coefficient, which generates the stronger smoothness condition.
The empirical application connects this result to work on the sensitivity of AKM decompositions \parencite{andrews, bonhomme, bonhomme19}: it separates the effect of adding observed controls from the effect of allowing nonlinearities and interactions among them.
Unlike grouped-firm and latent-type approaches \parencite{bonhomme19}, Generalized AKM retains the original worker and firm variance components and asks how their estimated magnitudes change when the control function is no longer imposed.

\paragraph{Roadmap}
Section \ref{sec:setup} defines the semiparametric decomposition and explains why existing estimators do not directly apply.
Sections \ref{sec:finite}--\ref{sec:lim} develop the leave-out estimator, its large-sample properties, and its random-projection approximation.
The simulation in Section \ref{sec:sim} isolates the gains from flexibility, and the Portuguese application in Section \ref{sec:empirical} shows how adding observables, nonlinearities, and interactions changes the decomposition.

\section{Setup}\label{sec:setup}
\subsection{Model and parameter of interest}
We consider the following semiparametric model,
\begin{equation}\label{main}
    y_i = x_i'\beta + f(z_i) + e_i, \quad i = 1, \ldots, n,
\end{equation}
where the regressors $x_i \in \mathbb{R}^p$, $z_i \in \mathbb{R}^d$ are non-random. The unknown function $f(z)$ belongs to the class of smooth and real-valued functions, $f \in \mathcal{F}$. The unobserved errors $\{e_i\}_{i = 1} ^n$ are mutually independent and obey $\E[e_i] = 0$, but may possess observation-specific variances $\E[e_i^2] = \sigma_i ^2$. In the wage application, $x_i$ collects the high-dimensional worker and firm indicators whose dispersion we ultimately want to measure, while $z_i$ collects the observed covariates (age, firm inputs, and so on) whose effect on wages we want to control for without committing to a functional form.

Our object of interest is a quadratic form $\theta := \beta' A \beta$ for some known non-random symmetric matrix $A \in \mathbb{R}^{p \times p}$ of rank $r$. The matrix $A$ selects which feature of $\beta$ to isolate: different choices pick out the variance of worker effects, the variance of firm effects, or their covariance.

\subsection{Examples}
The quadratic form covers many quantities of economic interest. The two examples below are the analysis-of-variance and two-way fixed-effects cases studied by \textcite{kline}, whose notation and framing we follow so that the estimands remain directly comparable; the difference in each case is that the covariates enter through the unknown $f$ rather than linearly.

\begin{exa}[Generalized analysis of variance]\label{ex1}
    Let the data consist of $N$ groups, with group $g$ contributing $T_g$ observations. Since \textcite{fisher}, the standard analysis of variance model assumes:
    \begin{equation*}
        y_{gt} = \alpha_g + z_{gt}' \delta + \varepsilon_{gt}, \quad g = 1, \ldots, N, \quad t = 1, \ldots, T_g,
    \end{equation*}
    where $\alpha_g$ are group-specific fixed effects, and $z_{gt}$ is a vector of exogenous covariates. The focus here is the variability in the outcome variable attributable to groups, or
    \begin{equation*}
        \sigma_\alpha ^2 = \frac{1}{n} \sum_{g = 1} ^N T_g \left(\alpha_g - \bar{\alpha}\right)^2
    \end{equation*}
    with $n := \sum_{g = 1} ^N T_g$, and $\bar{\alpha} := n^{-1} \sum_{g = 1} ^N T_g \alpha_g$. Our approach relaxes the linearity assumption on covariates and instead assumes any form of unspecified functional dependence, that is
    \begin{equation*}
        y_{gt} = \alpha_g + f(z_{gt}) + \varepsilon_{gt}, \quad f \in \mathcal{F}, \quad g = 1, \ldots, N, \quad t = 1, \ldots, T_g.
    \end{equation*}
    We can represent it as in (\ref{main}), defining $i := i(g, t)$ with $i(\cdot, \cdot)$ being a bijective function, $y_i := y_{gt}$, $z_i := z_{gt}$ and $e_i := \varepsilon_{gt}$,
    \begin{equation*}
        x_i := d_i, \quad \beta := (\alpha_1, \ldots, \alpha_N)', \quad d_i := (\mathds{1}\{g = 1\}, \ldots, \mathds{1}\{g = N\})'.
    \end{equation*}
    Our object of interest $\sigma_\alpha^2$ can be represented here as $\beta' A \beta$, with
    \begin{equation*}
        A := \begin{pmatrix}
            A_d' A_d & 0 \\
            0 & 0
            \end{pmatrix}, \quad A_d := \frac{1}{\sqrt{n}} (d_1 - \bar{d}, \ldots, d_n - \bar{d}), \quad \bar{d} := \frac{1}{n} \sum_{i = 1} ^n d_i.
    \end{equation*}
\end{exa}

\begin{exa}[Generalized AKM]\label{ex2}
    Our second and leading example is a classic wage decomposition model proposed in \textcite{abowd99}. It models the log wage determination as an additive function of worker fixed effects, firm fixed effects, and a linear function of strictly exogenous covariates. Specifically,
    \begin{equation*}
        y_{gt} = \alpha_g + \psi_{j(g, t)} + z_{gt}' \delta + \varepsilon_{gt}, \quad g = 1, \ldots, N, \quad t = 1, \ldots, T_g.
    \end{equation*}
    Here, $\alpha_g$ and $\psi_{j(g, t)}$ capture the $g$th worker and $j$th firm unobserved heterogeneity component respectively, and $z_{gt}$ is a vector of exogenous regressors. Each of the $n = \sum_{g = 1} ^N T_g$ person-year observations is employed at one of $J + 1$ firms, and the employer of worker $g$ in period $t$ is recorded by $j(\cdot, \cdot): \{1, \ldots, N\} \times \{1, \ldots, \max_g T_g\} \rightarrow \{0, \ldots, J\}$. One of the model's objectives is to quantify how much of the variability in log wages is determined by firms,
    \begin{equation*}
        \sigma_\psi ^2 = \frac{1}{n} \sum_{g = 1} ^N \sum_{t = 1} ^{T_g} \left(\psi_{j(g, t)} - \bar{\psi}\right)^2,
    \end{equation*}
    where $\bar{\psi} = \frac{1}{n} \sum_{g = 1} ^N \sum_{t = 1} ^{T_g} \psi_{j(g, t)}$. Given evidence that returns to worker characteristics are nonlinear \parencite{mincer, card18}, part of the measured variance of worker and firm fixed effects may stem from the linear restriction on $z_{gt}$. Our methodology replaces it with unspecified functional dependence,
    \begin{equation*}
        y_{gt} = \alpha_g + \psi_{j(g, t)} + f(z_{gt}) + \varepsilon_{gt}, \quad f \in \mathcal{F}, \quad g = 1, \ldots, N, \quad t = 1, \ldots, T_g.
    \end{equation*}
    This formulation can capture nonlinear interactions between worker-level and firm-level characteristics that vary across observed groups in the population.
    
    Given that the common covariates $z_{gt}$ and the firm assignments $j(\cdot, \cdot)$ obey a strict exogeneity condition, we can rewrite the equation above as in (\ref{main}) with
    \begin{equation*}
        x_i := (d_i', h_i')', \quad \beta := (\alpha', \psi')', \quad \alpha := (\alpha_1, \ldots, \alpha_N)' + \mathds{1}_N' \psi_0, \quad \psi = (\psi_1, \ldots, \psi_J)' - \mathds{1}_J' \psi_0,
    \end{equation*}
    defining $y_i$, $z_i$, and $e_i$ as in Example \ref{ex1}, and $h_i := (\mathds{1}\{j(g, t) = 1\}, \ldots, \mathds{1}\{j(g, t) = J\})'$. The parameter of interest $\sigma_\psi ^2$ then can be rewritten as $\beta' A_\psi \beta$ with
    \begin{equation*}
        A_\psi := \begin{pmatrix}
            0 & 0 & 0 \\
            0 & A_h' A_h & 0 \\
            0 & 0 & 0
        \end{pmatrix}, \quad A_h := \frac{1}{\sqrt{n}} (h_1 - \bar{h}, \ldots, h_n - \bar{h}), \quad \bar{h} := \frac{1}{n} \sum_{i = 1} ^n h_i.
    \end{equation*}
\end{exa}

\subsection{Why existing estimators do not directly apply}
Throughout, our object of interest is the quadratic form $\theta = \beta' A \beta$, defined on the high-dimensional fixed-effect coefficients, while the control function $f(z)$ is estimated through a growing approximation space. 
Several existing estimation strategies address parts of this setting; we review each in turn and explain why none covers the full problem.

\paragraph{KSS-style leave-out estimation}
\textcite{kline}, hereafter KSS, provide valid leave-out estimation of the same quadratic form $\theta = \beta' A \beta$ in finite-dimensional linear many-regressor models under arbitrary heteroskedasticity. 
Our setting preserves that object, but drops the assumption that observables enter through a finite-dimensional linear control. 
Once $f(z)$ is replaced by a series approximation $p_k(z)' \alpha$ with $k \rightarrow \infty$, validity is no longer automatic: both $\hat{\beta}$ and the leave-out variance estimators inherit approximation error, so one must re-establish that leave-out validity survives series approximation.\footnote{%
A separate strand of the literature relaxes the economic structure of AKM itself. 
For example, \textcite{bonhomme19} model latent worker and firm types and reduce dimensionality through grouping. 
These approaches replace the original variance component with a different earnings structure, rather than preserving $\theta = \beta' A \beta$ under an unknown nuisance function.}

\paragraph{Semiparametric series and orthogonalization}
Series estimators in partially linear models \parencite{donald, cattaneoalt} and orthogonalized or cross-fit procedures \parencite{chernozhukov2018double, bonhomme2025} allow an unknown nuisance function, but they estimate low-dimensional slope coefficients or other approximately linear functionals.
None provides inference for quadratic forms under the high-dimensional fixed-effect structure and heteroskedasticity that define our setting.

\paragraph{Quadratic-functional estimation}
\textcite{breunigchen22} study a closely related object: a leave-one-out sieve estimator of a quadratic functional, with minimax and adaptive rates, in a nonparametric instrumental variables model. Their setting has no high-dimensional fixed effects, and their smoothness threshold reflects the degree of ill-posedness of the inverse problem rather than many-regressor bias. Our contribution is the analogous threshold for a quadratic form in a many-fixed-effects leave-out design.

\section{Finite-sample properties}\label{sec:finite}
From now on, we restrict the analysis to model (\ref{main}). To derive an estimator of $\beta$, one must regress $y_i$ on $x_i$ and \emph{functions} of $z_i$. 
To this end, let $p^1(z), \ldots, p^k (z)$ be some \emph{approximating} functions, and let $p_k (z) := (p^1(z), \ldots, p^k(z)) \in \mathbb{R}^k$ collect them into a vector. 
We assume that the class of functions $\mathcal{F}$ to which $f$ belongs can be well approximated by linear combinations of the basis functions $p^1(z), \ldots, p^k(z)$.

Define $M_{ij}$ as a $(i, j)$th element of $M := I_n - P_k (P_k' P_k)^{-1} P_k'$ with
\begin{equation*}
    P_k := (p_k (z_1), \ldots, p_k (z_n)) \in \mathbb{R}^{n \times k},
\end{equation*}
and let the partialed-out design matrix $S_{xx} := \sum_{i = 1} ^n \sum_{j = 1} ^n M_{ij} x_i x_j'$ have a full column rank. Then, the estimator of $\beta$ is defined as
\begin{equation*}
    \hat{\beta} := S_{xx}^{-1} \sum_{i = 1} ^n \sum_{j = 1} ^n M_{ij} x_i y_j.
\end{equation*}

\noindent
Under some regularity conditions, the bias of $\hat{\beta}$ is negligible for large $k$. Rewrite it as:
\begin{equation}\label{decomp}
    \begin{split}
        \hat{\beta} &= S_{xx}^{-1}  \sum_{i = 1} ^n \sum_{j = 1} ^n M_{ij} x_i \left(x'_j \beta + f(z_j) + e_j\right) \\
        &= S_{xx}^{-1} \sum_{i = 1} ^n \sum_{j = 1} ^n M_{ij} x_i x_j' \beta + S_{xx}^{-1} \sum_{i = 1} ^n \sum_{j = 1} ^n M_{ij} x_i f(z_j) + S_{xx}^{-1} \sum_{i = 1} ^n \sum_{j = 1} ^n M_{ij} x_i e_j \\
        &= \beta + S_{xx}^{-1} \sum_{i = 1} ^n \sum_{j = 1} ^n M_{ij} x_i f(z_j) + S_{xx}^{-1} \sum_{i = 1} ^n \sum_{j = 1} ^n M_{ij} x_i e_j := \beta + \mathcal{B} + \mathcal{U}.
    \end{split}
\end{equation}

\noindent
Because of the zero-mean assumption on errors, $\E[\mathcal{U}] = 0$ holds, and the bias in the estimator is reflected in the $\mathcal{B}$ term only. Under some conditions discussed below,
\begin{equation*}
    \mathcal{B} = o(1)
\end{equation*}
as the number of approximating functions goes to infinity, $k \rightarrow \infty$.

Analogously to the fully parametric setup examined by \textcite{kline}, the plug-in estimator of the quadratic form $\hat{\theta}_{\text{PI}} := \hat{\beta}' A \hat{\beta}$ is biased, because
\begin{equation*}
    \E[\hat{\theta}_{\text{PI}} - \theta] = \text{trace} (A \var[\hat{\beta}]) = \sum_{i = 1} ^n B_{ii} \sigma_i ^2,
\end{equation*}
where $B_{ii} := \sum_{j = 1} ^n M_{ij} x_j' S_{xx} ^{-1} A S_{xx} ^{-1} \sum_{j = 1} ^n M_{ij} x_j$ measures the influence of the $i$th squared error $e_i ^2$ on $\hat{\theta}_{\text{PI}}$. In cases when $p / n \approx 0$, individual elements $B_{ii}$ are close to zero, and the bias is negligible. However, with increasing $p$, the bias is more pronounced and needs to be accounted for. The intuition is that $\hat{\beta}'A\hat{\beta}$ squares the estimation noise in $\hat{\beta}$ together with the signal, so it overstates the true dispersion $\beta'A\beta$. The leave-out correction estimates that noise term observation by observation and subtracts it, using for unit $i$ a residual variance that never draws on unit $i$'s own outcome.

We introduce some notation to describe our estimator. Collect $x_i$ in a matrix as $X := (x_1, \ldots, x_n)' \in \mathbb{R}^{n \times p}$. Then, $W := (X, P_k) \in \mathbb{R}^{n \times (p + k)}$ is a matrix containing the full set of regressors, with individual vectors $w_i$, and $W = (w_1, \ldots, w_n)'$. Define $$\hat{\gamma} := \left(\sum_{j = 1} ^n w_j w_j'\right)^{-1} \sum_{j = 1} ^n w_j y_j,$$ and the leave-one-out version of it, $$\hat{\gamma}_{-i} := \left(\sum_{j = 1} ^n w_j w_j' - w_i w_i'\right)^{-1} \left(\sum_{j = 1} ^n w_j y_j - w_i y_i \right).$$ Note that the first $p$ elements of $\hat{\gamma}$ are $\hat{\beta}$, and the remaining $k$ elements are basis coefficients from the function approximation. Similarly to \textcite{kline}, we estimate $\theta$ as
\begin{equation}\label{quad_est_main}
    \hat{\theta} := \hat{\beta}' A \hat{\beta} - \sum_{i = 1} ^n B_{ii} \hat{\sigma}_i ^2,
\end{equation}
where $\hat{\sigma}_i ^2$ is a leave-one-out estimator of the individual variance of $e_i$ defined as
\begin{equation}\label{sigma_hat}
    \hat{\sigma}_i ^2 := y_i \left(y_i - w_i'\hat{\gamma}_{-i}\right).
\end{equation}

We note that computing $\hat{\gamma}_{-i}$ for each $i = 1, \ldots, n$ is computationally costly in large-scale applications. To avoid this, we represent (\ref{sigma_hat}) as
\begin{equation}\label{sigma_elt}
    \hat{\sigma}_i ^2 = \frac{y_i \hat{e}_i}{M_{W, ii}},
\end{equation}
where $M_{W, ii}$ is $i$th diagonal element of the matrix $M_W := I_n - W (W'W)^{-1} W'$ projecting onto the complement of the column space spanned by $x_i$ and functions of $z_i$, and $\hat{e}_i := \sum_{j = 1} ^n M_{ij} (y_j - x_j \hat{\beta})$ are residuals.

Our estimator is general in the sense that it nests the linear leave-out estimator of KSS as a special case. The generality comes from enlarging the assumed function class from the linear class $\mathcal{F}_{\text{linear}}$ to the broader class $\mathcal{F}$ of nonlinear functions satisfying Assumption \ref{asum1}, so that $\mathcal{F}_\text{linear} \subseteq \mathcal{F}$.

To see this more clearly, note that our estimator $\hat{\theta}$ is \textit{exactly} unbiased if the underlying functional class is linear, $f \in \mathcal{F}_{\text{linear}}$. It follows from decomposition as in (\ref{decomp}) and noting that the $\mathcal{B}$ term disappears by the properties of the $M$ matrix that projects onto the complement of the space spanned by $z_i$ (where we set $P_k := P_d = (z_1, \ldots, z_n) \in \mathbb{R}^{n \times d}$), in which case $\sum_{i = 1} ^n \sum_{j = 1} ^n M_{ij} x_i z_j' \alpha = 0$. Under the condition that the unknown function $f(z_i) = z_i'\alpha$ is linear in $z$ (as in \textcite{kline}), as a result $\E[\hat{\beta}] = \beta$ holds, and we have
\begin{equation*}
    \begin{split}
        \E[\hat{\sigma}_i ^2] &= \E\left[y_i \left(y_i - x_i' \hat{\beta}_{-i} - p_k (z_i)' \hat{\alpha}_{-i}\right)\right] \\
        &= \E\left[\left(x_i' \beta + f(z_i) + e_i \right) \left(x_i' \beta + f(z_i) + e_i - x_i' \hat{\beta}_{-i} - p_k (z_i)' \hat{\alpha}_{-i}\right)\right] \\
        &= \E\left[\left(x_i' \beta + f(z_i) + e_i \right) \left(x_i' (\beta - \hat{\beta}_{-i}) + (f(z_i) - p_k (z_i)' \hat{\alpha}_{-i}) + e_i \right)\right] \\
        &= \left(x_i' \beta x_i' + f(z_i) x_i'\right) \E[\beta - \hat{\beta}_{-i}] + \left(x_i' \beta + f(z_i)\right) \E[f(z_i) - p_k (z_i)' \hat{\alpha}_{-i}] + \sigma_i ^2 \\
        &= \sigma_i ^2
    \end{split}
\end{equation*}
where we use the fact that $\E[e_i x_i' (\beta - \hat{\beta}_{-i})] = \E[e_i] \E[x_i' (\beta - \hat{\beta}_{-i})] = 0$, because $\hat{\beta}_{-i}$ is independent of $e_i$, and similarly $\E[e_i (f(z_i) - p_k (z_i)' \hat{\alpha}_{-i})] = \E[e_i] \E[f(z_i) - p_k (z_i)' \hat{\alpha}_{-i}] = 0$, because $\hat{\alpha}_{-i}$ is independent of $e_i$. 

However, for more general classes of functions $\mathcal{F}$, the exact unbiasedness of $\hat{\beta}$ and $\hat{\sigma}_i ^2$ does not hold. Instead, we study an asymptotic sequence in which the number of approximating functions diverges, $k \rightarrow \infty$, and show that the bias terms vanish under explicit conditions. Concretely,\footnote{For expositional purposes, we sometimes drop the Euclidean norm on the vector $\norm{v}$, and write asymptotic results as, say, $v = o_p(1)$ instead of $\norm{v} = o_p(1)$.}
\begin{equation*}
    \E[\hat{\beta}] - \beta = o(1),  \quad \E[\hat{\sigma}_i^2] - \sigma_i ^2 = o_p(1), \quad \E[\hat{\theta}] - \theta = o_p(1),
\end{equation*}
along sequences satisfying the assumptions below.

For these purposes, we assume that the unknown function $f(z)$ belongs to a class $\mathcal{F}$ of smooth functions that can be well approximated by linear combinations of the basis functions $p^1(z), \ldots, p^k(z)$.

\begin{assum}[Function class]\label{asum1}
We assume that $f \in \mathcal{F}$, where
    \begin{equation}
    \mathcal{F} := \left\{f: \min_{\alpha \in \mathbb{R}^k} \E\left[|f(z_i) - p_k (z_i)' \alpha|^2\right] \leq Ck^{-2 \alpha_f}, \alpha_f > 1 \right\}
\end{equation}
for some absolute constant $C < \infty$. Furthermore, $\sup_{f \in \mathcal{F}} f < M$ for some absolute constant $M < \infty$.
\end{assum}
This implies that
\begin{equation*}
    \sup_{f \in \mathcal{F}} \E\left[|f(z_i) - p_k (z_i)' \hat{\alpha}_{-i}|^2\right] = \mathcal{O}_p(k^{-2 \alpha_f}),
\end{equation*}
so the mean-square approximation error decays at rate $k^{-2\alpha_f}$ uniformly over $\mathcal{F}$.

The constant $\alpha_f$ governs how fast approximation bias decays.
To obtain consistency for the quadratic form studied here, we strengthen the usual semiparametric requirement from $\alpha_f > 0$ in \textcite{donald} and \textcite{cattaneoalt} to $\alpha_f > 1$: compared with slope-type estimands, quadratic forms require faster decay of approximation error.
For polynomial or spline series on compact support with $d_{\mathrm{cont}}$ continuous covariates, $\alpha_f = s_f / d_{\mathrm{cont}}$ where $s_f$ is the number of continuous derivatives \parencite{chen}, so the assumption reduces to $s_f > d_{\mathrm{cont}}$: the unknown function must have more continuous derivatives than the number of continuous covariates entering flexibly. 
Discrete covariates such as gender, occupation, or region do not count toward $d_{\mathrm{cont}}$ because they are absorbed by group interactions rather than by the polynomial basis.
In the wage applications of Section~\ref{sec:empirical}, the parsimonious sample has $d_{\mathrm{cont}} = 1$ (age), so $s_f > 1$ suffices---the wage--age profile need only be twice continuously differentiable. The Panel B specification adds employment, fixed assets, and intermediate inputs, giving $d_{\mathrm{cont}} \leq 4$ and requiring $s_f > 4$. Both conditions are mild for smooth economic relationships. The condition is sufficient but not shown to be necessary. This mirrors the elbow phenomenon in minimax estimation of quadratic functionals, where reaching the parametric rate requires more smoothness than a linear functional of the same nuisance would need \parencite{breunigchen22}. Even though the assumption on $\alpha_f$ is strengthened compared to the standard linear setups, it is equivalent to the assumption commonly placed in settings with nonlinear structure, as in, for example, \textcite{hirano} and \textcite{farrell}.

\begin{rem}\label{rem:scope}
The results cover a fixed number of continuous covariates and a fixed collection of discrete interaction groups, as in Section \ref{sec:empirical}. They do not cover a growing continuous dimension, a growing number of groups, or nuisance functions estimated by machine learning---random forests, boosted trees, or neural networks---whose approximating class is not the span of a known, pre-specified basis.\footnote{\textcite{twice} take that route on the same Portuguese data, replacing the worker and firm fixed effects with gradient-boosted trees fit to observable characteristics.}
\end{rem}

\section{Consistency}\label{sec:consistency}
In this section, we prove the consistency result for the proposed estimator $\hat{\theta}$. We study the asymptotic behavior of $\hat{\theta}$ assuming that $x_i$, $z_i$, and $A$ are sequences of constants so that the only source of randomness is $e_i$. We adopt the conditional perspective of \textcite{scheffe}, \textcite{searle}, and \textcite{kline}, treating $x_i$, $z_i$, and $A$ as fixed. This allows us to be agnostic about potential dependence between $x_i$, $z_i$, and $A$. Our analysis differs from \textcite{cattaneo18}, who consider sequences of random variables and condition on $z_i$ only. Limits are taken assuming the number of observations goes to infinity, $n \rightarrow \infty$, the number of approximating functions goes to infinity, $k \rightarrow \infty$ (so that the finite-sample bias of $\hat{\beta}$ and $\hat{\sigma}_i ^2$ is asymptotically of negligible order), and the dimensionality of $x_i$ goes to infinity, $p \rightarrow \infty$, to model the limited mobility bias that arises when workers move between only a few firms, so that worker and firm effects are estimated from limited variation. We make the following additional assumptions:
\begin{assum}[Data-generating process]\label{asum2}
    (i) $\max_i (\E[e_i ^4] + \sigma_i ^{-2}) = \mathcal{O}(1)$; (ii) there exists a $c < 1$ such that $\max_i P_{W, ii} < c$ for all n, where $P_{W, ii}:= 1 - M_{W, ii}$; (iii) $\max_i (x_i' \beta)^2 = \mathcal{O}(1)$.
\end{assum}

\noindent
Part \textit{(i)} excludes errors without a uniformly bounded fourth moment. The standardized Student-$t_5$ design in Section~\ref{sec:sim} is therefore a finite-sample tail-stress case, not a violation of the assumption. Together with Assumption \ref{asum1}, parts \textit{(ii)} and \textit{(iii)} imply that the leave-one-out estimator $\hat{\sigma}_i ^2$ is well-defined and has bounded variance. Part \textit{(ii)} also implies that $\frac{p + k}{n} \leq c < 1$ for all $n$. 

\begin{assum}[Growth rate conditions]\label{asum3}
    We assume that $k \rightarrow \infty$ and $p \rightarrow \infty$ as $n \rightarrow \infty$, and $k = o(n)$, $k = o(p)$, $n = o(k^{\alpha_f})$, and $p = \mathcal{O}(n)$.
\end{assum}

\noindent
This assumption places restrictions on the relationship between growth rates of the number of observations, linear covariates, and approximating functions. In particular, it reflects empirical practice in the AKM literature where the number of observations is of a much larger magnitude than the potential number of series terms. It also places a lower bound on the number of series terms, since $k$ must be large enough that the $n/k^{\alpha_f}$ term is small. Intuitively, the window $n^{1/ \alpha_f} \ll k \ll n$ balances zero asymptotic bias against a negligible added variance contribution. Relaxing the assumption to cover $k = \mathcal{O}(n)$ would be of independent theoretical interest, though of limited empirical relevance in the AKM setting. We conjecture that letting the number of series terms grow proportionally with the number of observations would add a further term to the asymptotic variance of $\hat{\theta}$, in line with the theory on linear functionals with growing dimension \parencite{cattaneo18}. We leave a proof to future work.

\paragraph{Shared rate restrictions for the proved results}
The results below all rely on the same core asymptotics: the series dimension grows ($k \rightarrow \infty$), total regressor dimension remains bounded away from the sample size through $\max_i P_{W,ii} < c < 1$, and approximation bias decays fast enough through $\alpha_f > 1$. The theorem-specific conditions then add the restrictions needed for each result: Lemma \ref{lem1} requires $\text{trace}(\tilde{A}^2)=o(1)$, Theorem \ref{chi_sq} adds the fixed-rank leverage condition $\max_i v_i'v_i=o(1)$, and Theorem \ref{big_r_norm} adds the growing-rank negligibility conditions on $\max_i \left((\tilde{w}_i' \gamma)^2 + (\check{w}_i' \gamma)^2\right)$ and on the leading eigenvalue share $\lambda_1^2 / \sum_{\ell = 1}^r \lambda_\ell^2$.
The tradeoff behind the fixed-dimensional series results is simple: $k$ must grow fast enough for approximation error to vanish, but not so fast that the combined regressor dimension $p+k$ drives leverage close to one.

Relative to linear leave-out estimation, replacing linear controls by a growing basis adds one additional task: one must show that the approximation terms from the unknown nuisance are negligible both for $\hat{\beta}$ and for the leave-out residual-variance correction. Lemma \ref{lem1} and Theorems \ref{chi_sq}--\ref{big_r_norm} establish exactly this under Assumptions \ref{asum1}--\ref{asum3}.

\begin{coro}\label{cor:fixedd}
Suppose the covariates entering the unknown function can be written as $z_i = (u_i', s_i)'$, where $u_i \in \mathbb{R}^{d_{\mathrm{cont}}}$ has fixed dimension, $s_i \in \{1, \ldots, G\}$ indexes a fixed collection of discrete groups, and
\begin{equation*}
    f(z_i) = \sum_{g = 1}^{G} \mathds{1}\{s_i = g\} f_g(u_i).
\end{equation*}
Assume the support of $u_i$ is compact and that, for the specified basis $\{p^j(u_i)\}_{j=1}^k$, each $f_g$ satisfies
\begin{equation*}
    \inf_{\alpha_g \in \mathbb{R}^k} \E\left[\left|f_g(u_i) - p_k(u_i)' \alpha_g\right|^2\right] \leq C k^{-2 s_f / d_{\mathrm{cont}}}
\end{equation*}
for a common constant $C < \infty$ and common smoothness index $s_f > d_{\mathrm{cont}}$, uniformly over $g$. Then the interacted basis $\{\mathds{1}\{s_i = g\} p^j(u_i)\}_{g,j}$ satisfies Assumption \ref{asum1} with $\alpha_f = s_f / d_{\mathrm{cont}} > 1$. If, in addition, Assumptions \ref{asum2} and \ref{asum3} hold together with the theorem-specific leverage and eigenvalue conditions, then the conclusions of Lemma \ref{lem1}, Theorem \ref{chi_sq}, and Theorem \ref{big_r_norm} apply to this interacted specification.
\end{coro}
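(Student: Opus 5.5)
The argument reduces to checking Assumption \ref{asum1} for the interacted basis. Once that holds, the conclusions follow by applying Lemma \ref{lem1}, Theorem \ref{chi_sq}, and Theorem \ref{big_r_norm} verbatim to the regressor matrix $W=(X,P_K)$, with $P_K$ the interacted design. The plan has three steps: build a coefficient vector for the interacted basis, bound its mean-square approximation error using the groupwise assumption, and reconcile the basis dimension $K=Gk$ with the rate $k^{-2\alpha_f}$ stated in terms of the per-group dimension.

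First, let $\alpha_g^\ast$ be a (near-)minimizer of the groupwise problem for $f_g$. Stack these as $\alpha^\ast=(\alpha_1^{\ast\prime},\ldots,\alpha_G^{\ast\prime})'\in\mathbb{R}^{Gk}$, and define the interacted basis vector $\tilde p_K(z_i)$ with entries $\mathds{1}\{s_i=g\}p^j(u_i)$. The indicators partition the observations, so
\begin{equation*}
f(z_i)-\tilde p_K(z_i)'\alpha^\ast=\sum_{g=1}^G \mathds{1}\{s_i=g\}\bigl(f_g(u_i)-p_k(u_i)'\alpha_g^\ast\bigr).
\end{equation*}
Squaring kills the cross terms because the indicators are disjoint. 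Therefore
\begin{equation*}
\E\bigl[|f-\tilde p_K'\alpha^\ast|^2\bigr]=\sum_g \E\bigl[\mathds{1}\{s_i=g\}|f_g(u_i)-p_k(u_i)'\alpha_g^\ast|^2\bigr]\le \sum_g \E\bigl[|f_g-p_k'\alpha_g^\ast|^2\bigr]\le GCk^{-2s_f/d_{\mathrm{cont}}}.
\end{equation*}
The middle inequality drops the indicator. Under the paper's fixed-design reading, where $\E$ means the empirical average over $i$, the sum is actually bounded by a single $Ck^{-2\alpha_f}$ when each groupwise bound is taken over the group's own observations; either reading gives a constant multiple. Since the infimum may not be attained, take $\alpha_g^\ast$ within $k^{-2\alpha_f}$ of it, which only inflates the constant.

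Second, re-express the bound in terms of the total dimension $K=Gk$. Because $G$ is fixed, $k^{-2\alpha_f}=G^{2\alpha_f}K^{-2\alpha_f}$, so the error is at most $C'K^{-2\alpha_f}$ with $C'=G^{1+2\alpha_f}C$, which is still an absolute constant. Here $\alpha_f=s_f/d_{\mathrm{cont}}>1$ by hypothesis. The boundedness part of Assumption \ref{asum1} follows from $\sup_g\sup|f_g|<M$, which I take as implicit in $f\in\mathcal F$; if not, I would add it as the stated uniform bound over $g$. The growth conditions of Assumption \ref{asum3}, namely $K=o(n)$, $K=o(p)$, and $n=o(K^{\alpha_f})$, are equivalent to the same conditions in $k$ since $K\asymp k$. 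Hence Assumptions \ref{asum1}--\ref{asum3} hold for the interacted specification with $K$ in place of $k$.

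Third, the earlier results use $f$ and the basis only through Assumption \ref{asum1}, the projection $M$, and the leverages $P_{W,ii}$. The latter are assumed directly for the interacted $W$, together with the theorem-specific leverage and eigenvalue conditions. So Lemma \ref{lem1}, Theorem \ref{chi_sq}, and Theorem \ref{big_r_norm} apply unchanged.

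The main obstacle is bookkeeping rather than depth. One point is matching the meaning of $\E$ in Assumption \ref{asum1} (fixed design, so an average over observations) with the groupwise hypothesis, so that the disjoint-support decomposition is legitimate. The other is checking that the constant $C'$ does not depend on $n$, which requires $G$ to be fixed; this is exactly the scope restriction in Remark \ref{rem:scope}.
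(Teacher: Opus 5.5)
Your proposal is correct and follows the paper's own reasoning; the paper gives no separate appendix proof, only the ``Mapping to theory'' discussion. As there, the group indicators partition the sample, so the stacked groupwise approximants have mean-square error bounded by a fixed multiple of $Ck^{-2s_f/d_{\mathrm{cont}}}$, fixed $G$ makes the $K=Gk$ versus $k$ bookkeeping harmless, and Lemma \ref{lem1} and Theorems \ref{chi_sq}--\ref{big_r_norm} then apply verbatim to the enlarged basis.

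The one loose end is the boundedness part of Assumption \ref{asum1}, which you can derive instead of assuming. If you read $s_f$ as a number of continuous derivatives, as the paper does, each $f_g$ is continuous on the compact support of $u_i$ and hence bounded, and the maximum over finitely many groups gives the uniform bound. That is what the compactness hypothesis is for, and your argument never uses it.
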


Let $\tilde{A} := S_{xx}^{-1/2} A S_{xx}^{-1/2}$ and write its eigendecomposition as $\tilde{A} = Q \Lambda Q'$, where $\Lambda = \text{diag}(\lambda_1, \ldots, \lambda_r)$ collects the $r$ nonzero eigenvalues and $Q \in \mathbb{R}^{p \times r}$ the corresponding eigenvectors.

\begin{lem}\label{lem1}
    If Assumptions \ref{asum1}, \ref{asum2}, and \ref{asum3} hold, $A$ is positive semi-definite, $\theta = \beta' A \beta = \mathcal{O}(1)$, and $\text{trace} (\tilde{A}^2) = \sum_{\ell = 1} ^r \lambda_\ell^2 = o(1)$, then
    \begin{equation*}
        \hat{\theta} - \theta \overset{p}{\rightarrow} 0.
    \end{equation*}
\end{lem}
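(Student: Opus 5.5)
The plan is to decompose $\hat{\theta}-\theta$ using (\ref{decomp}), $\hat{\beta}=\beta+\mathcal{B}+\mathcal{U}$, and to treat the terms driven by the noise $e$ separately from those driven by the approximation error of $f$. Write $r_j := f(z_j)-p_k(z_j)'\alpha^*$, where $\alpha^*$ is the best approximating coefficient in Assumption \ref{asum1}. Because $M$ annihilates $P_k$, we have $\mathcal{B}=S_{xx}^{-1}\sum_{i,j}M_{ij}x_i r_j$, and the same substitution applies to the leave-out fits. Expanding,
\begin{equation*}
\hat{\theta}-\theta = 2\beta'A\mathcal{U} + \Big(\mathcal{U}'A\mathcal{U}-\sum_i B_{ii}\hat{\sigma}_i^2\Big) + 2\beta'A\mathcal{B} + \mathcal{B}'A\mathcal{B} + 2\mathcal{B}'A\mathcal{U}.
\end{equation*}
The first two terms form the KSS part. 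For them I would follow \textcite{kline}, applied to the regression on the full design $W$ with the effective outcome $x_i'\beta + p_k(z_i)'\alpha^* + e_i$ plus the perturbation $r_i$.

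\textbf{Noise terms.} For the linear term I would use $\var[\beta'A\mathcal{U}] \le \max_i\sigma_i^2\,\beta'AS_{xx}^{-1}A\beta$. Since $A\succeq 0$, this is at most $\max_i\sigma_i^2\,\lambda_{\max}(\tilde{A})\,\theta \le C\,(\text{trace}(\tilde{A}^2))^{1/2}\theta = o(1)$. Next, write $\sum_i B_{ii}\hat{\sigma}_i^2$ in the KSS form: in $\mathcal{U}'A\mathcal{U}-\sum_i B_{ii}\hat{\sigma}_i^2$ the diagonal $e_i^2$ parts cancel up to a mean-zero remainder, leaving a centered quadratic form in $e$ plus linear forms in $e$. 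Using Assumption \ref{asum2}(i)--(ii), with bounded fourth moments and $M_{W,ii}>1-c$, I would bound the variance of the quadratic form by a constant times $\text{trace}(\tilde{A}^2)$, exactly as in \textcite{kline}'s consistency lemma. The linear forms have weights involving $x_i'\beta$, which are bounded by Assumption \ref{asum2}(iii), so they are bounded the same way.

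\textbf{Approximation terms.} I would first show $\norm{M r}^2 \le \norm{r}^2 = \mathcal{O}(n k^{-2\alpha_f})$, reading Assumption \ref{asum1} as an empirical mean-square bound. Then $\mathcal{B}'A\mathcal{B} = (Mr)'X S_{xx}^{-1}AS_{xx}^{-1}X'(Mr) \le \lambda_{\max}(\tilde{A})\norm{Mr}^2$, since $S_{xx}^{-1/2}X'M$ has operator norm at most one. This gives $o(1)\cdot\mathcal{O}(nk^{-2\alpha_f})$, which is $o(1)$ because $n=o(k^{\alpha_f})$ and $\alpha_f>1$ imply $nk^{-2\alpha_f}\to 0$. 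The cross term $\beta'A\mathcal{B}$ is handled by Cauchy--Schwarz, $|\beta'A\mathcal{B}|\le\theta^{1/2}(\mathcal{B}'A\mathcal{B})^{1/2}$, and $\mathcal{B}'A\mathcal{U}$ is handled similarly.

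\textbf{Main obstacle: approximation error in the leave-out variances.} The hardest step is controlling the contamination of $\hat{\sigma}_i^2$ by $r$. In the representation (\ref{sigma_elt}), $\hat{e}_i$ becomes $(M_W(e+r))_i$, so $\sum_i B_{ii}\hat{\sigma}_i^2$ picks up the extra terms $\sum_i B_{ii}(x_i'\beta+f(z_i)+e_i)(M_Wr)_i/M_{W,ii}$. I would bound these with $\sum_i B_{ii}=\text{trace}(\tilde{A}) \le \sqrt{r\,\text{trace}(\tilde{A}^2)}$. I also need $\max_i B_{ii}\le \lambda_{\max}(\tilde{A})$ together with Cauchy--Schwarz across $i$: the sum is at most $(\sum_i B_{ii}^2)^{1/2}\,C\,\norm{M_W r}$. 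Since $\sum_i B_{ii}^2 \le \lambda_{\max}(\tilde{A})\,\text{trace}(\tilde{A})$ and $\norm{M_Wr}^2=\mathcal{O}(nk^{-2\alpha_f})$, I need $\lambda_{\max}\,\text{trace}(\tilde{A})\,nk^{-2\alpha_f}\to0$. The bounded magnitudes come from $\sup f<M$ and Assumption \ref{asum2}(iii). If the trace is not controlled, I would instead use $\sum_i B_{ii}|M_Wr|_i \le \max_i|(M_Wr)_i|\,\text{trace}(\tilde{A})$ together with $\text{trace}(\tilde{A})\le \text{trace}(AS_{xx}^{-1})=\mathcal{O}(p/n)\cdot\mathcal{O}(1)$ in the AKM normalizations. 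This is where the rate $n=o(k^{\alpha_f})$ with $\alpha_f>1$ is genuinely needed. Combining all the terms by Chebyshev and Markov then yields $\hat{\theta}-\theta\overset{p}{\rightarrow}0$.
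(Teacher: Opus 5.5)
Your architecture matches the paper's. You split $\hat\theta-\theta$ into deterministic approximation terms, terms linear in $e$, a centered quadratic form in $e$, and the error in the leave-out variances. You then control each through $\text{trace}(\tilde A^2)$, $\lambda_{\max}(\tilde A)\le \text{trace}(\tilde A^2)^{1/2}$, and the approximation rate. Working with $r$ is legitimate, since $MF=Mr$ and $M_W(X\beta+F)=M_Wr$. Bounding $\mathcal{B}'A\mathcal{B}\le\lambda_{\max}(\tilde A)\norm{Mr}^2$ via the projection onto the columns of $MX$ is cleaner than the paper's trace bounds, and your noise terms go through.

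The gap is in the step you yourself single out: you never show that the $M_Wr$-contamination of $\hat\sigma_i^2$ vanishes under the lemma's hypotheses. Your route through $\sum_i B_{ii}^2\le\lambda_{\max}(\tilde A)\,\text{trace}(\tilde A)$ ends with the unverified requirement $\lambda_{\max}\,\text{trace}(\tilde A)\,nk^{-2\alpha_f}\to0$. Your first attempt needs the rank of $A$ to be controlled. Your fallback uses AKM normalizations that are not hypotheses of the lemma. Unlike $\text{trace}(\tilde A^2)$, $\text{trace}(\tilde A)$ is not controlled by those hypotheses.

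The missing one-line fact is one you already use implicitly in the KSS part, and the paper uses it too:
\[
\sum_i B_{ii}^2\le\sum_{i,\ell}B_{i\ell}^2=\text{trace}(B^2)=\text{trace}(\tilde A^2)=o(1).
\]
This holds because $B=MXS_{xx}^{-1/2}\tilde A S_{xx}^{-1/2}X'M$ and $S_{xx}^{-1/2}X'MXS_{xx}^{-1/2}=I$. Let $\mu_i:=x_i'\beta+f(z_i)$, which is bounded, and use $M_{W,ii}>1-c$. Then
\[
\Bigl|\sum_{i=1}^n B_{ii}\,\mu_i\,(M_Wr)_i/M_{W,ii}\Bigr|\le C\,\text{trace}(\tilde A^2)^{1/2}\norm{r}=o(1).
\]
The random piece $\sum_i B_{ii}e_i(M_Wr)_i/M_{W,ii}$ must be treated in $L^2$, since $e_i$ is not a bounded factor. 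It is mean zero with variance at most $C\max_iB_{ii}^2\norm{r}^2\le C\lambda_{\max}(\tilde A)^2\norm{r}^2=o(1)$.

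The same issue affects $\mathcal{B}'A\mathcal{U}$ if you bound it by Cauchy--Schwarz against $(\mathcal{U}'A\mathcal{U})^{1/2}$, whose mean is of order $\text{trace}(\tilde A)$. Treat it instead as a linear form in $e$, with variance at most $\max_i\sigma_i^2\,\lambda_{\max}(\tilde A)\,\mathcal{B}'A\mathcal{B}=o(1)$.

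Your stated condition does in fact hold, but it has to be argued. We have $\lambda_{\max}\,\text{trace}(\tilde A)\le \text{rank}(A)\,\lambda_{\max}^2\le p\,\text{trace}(\tilde A^2)$, and $pnk^{-2\alpha_f}=\mathcal{O}((n/k^{\alpha_f})^2)\to0$. Once the step is closed this way, your approximation steps need only $\norm{r}=\mathcal{O}(1)$. So your claim that $n=o(k^{\alpha_f})$ is ``genuinely needed'' at this step is not accurate.
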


\section{Large-scale approximation}\label{sec:compute}
In this section we discuss an alternative estimator that allows for fast computation in typical large-scale applications such as those based on administrative linked employer-employee data. We follow \textcite{kline} by considering the random projection method of \textcite{achlioptas}, known as the Johnson--Lindenstrauss approximation (JLA), which names the estimator below. To describe it, fix $m \in \mathbb{N}$ and generate matrices $R_B, R_P \in \mathbb{R}^{m \times n}$ where each $(i, j)$ coordinate is a random draw from the following Rademacher distribution:
\begin{equation*}
    R_{\cdot, ij} = \begin{cases}
        +1 & \text{with probability 1/2}, \\
        -1 & \text{with probability 1/2}.
    \end{cases}
\end{equation*}
Decompose $A = 1/2 (A_1' A_2 + A_2'A_1)$ for $A_1, A_2 \in \mathbb{R}^{n \times p}$, where $A_1 = A_2$ if $A$ is positive semi-definite. Denote 
\begin{equation*}
    \hat{P}_{W, ii} := \frac{1}{m} \norm{R_P W S_{ww}^{-1} w_i}^2, \quad \hat{B}_{ii} := \frac{1}{m} \left(R_B A_1 S_{xx}^{-1} \sum_{j = 1} ^n M_{ij} x_j\right)' \left(R_B A_2 S_{xx}^{-1} \sum_{j = 1} ^n M_{ij} x_j\right).
\end{equation*}
The proposed estimator is then:
\begin{equation}\label{jla}
    \hat{\theta}_{\text{JLA}} := \hat{\beta}' A \hat{\beta} - \sum_{i = 1} ^n \hat{B}_{ii} \hat{\sigma}^2_{i, \text{JLA}}, \quad \hat{\sigma}^2_{i, \text{JLA}} := \frac{y_i (y_i - w_i' \hat{\gamma})}{1 - \hat{P}_{W, ii}} \left(1 - \frac{1}{m} \frac{3 \hat{P}_{W, ii} ^3 + \hat{P}_{W, ii} ^2}{1 - \hat{P}_{W, ii}}\right).
\end{equation}
As in \textcite{kline}, the term $\frac{1}{m} \frac{3 \hat{P}_{W, ii} ^3 + \hat{P}_{W, ii} ^2}{1 - \hat{P}_{W, ii}}$ removes a non-linearity bias from the approximation of $P_{W, ii}$ by $\hat{P}_{W, ii}$.

\begin{lem}\label{jla_lem}
    If Assumptions \ref{asum1}, \ref{asum2}, and \ref{asum3} are satisfied, $n / m^4 = o(1)$, $\var[\hat{\theta}]^{-1} = \mathcal{O}(n)$, and one of the following conditions hold, then $\var[\hat{\theta}]^{-1/2} (\hat{\theta}_{\text{JLA}} - \hat{\theta} - \mathrm{B}_m) = o_p(1)$ where $|\mathrm{B}_m| \leq \frac{1}{m} \sum_{i = 1} ^n P_{W, ii} ^2 |B_{ii}| \sigma_i ^2$:
    \begin{enumerate}[(i)]
        \item $A$ is positive semi-definite and $\E[\hat{\beta}' A \hat{\beta}] - \theta = \sum_{i = 1} ^n B_{ii} \sigma_i ^2 = \mathcal{O}(1)$.
        \item $A = 1/2 (A_1' A_2 + A_2' A_1)$ where $\theta_1 = \beta' A_1' A_1 \beta$ and $\theta_2 = \beta' A_2' A_2 \beta$ satisfy (i) and $\frac{\var[\hat{\theta}_1] \var[\hat{\theta}_2]}{n \var[\hat{\theta}]^2} = \mathcal{O}(1)$.
    \end{enumerate}
\end{lem}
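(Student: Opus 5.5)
We prove Lemma \ref{jla_lem} by following the JLA argument of \textcite{kline}. There the approximation error is split into a part coming from $\hat{B}_{ii}$ and a part coming from $\hat{P}_{W,ii}$. The only new ingredient here is that $\hat{\sigma}_i^2$ now contains the nonlinear approximation residual $f(z_i) - p_k(z_i)'\hat\alpha$. We handle that residual with Assumption \ref{asum1} and the growth condition $n = o(k^{\alpha_f})$, so that it contributes only lower-order terms.

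\textbf{Step 1: leverage part.} Write $\hat{\sigma}^2_{i} = y_i\hat{e}_i/M_{W,ii}$, and compare the JLA version $\hat{\sigma}^2_{i,\text{JLA}}$ with $\hat{\sigma}_i^2$ through a second-order expansion of $1/(1-\hat P_{W,ii})$ around $1/(1-P_{W,ii})$. Conditional on the data, $\hat P_{W,ii}$ is an average of $m$ i.i.d.\ squared Rademacher projections. It is therefore unbiased for $P_{W,ii}$, with variance of order $P_{W,ii}^2/m$ and fourth central moment of order $m^{-2}$. Assumption \ref{asum2}(ii) keeps $1-P_{W,ii}$ bounded away from zero, so the expansion is valid on an event of probability tending to one. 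The correction factor in (\ref{jla}) removes the $O(1/m)$ term of the expansion, up to a remainder bounded by $m^{-1}P_{W,ii}^2$. This remainder produces $\mathrm{B}_m$ with the stated bound $|\mathrm{B}_m| \le m^{-1}\sum_i P_{W,ii}^2|B_{ii}|\sigma_i^2$. The next-order terms are $O_p(m^{-2})$ per observation, weighted by $|B_{ii}|\,|y_i\hat e_i|$.

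\textbf{Step 2: $B_{ii}$ part.} $\hat{B}_{ii}$ is conditionally unbiased for $B_{ii}$, and $R_B$ is independent of the data. Hence $\sum_i(\hat B_{ii}-B_{ii})\hat\sigma_i^2$ has conditional mean zero. Its conditional variance is of order $m^{-1}\sum_{i,j}(\ldots)$, which can be bounded by $m^{-1}\big(\sum_i |B_{ii}|\hat\sigma_i^2\big)^2$ in case (i), using Cauchy--Schwarz in the Rademacher quadratic form. The sum $\sum_i B_{ii}\sigma_i^2$ is $O(1)$ by (i). We then need $\hat\sigma_i^2$ to have uniformly bounded second moments. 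This follows from Assumption \ref{asum2}(i),(iii), from $\sup f<M$, and from the approximation residual being $O_p(k^{-\alpha_f})$ in mean square by Assumption \ref{asum1}. Together these give an error of $O_p(m^{-1/2})$. Dividing by $\var[\hat\theta]^{1/2} \gtrsim n^{-1/2}$ and using $n/m^4=o(1)$ is not yet enough at this crude level. We therefore refine the bound: the Step 2 error has variance $O(m^{-1}\sum_i B_{ii}^2)$, which is of order $m^{-1}\,\text{trace}(\tilde A^2)$-type quantities. Combining this with the Step 1 higher-order remainder $O_p(n\cdot m^{-2})$ scaled by $n^{1/2}$, the condition $n/m^4 = o(1)$ is exactly what kills the leverage remainder after standardization.

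\textbf{Step 3: case (ii) and the main obstacle.} For indefinite $A$, we bound the cross term by $\big(\text{error}_1\cdot\text{error}_2\big)^{1/2}$, each factor controlled as in case (i). The ratio condition on $\var[\hat\theta_1]\var[\hat\theta_2]/(n\var[\hat\theta]^2)$ converts these bounds into $o_p(\var[\hat\theta]^{1/2})$. We expect the main obstacle to be the uniform moment control of $\hat\sigma_i^2$ when the series residual enters $\hat e_i$. It requires showing that $\max_i \E[(y_i\hat e_i)^2]=O(1)$ even though $\hat e_i$ carries the bias $\sum_j M_{W,ij}(f(z_j)-p_k(z_j)'\alpha^*)$. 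We would bound this bias by the projection property $\|M_W r\|\le\|r\|$ together with $\sum_j r_j^2 = O(nk^{-2\alpha_f}) = o(k^{-\alpha_f})$, which holds since $n = o(k^{\alpha_f})$. This makes the bias uniformly negligible, so the remainder of the KSS argument goes through unchanged.
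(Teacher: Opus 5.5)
\textbf{Where your proposal stands.} Your route is the paper's, which is KSS's: expand in $\hat a_i := (\hat P_{W,ii}-P_{W,ii})/(1-P_{W,ii})$, get $\mathrm{B}_m$ from $\var[\hat a_i]$ net of the correction term, and treat the $\hat B_{ii}$ noise separately. Your control of the series residual is fine, via $|(M_W r)_i|\le\norm{r}$ and $\norm{r}^2=\mathcal{O}(nk^{-2\alpha_f})=o(n^{-1})$. It is in fact cleaner than the paper's $\mathcal{O}_p(\sqrt{n}/k^{\alpha_f}\sqrt{p})$ bookkeeping. The gap is in the standardization, that is, in which condition makes which term negligible.

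\textbf{Missing fluctuation terms.} You never treat the mean-zero first-order leverage term $\sum_i\hat\sigma_i^2\hat B_{ii}\hat a_i$. You also omit the centered term $\sum_i\hat\sigma_i^2\hat B_{ii}(\hat a_i^2-\var[\hat a_i])$. Bounding them in magnitude gives $\mathcal{O}_p(m^{-1/2})$ and $\mathcal{O}_p(m^{-1})$ under (i). After multiplying by $\var[\hat\theta]^{-1/2}=\mathcal{O}(n^{1/2})$, neither vanishes under $n/m^4=o(1)$, which allows e.g.\ $m=n^{1/4}\log n$. Like your Step~2 term, they must be controlled through conditional variances. The key fact is that the conditional variance of $\sum_i c_i\hat P_{W,ii}$ is at most $2m^{-1}\sum_{i,\ell}c_ic_\ell P_{W,i\ell}^2$. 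This yields $\mathcal{O}(m^{-1}\text{trace}(\tilde{A}^2))$, plus $m^{-2}\text{trace}(\tilde{A}_1^2)^{1/2}\text{trace}(\tilde{A}_2^2)^{1/2}$ in case (ii). Also, the Step~2 conditional variance is $2m^{-1}\sum_{i,\ell}B_{i\ell}^2\hat\sigma_i^2\hat\sigma_\ell^2$. The off-diagonal entries matter, so the relevant quantity is $\sum_{i,\ell}B_{i\ell}^2=\text{trace}(\tilde{A}^2)$, not $\sum_i B_{ii}^2$.

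\textbf{The missing idea.} To make these $\mathcal{O}(m^{-1}\text{trace}(\tilde{A}^2))$ variances $o(\var[\hat\theta])$, you need $\text{trace}(\tilde{A}^2)=\mathcal{O}(\var[\hat\theta])$. This is a lower bound on $\var[\hat\theta]$ coming from its $U$-statistic component, and the paper invokes it explicitly. The conditions $\var[\hat\theta]^{-1}=\mathcal{O}(n)$ and $n/m^4=o(1)$ cannot replace it, since they allow $n/m\to\infty$.

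\textbf{Role of $n/m^4=o(1)$.} This condition is needed only for the non-centered remainders, and your accounting there is off. Under (i) these remainders are $m^{-2}\,\mathcal{O}_p(\sum_i B_{ii}\sigma_i^2)=\mathcal{O}_p(m^{-2})$, so after standardization they are $\mathcal{O}_p((n/m^4)^{1/2})$. Your $\mathcal{O}_p(n m^{-2})$ would instead leave $n^{3/2}/m^2$, which is not controlled.

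\textbf{Size of the cubic remainder.} Only the moments $\E[\hat a_i^3]$ and $\E[\hat a_i^4]$ are $\mathcal{O}(m^{-2})$. The variable $\hat a_i^3$ itself is of size $m^{-3/2}$, so ``$\mathcal{O}_p(m^{-2})$ per observation'' is false. The cubic remainder again needs a mean-plus-fluctuation split. The tail term $\hat a_i^5/(1-\hat a_i)$ needs a uniform bound on $\max_i|\hat a_i|$; the paper uses $o_p(\log n/\sqrt{m})$. That uniform bound is also what your ``event of probability tending to one'' actually requires.
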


\section{Limiting distributions}\label{sec:lim}

This section derives two limiting distribution results for $\hat{\theta}$. The first assumes that the rank $r$ of $A$ is fixed; the second allows $r$ to grow with $n$.

\subsection{Fixed rank}
First, we derive the limiting distribution of $\hat{\theta}$ for fixed $r$. Using the eigendecomposition $\tilde{A} = Q \Lambda Q'$ introduced before Lemma \ref{lem1}, we represent the estimator as
\begin{equation*}
    \hat{\theta} = \sum_{\ell = 1} ^r \lambda_\ell \left(\hat{b}_\ell ^2 - \widehat{\var}[\hat{b}_\ell]\right),
\end{equation*}
where $\hat{b} := \sum_{i = 1} ^n v_i y_i$, $\widehat{\var}[\hat{b}] := \sum_{i = 1} ^n v_i v_i' \hat{\sigma}_i ^2$, and $v_i := Q' S_{xx}^{-1/2} \sum_{j = 1} ^n M_{ij} x_j$. 

The next theorem gives the fixed-rank limit for $\hat{\theta}$ and consistency of $\widehat{\var}[\hat{b}]$. 

\begin{thm}\label{chi_sq}
    If Assumptions \ref{asum1}, \ref{asum2}, and \ref{asum3} hold, $r$ is fixed, and $\max_i v_i' v_i = o(1)$, then
    \begin{enumerate}[(i)]
        \item $\var[\hat{b}]^{-1/2} (\hat{b} - b) \overset{d}{\rightarrow} \mathcal{N}(0, I_r)$, where $b := Q' S_{xx}^{1/2} \beta$,
        \item $\var[\hat{b}]^{-1} \widehat{\var}[\hat{b}] \overset{p}{\rightarrow} I_r$,
        \item $\hat{\theta} = \sum_{\ell = 1} ^r \lambda_\ell \left( \hat{b}_\ell ^2 - \var[\hat{b}_\ell] \right) + o_p (\var[\hat{\theta}]^{1/2})$.
    \end{enumerate}
\end{thm}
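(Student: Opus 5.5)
Part (i) rests on a bias–noise decomposition of $\hat b$, part (ii) on a leave-one-out bias–noise decomposition of $\hat\sigma_i^2$, and part (iii) combines the two. Throughout, $\var[\hat b]=\sum_i v_iv_i'\sigma_i^2$, whose eigenvalues are bounded above and below by Assumption \ref{asum2}(i), since $\sum_i v_iv_i' = Q'Q = I_r$ (using $\sum_{i} M_{ij}M_{il}=M_{jl}$).

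\textbf{Step (i).} Split $\hat b - b = \sum_i v_i f(z_i) + \sum_i v_i e_i$, mirroring (\ref{decomp}). The deterministic bias $\sum_i v_i f(z_i) = Q'S_{xx}^{-1/2}\sum_{ij}M_{ij}x_i f(z_j)$ does not change if $f$ is replaced by $f-p_k'\alpha^*$, because $MP_k=0$. By Cauchy--Schwarz,
\[
\norm{\sum_i v_i f(z_i)}^2 \le \Bigl\|\sum_i v_iv_i'\Bigr\| \sum_i (f(z_i)-p_k(z_i)'\alpha^*)^2 = \mathcal{O}(n k^{-2\alpha_f}),
\]
using Assumption \ref{asum1}, with the expectation read as the empirical mean over the fixed design. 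This is $o(1)$ by the growth condition $n=o(k^{\alpha_f})$ together with $\alpha_f>1$: indeed $nk^{-2\alpha_f}\le (n k^{-\alpha_f})k^{-\alpha_f}\to0$. The noise term $\sum_i v_ie_i$ is a sum of independent mean-zero vectors. Its Lyapunov ratio is bounded by $\max_i\norm{v_i}^2\sum_i\norm{v_i}^2\max_i\E e_i^4 = o(1)\cdot r$, so Cramér–Wold gives $\var[\hat b]^{-1/2}(\hat b - b)\to \mathcal N(0,I_r)$, and the bias is negligible relative to the unit-order standard deviation.

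\textbf{Step (ii).} This is the main obstacle. Write $\hat\sigma_i^2 = y_i(y_i - w_i'\hat\gamma_{-i})$. Expanding with $y_i=\mu_i+e_i$, where $\mu_i = x_i'\beta+f(z_i)$, gives
\[
\hat\sigma_i^2=\sigma_i^2+(e_i^2-\sigma_i^2)+\mu_i e_i + y_i\,\delta_{-i} - y_i\, w_i'(\hat\gamma_{-i}-\E\hat\gamma_{-i}),
\]
where $\delta_{-i}=\mu_i-w_i'\E\hat\gamma_{-i}$ is the leave-one-out approximation error. Each $\mu_i$ is bounded by Assumptions \ref{asum1} and \ref{asum2}(iii). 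Then:
\begin{itemize}
\item The terms $\sum_i v_iv_i'(e_i^2-\sigma_i^2)$ and $\sum_i v_iv_i'\mu_ie_i$ have variance $\lesssim\max_i\norm{v_i}^2\, r=o(1)$.
\item The quadratic cross term $\sum_i v_iv_i' e_i w_i'(\hat\gamma_{-i}-\E\hat\gamma_{-i})$ is a degenerate U-statistic in the errors, $\sum_{i\ne j} v_iv_i' C_{ij}e_ie_j$. I would bound its variance by $\max_i\norm{v_i}^2$ times a trace bounded via $P_{W,ii}<c$, exactly as in KSS Lemma B-type arguments. This is where the uniform leverage bound $P_{W,ii}<c$ enters.
\item The new piece is the bias $\sum_i v_iv_i' \mu_i\delta_{-i}$. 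Using the representation $\delta_{-i} = (M_W\mu)_i/M_{W,ii}$ and $M_W P_k=0$, we have $M_W\mu=M_W(f-P_k\alpha^*)$. Hence $\sum_i \delta_{-i}^2\le (1-c)^{-2}\sum_i (f_i-p_{k,i}'\alpha^*)^2=\mathcal{O}(nk^{-2\alpha_f})$. Combined with $\sum_i\norm{v_i}^4\le\max_i\norm{v_i}^2 r$, Cauchy--Schwarz makes this term $o(1)$.
\end{itemize}
Together these give $\widehat\var[\hat b]-\var[\hat b]=o_p(1)$, and (ii) follows from the bounded eigenvalues of $\var[\hat b]$.

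\textbf{Step (iii).} Write
\[
\hat\theta-\sum_\ell\lambda_\ell(\hat b_\ell^2-\var[\hat b_\ell])=-\sum_\ell\lambda_\ell(\widehat\var[\hat b_\ell]-\var[\hat b_\ell]).
\]
From Step (ii) this remainder is $\mathcal{O}_p\bigl(\max_\ell|\lambda_\ell|\cdot\rho_n\bigr)$, where $\rho_n\to0$ is the rate obtained there. Meanwhile (i) implies $\var[\hat\theta]\gtrsim \sum_\ell\lambda_\ell^2 \var[\hat b_\ell]^2$, since the variance of a centered sum of squared approximately Gaussian coordinates is bounded below by a constant times $\sum_\ell\lambda_\ell^2$ when $r$ is fixed. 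Hence the remainder is $o_p(\var[\hat\theta]^{1/2})$. The delicate point is that the bias terms from Step (ii) are deterministic, so they must be $o(1)$ uniformly and not merely in probability. The $n k^{-2\alpha_f}$ bound supplies exactly this, which is why the strengthened condition $\alpha_f>1$, together with $n=o(k^{\alpha_f})$, is needed.
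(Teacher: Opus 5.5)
Your proof is correct and has the same skeleton as the paper's. It splits $\hat b - b$ into a bias and a noise part, with Lyapunov and Cram\'er--Wold for the noise. It gets consistency of $\widehat{\var}[\hat b]$ from $\max_i v_i'v_i = o(1)$ and $\sum_i v_i v_i' = I_r$. It then compares the remainder with $\var[\hat\theta] \gtrsim \sum_\ell \lambda_\ell^2$.

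Where you differ is in how the series approximation error is handled. The paper:
\begin{itemize}
\item imports a pointwise bias rate for $\hat\sigma_i^2$ from Lemma~\ref{lem1};
\item places the deterministic $f(z_i)$ inside the Lyapunov sum, which by itself does not show that the shift $\sum_i v_i f(z_i)$ vanishes;
\item adds up individual variances of the $\hat\sigma_i^2$ in the variance-consistency step;
\item for (iii), bounds the variance of $\sum_i B_{ii}(\hat\sigma_i^2 - \sigma_i^2)$ through $\sum_i B_{ii}^2 \lesssim \max_i v_i'v_i \sum_\ell \lambda_\ell^2$, while showing only that its mean tends to zero.
\end{itemize}
You instead:
\begin{itemize}
\item use $M P_k = 0$, $M_W W = 0$, and $V'V = I_r$ (rows of $V$ being $v_i'$) to reduce both the shift in $\hat b$ and the leave-one-out bias $\delta_{-i} = (M_W(F - P_k\alpha^*))_i / M_{W,ii}$ to $\ell_2$ bounds of order $n k^{-2\alpha_f}$;
\item treat the dependence among the $\hat\sigma_i^2$ as a degenerate $U$-statistic controlled by $P_{W,ii} < c$;
\item get (iii) from (ii) by writing the remainder as $-\sum_\ell \lambda_\ell (\widehat{\var}[\hat b_\ell] - \var[\hat b_\ell])$.
\end{itemize}
Carrying $\max_\ell |\lambda_\ell|$ through makes the bias part $o(\var[\hat\theta]^{1/2})$ even when $\var[\hat\theta] \to 0$, which a ``mean tends to zero'' argument does not deliver. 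The paper's route is shorter because it leans on Lemma~\ref{lem1}. Yours is self-contained and makes the negligibility of the approximation error explicit at each step.

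Three small points:
\begin{itemize}
\item Your expansion of $\hat\sigma_i^2$ also produces the linear pieces $e_i \delta_{-i}$ and $\mu_i \sum_{j \neq i} P_{W,ij} e_j / M_{W,ii}$. You do not list them, but they fall to the same bounds.
\item The lower bound on $\var[\hat\theta]$ is better derived directly than inferred from (i). The off-diagonal part $\sum_{i \neq j} B_{ij} e_i e_j$ of $\hat b'\Lambda \hat b$ (with $B_{ij} = v_i'\Lambda v_j$) is uncorrelated with the linear and diagonal terms. Its variance is at least $2 \min_i \sigma_i^4 (\sum_\ell \lambda_\ell^2 - \sum_i B_{ii}^2)$, where $\sum_i B_{ii}^2 \le r \max_i v_i'v_i \sum_\ell \lambda_\ell^2$. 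Combined with your Step (ii) variance bound on the remainder, this gives the claim.
\item Your bounds need only $n k^{-2\alpha_f} \to 0$. So $\alpha_f > 1$ enters only through the compatibility of $k = o(n)$ with $n = o(k^{\alpha_f})$, not as a separate requirement of your argument.
\end{itemize}
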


The growing-rank case is more relevant for AKM-type applications, so we turn to that setting next.

\subsection{Growing rank}

Define $\tilde{w}_i := \check{A} S_{ww}^{-1} w_i$, where $\check{A} := \begin{pmatrix} A & 0 \\ 0 & 0 \end{pmatrix}$ and $\gamma := \begin{pmatrix} \beta & \alpha \end{pmatrix}'$. Then
\begin{equation*}
    \theta = \gamma' \check{A} \gamma = \gamma' S_{ww} S_{ww}^{-1} \check{A} \gamma = \sum_{i = 1} ^n \gamma' w_i \tilde{w}_i' \gamma
\end{equation*}
and the leave-one-out estimator can be written as
\begin{equation*}
    \hat{\theta} = \sum_{i = 1} ^n y_i \tilde{w}_i' \hat{\gamma}_{-i},
\end{equation*}
where the approximation error from replacing $\gamma' w_i$ by $y_i$ is controlled by Assumption \ref{asum1} and \ref{asum3}. A direct algebraic rearrangement yields
\begin{equation*}
    \begin{split}
        \hat{\theta} =  \sum_{i = 1} ^n \sum_{\ell \neq i} C_{i \ell} y_i y_\ell,
    \end{split}
\end{equation*}
where $C_{i\ell} := B_{W, i\ell} - 2^{-1} M_{W, i\ell} \left(B_{W, ii} M_{W, ii}^{-1} + B_{W, \ell \ell} M_{W, \ell \ell}^{-1}\right)$. Thus $\hat{\theta}$ is a second-order $U$-statistic with sample-dependent kernel $C_{i\ell}$. Theorem \ref{big_r_norm} combines this representation with Appendix Lemma \ref{solv}; relative to \textcite{kline}, the only additional term is the series approximation error, which is negligible under Assumption \ref{asum1} and \ref{asum3}.

Define also
\begin{equation*}
    \check{w}_i := \sum_{\ell = 1} ^n M_{W, i\ell} \frac{B_{W, \ell \ell}}{1 - P_{W,\ell \ell}} w_\ell.
\end{equation*}

\begin{thm}\label{big_r_norm}
    If Assumption \ref{asum1}, \ref{asum2}, and \ref{asum3} hold, and the following conditions are satisfied,
    \begin{equation*}
        \text{(i)} \, \var[\hat{\theta}]^{-1} \max_{i} \left((\tilde{w}_i' \gamma)^2 + (\check{w}_i' \gamma)^2\right) = o(1), \quad \text{(ii)} \, \frac{\lambda_1 ^2}{\displaystyle \sum_{\ell = 1} ^r \lambda_\ell ^2} = o(1),
    \end{equation*}
    then $\var[\hat{\theta}]^{-1/2} (\hat{\theta} - \theta) \overset{d}{\rightarrow} \mathcal{N}(0, 1)$. 
\end{thm}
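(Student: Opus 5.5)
The plan is to follow the route of KSS (their Theorem 2) and treat the series approximation error as an additional nonrandom perturbation. First I would write $y_i = w_i'\gamma + r_i + e_i$, where $r_i := f(z_i) - p_k(z_i)'\alpha$ is the approximation error at the best-approximating coefficient $\alpha$. Substituting this into the representation $\hat{\theta} = \sum_i \sum_{\ell \neq i} C_{i\ell} y_i y_\ell$ splits $\hat{\theta} - \theta$ into three pieces:
\begin{equation*}
    \hat{\theta} - \theta = \underbrace{\sum_{i} \sum_{\ell \neq i} C_{i\ell} e_i e_\ell}_{U_n} + \underbrace{2 \sum_{i} e_i \Big(\sum_{\ell \neq i} C_{i\ell} \mu_\ell\Big)}_{L_n} + R_n, \quad \mu_\ell := w_\ell'\gamma + r_\ell .
\end{equation*}
Here $U_n$ is a degenerate second-order $U$-statistic and $L_n$ is a linear form. $R_n$ is nonrandom; it collects $\sum_{i \neq \ell} C_{i\ell}\mu_i\mu_\ell - \theta$. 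The exact unbiasedness argument for the linear class gives $\sum_{i\neq\ell} C_{i\ell} (w_i'\gamma)(w_\ell'\gamma) = \theta$. Hence $R_n$ only involves cross terms in $r$ and quadratic terms in $r$, weighted by $C$.

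The first step is to show $\var[\hat{\theta}]^{-1/2} R_n = o(1)$, together with the analogous statement that replacing $\mu_\ell$ by $w_\ell'\gamma$ inside $L_n$ changes its variance by a negligible amount. For the bias I would bound $|R_n| \lesssim \norm{C}_{op}\,(\norm{r}\,\norm{W\gamma} + \norm{r}^2)$. Assumption \ref{asum1} gives $\norm{r}^2 = \mathcal{O}(n k^{-2\alpha_f})$, and $n = o(k^{\alpha_f})$ from Assumption \ref{asum3} makes this $o(k^{-\alpha_f})$. It remains to control $\norm{C}_{op}$: I would bound it by $\lambda_1$ times a constant, using $M_{W,ii} > 1-c$ from Assumption \ref{asum2}(ii). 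I would also use the lower bound on the variance: $\var[\hat{\theta}] \gtrsim \sum_\ell \lambda_\ell^2$, which holds because $\sigma_i^{-2}$ is bounded. The ratio can then be controlled using condition (ii) and the growth conditions. This step is where I expect the main obstacle. The cross term $\sum C_{i\ell} r_i\, w_\ell'\gamma$ is only of order $\norm{r}\,\norm{C W\gamma}$, and making it $o(\var[\hat{\theta}]^{1/2})$ requires exactly the strengthened decay $\alpha_f > 1$. The first thing I would try is to exploit $M_W W = 0$ and the structure of $B_W$, so as to rewrite $C W\gamma$ in terms of $\tilde{w}_i'\gamma$ and $\check{w}_i'\gamma$. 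Their maxima are controlled by condition (i), and Cauchy--Schwarz then yields the bound $\sqrt{n}\,\max_i(|\tilde{w}_i'\gamma| + |\check{w}_i'\gamma|)\,\norm{r}$.

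Next I would establish asymptotic normality of $U_n + L_n$ by invoking Appendix Lemma \ref{solv}, the central limit theorem for quadratic forms in independent errors used by KSS (in the style of Sølvsten/Chatterjee). The Lyapunov-type condition for the linear part needs $\max_i (\sum_{\ell\neq i} C_{i\ell}\, w_\ell'\gamma)^2 / \var[\hat{\theta}] \to 0$. The vector $\sum_{\ell} C_{i\ell}\, w_\ell'\gamma$ equals $\tilde{w}_i'\gamma - \check{w}_i'\gamma/2$ up to terms that vanish, so condition (i) delivers this directly. For the quadratic part, the lemma requires $\norm{C}_{op}^2 / \sum_{i,\ell} C_{i\ell}^2 \to 0$. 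This follows from condition (ii), since $\sum_{i,\ell} C_{i\ell}^2 \asymp \sum_\ell \lambda_\ell^2$ while $\norm{C}_{op} \lesssim \lambda_1$. Uniform fourth moments from Assumption \ref{asum2}(i) supply the remaining moment bounds.

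Finally I would combine the pieces with Slutsky. Since $R_n$ and the approximation remainder in $L_n$ are $o(\var[\hat{\theta}]^{1/2})$, they do not affect the limit. The variance of $U_n + L_n$ equals $\var[\hat{\theta}]$ up to a factor $1+o(1)$, again because the contributions of $r$ are negligible. This gives $\var[\hat{\theta}]^{-1/2}(\hat{\theta} - \theta) \overset{d}{\rightarrow} \mathcal{N}(0,1)$.
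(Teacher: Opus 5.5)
Your proposal is correct and follows the same route as the paper. Both use the $U$-statistic representation $\hat{\theta}=\sum_i\sum_{\ell\neq i}C_{i\ell}y_iy_\ell$, substitute $y_i = w_i'\gamma + \check{e}_i + e_i$, identify the linear coefficients as $2\tilde{w}_i'\gamma-\check{w}_i'\gamma$ (exactly, since $M_W W=0$), and apply Lemma \ref{solv} with condition (i) for the linear part and condition (ii) for the degenerate quadratic part. Your treatment of the approximation remainder is sharper than the paper's, which only shows it is $\mathcal{O}_p(nk^{-\alpha_f})=o_p(1)$; your bounds $\norm{C}_{\mathrm{op}}\lesssim\lambda_1$ and $\var[\hat{\theta}]\gtrsim\sum_{\ell}\lambda_\ell^2$ make it $o_p(\var[\hat{\theta}]^{1/2})$, which is what the normalized limit actually requires.
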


\section{Simulation study}\label{sec:sim}
We use a Monte Carlo study to evaluate the estimator under controlled conditions that isolate the problems described in the theory. 
In the baseline design, the covariates $z_i$ are drawn uniformly on a fixed cube, the linear regressors $x_i$ (so called because they enter the outcome linearly) are generated as nonlinear functions of $z_i$, and the quadratic form of interest is $\theta = \beta' A \beta$, where $A$ selects the first two coordinates of $\beta$, so it measures dispersion in two fixed effects.
Unless otherwise noted, the nonlinear designs use $n=500$, nuisance dimension $d=4$, and $p=90$ linear regressors.
The rows are constructed as comparative statics: within a panel, the simulation holds the underlying random design fixed whenever dimensions permit and changes only the feature named in the row---the nonlinearity of $f(z)$, the error distribution, the nuisance dimension, the number of linear regressors, leverage, or sample size.

More specifically, each scenario first draws a fixed design $(z_i, x_i)_{i=1}^n$ and coefficient vector $\beta$, then holds that design fixed across Monte Carlo replications while redrawing only the error term $e_i$. 
The first panel of Table \ref{tab:simulation_results} uses the same $z_i$, $x_i$, $\beta$, and error draws and changes only the functional form of $f$.
The second panel keeps the strongly nonlinear radial design as the reference case; the heteroskedastic and heavy-tailed rows change only the error process, while the remaining rows change one of the other named features using nested random draws where possible.
The nuisance covariates satisfy $z_i \sim \text{Unif}([-1,1]^d)$. 
Before standardization, the linear regressors are
\[
    x_{ij}^{*} = \exp\{0.30 \norm{z_i}_2 + 0.50 \eta_{ij} + 0.10 \nu_j\},
    \qquad \eta_{ij}, \nu_j \sim \mathcal{N}(0,1),
\]
where $\eta_{ij}$ is an idiosyncratic shock drawn separately for each observation and regressor, and $\nu_j$ is a single shock shared by every observation in regressor column $j$; each column of $x_i^*$ is centered and scaled.
The shared term $0.30\norm{z_i}_2$ correlates every regressor with the nuisance covariate, entangling $x_i$ and $z_i$ just as the bias term $\mathcal{B}$ in (\ref{decomp}) requires: without that correlation, a linear regression of $y$ on $x$ alone would already be unbiased for $\beta$, leaving nothing for the semiparametric correction to fix.
The nuisance signal is also centered and scaled before entering the outcome.
We use three functional forms:
\[
    f_{\mathrm{linear}}(z_i) = 0.8 z_{i1} - 0.5 z_{i2} + 0.35 z_{i3} - 0.2 z_{i4}, \quad
    f_{\mathrm{cubic}}(z_i) = 0.8 z_{i1}^3 - 0.5 z_{i2}^2 + 0.3 z_{i3}z_{i4},
\]
and the strongly nonlinear radial design
\[
    f_{\mathrm{radial}}(z_i) = \norm{z_i}_2^7.
\]
The radial seventh-power function is the main nonlinear stress test: it differs sharply from any linear control because its slope rises quickly as observations move away from the center of the cube. 
The error term is standard Gaussian, heteroskedastic Gaussian with scale $0.5+0.5|x_{i1}|$ normalized by its fixed-design root mean square, or a variance-one Student-$t$ distribution with five degrees of freedom.\footnote{These are population or fixed-design normalizations; we do not divide each Monte Carlo draw by its realized sample standard deviation.}
The many-regressor, higher-dimension, and larger-sample designs each raise one of $p$, $d$, or $n$ above its baseline value, and the high-leverage design inflates a small share of extreme observations to create thinly supported, high-leverage points.\footnote{%
In Table \ref{tab:simulation_results}, $p=90$ in all scenarios except the many-regressor design, where $p=300$.
The nuisance dimension is $d=4$ in all scenarios except the higher-dimension design, where $d=8$.
The sample size is $n=500$ in all scenarios except the larger-sample design, where $n=1000$.
In the high-leverage design, the regressors for the 5\% of observations with the largest $|z_{i1}|$---those farthest from the center of the cube along the first nuisance coordinate---are additionally scaled up by a factor of six in the first ten regressor columns (or all columns, if $p<10$), creating a small number of extreme, thinly supported points that drive up their own leverage $P_{W,ii}$.}

We compare the plug-in estimator with leave-out estimators that use complete total-degree polynomial bases of degree 1, 3, and 5, including all monomials whose total degree does not exceed the stated degree. 
Cubic polynomials are common in applied work \parencite{card18}; the degree-5 basis lets us check whether going beyond cubic matters. 
Table \ref{tab:simulation_results} reports the subset of designs that cover the main cases in the paper: approximation error and nonlinearity, robustness to heteroskedasticity and heavy tails, and finite-sample performance as nuisance dimension, the number of regressors, leverage, and sample size change one at a time. 
The first four columns report signed Monte Carlo bias for the plug-in (PI) and leave-out (LOO) estimators at each basis degree.
The fifth column reports RMSE for LOO at degree~5. Appendix Table \ref{tab:simulation_coverage} reports nominal 95\% coverage for the same estimator separately, with confidence intervals based on the estimated variance of the quadratic form computed from the leave-one-out variances $\hat{\sigma}_i^2$.

\begin{table}[t]
\centering
\caption{Simulation study: bias and finite-sample performance}
\vspace*{0.5mm}
\input{tables/simulation_results}
\note[Notes]{%
Columns PI(1), LOO(1), LOO(3), and LOO(5) report signed Monte Carlo bias, $\E[\hat{\theta} - \theta]$, for the plug-in (PI) estimator with a degree-1 basis and the leave-out (LOO) estimator with degree-1, degree-3, and degree-5 bases, respectively. Positive entries indicate overestimation of the quadratic form.
RMSE(5) reports the root mean squared error of the degree-5 LOO estimator. Bias and RMSE are reported in the units of the quadratic-form target. Appendix Table \ref{tab:simulation_coverage} reports the corresponding nominal 95\% coverage diagnostic. Intervals use the estimated variance of the quadratic form computed from the leave-one-out residual variances $\hat{\sigma}_i^2$.
The first block varies only the functional form of $f(z)$. The second block holds the strong radial design fixed and changes only the feature named in the row, relative to the strong-nonlinearity row.
Each estimable design uses 5,000 Monte Carlo replications. \emph{NA} means that the complete basis is not estimable: in the higher-dimension row, $d=8$ gives the degree-5 basis 1,287 columns, so $p+k>n$.}
\label{tab:simulation_results}
\end{table}

\paragraph{Approximation error and the smoothness condition}
Assumption \ref{asum1} predicts that series approximation error decays at rate $k^{-2\alpha_f}$, so higher-degree bases should reduce bias most when $f$ departs strongly from linearity. 
The simulations confirm this. 
In the strong radial design, the bias falls in magnitude from about $-0.04$ for PI(1) and LOO(1) to $-0.02$ for degree 3 and $-0.01$ for degree 5. 
Under high leverage, the same pattern holds: the bias falls from about $-0.02$ for PI(1) and $-0.03$ for LOO(1) to zero after adding higher-degree basis terms. 
By contrast, in the linear benchmark and the mildly nonlinear cubic design, the bias is already close to zero, with the largest entry about $0.01$. 
The first panel of Table \ref{tab:simulation_results} should therefore be read as evidence on where the smoothness condition matters in finite samples: adding basis terms removes systematic misspecification bias when nonlinearity is large, and costs little when it is not.

\paragraph{Heteroskedasticity and error tails}
The leave-out correction in (\ref{quad_est_main}) uses observation-specific residual variances $\hat{\sigma}_i^2$, so it depends on the moment conditions in Assumption \ref{asum2}\textit{(i)}. 
The heteroskedastic and heavy-tailed designs show why the table reports both the bias and RMSE. 
Because these rows now keep the same radial design as the homoskedastic strong-nonlinearity row, the low-degree bias is similar: about $-0.03$ to $-0.04$ before the higher-degree basis removes most of it. 
Degree-5 RMSE is about $0.05$ under both heteroskedasticity and Student-$t_5$ errors. 
These designs therefore create dispersion that is not captured by the bias alone; adding basis terms removes the approximation component, but residual-variance estimation remains a finite-sample issue.

\paragraph{Leverage and many regressors}
Leverage measures how much an observation's own outcome drives its own fitted value; it is high when a worker or firm effect is pinned down by very little data, as with rarely observed workers or small firms.
Assumption \ref{asum2}\textit{(ii)} requires $\max_i P_{W,ii} < c < 1$, bounding individual leverage away from one as the combined regressor dimension $p + k$ grows.
The many-regressor and high-leverage designs make this condition hard to satisfy. 
Point estimation remains reliable: in the many-regressor design, the degree-5 leave-out estimator has bias around $-0.01$ and RMSE of about $0.06$. 
Appendix Table \ref{tab:simulation_coverage} reports the corresponding coverage diagnostic: nominal 95\% coverage is about $0.86$ in the high-leverage design, $0.97$ in the many-regressor design, $0.90$ under heteroskedasticity, and $0.91$ under Student-$t_5$ errors.\footnote{%
Every design here under-covers except many regressors, which over-covers: the estimated variance used to build the interval is conservative when $p$ is large relative to $n$, so the interval is too wide rather than too narrow. Only the under-coverage cases are the concerning direction for applied inference.}
In the higher-dimension design, degree 3 remains feasible, with bias around $-0.02$, but the complete degree-5 basis is infeasible and is reported as ``NA.''
The pattern points to residual-variance estimation and leverage, rather than point-estimate bias alone, as the main finite-sample problem.
Larger samples improve point accuracy: bias is near zero, RMSE falls to about $0.03$, and intervals again cover close to $95\%$ of draws.

\paragraph{Summary}
For applied work, adding basis terms is most valuable when nonlinear misspecification is large, especially under high leverage. 
When those features are absent, additional flexibility costs little but gains little. 
The point estimator is reliable across most designs.

\paragraph{AKM-calibrated Monte Carlo}
We adapt the calibration logic in Section 9.5 of \textcite{kline} to our application.\footnote{We do not reproduce their first-differenced mover-pair design.}
We hold fixed a small connected mover graph (a set of workers and firms linked by job moves, within which worker and firm fixed effects are jointly identified) with 13,939 person-year observations, 2,393 workers, 639 firms, and 6,366 worker--firm matches, using only gender and age as controls.
The OLS worker and firm effects on this graph are rescaled so that their variances match the leave-out estimates.
Each match shock is drawn from a standardized Student-$t$ distribution with five degrees of freedom, with a heteroskedastic variance. 
Following \textcite{kline}, we let that variance depend on the firm-variance diagonal $B_{gg}$, the leverage $P_{gg}$, and the log employment of the worker's current and other firms through an exponential function fit to the leave-out variance estimates.\footnote{%
The exponential model is fit by nonlinear least squares in levels to all finite leave-out variance estimates, using the positive estimates only to initialize. 
Because the KSS regression premultiplies each match mean by the square root of its person-year count, the fitted variance is in weighted-match units; we divide it by the match count before assigning a common shock to the match's person-year observations.}
The design crosses linear and degree-5 nuisance DGPs with linear and degree-5 fitted nuisance specifications, holding firm assignments, covariates, and year assignments fixed across 1,000 draws.

\begin{table}[h!]
\centering
\caption{Portuguese-graph calibrated Monte Carlo for firm-effect variance}
\label{tab:simulation_akm_calibrated}
\vspace*{0.5mm}
\input{tables/data/gakm_akm_calibrated/simulation_results_akm_calibrated}
\note[Notes]{%
The true value is the calibrated firm-effect variance, and all bias entries are percentages relative to that value. KSS bias (MC SD) reports the relative bias of the KSS point estimate, followed in parentheses by the Monte Carlo standard deviation of the relative estimation error. The parenthesized Monte Carlo standard deviation measures dispersion across replications, not a statistical standard error. HO bias and PI bias report the relative bias of the homoskedasticity-only correction and the plug-in estimator, respectively. Linear and Flexible denote the linear and degree-5 nuisance fits. Each design uses 1,000 draws.}
\end{table}

Table \ref{tab:simulation_akm_calibrated} evaluates the point correction. KSS relative bias is $0.06$--$0.18\%$ across the four designs, compared with $1.24$--$1.60\%$ for the homoskedastic correction and $1.90$--$2.37\%$ for the plug-in estimator. 
The KSS Monte Carlo standard deviation is $0.48$--$0.62\%$ of the true value. 
The KSS bias is largest in the Nonlinear DGP, Linear-fit design---the case where the true relationship is nonlinear but the fitted specification is linear---but it stays tiny throughout, because this parsimonious calibration carries only an age profile. 
The exercise therefore checks only that the leave-out point correction stays nearly unbiased on a realistic mobility graph. 
It does not speak to the functional-form sensitivity found in the application, which operates through the firm inputs that this design excludes.

This exercise checks point-estimate bias, not inference, and the feasible standard-error calculation is not validated by it: the calculation excludes about half of the match-level leave-out variance estimates, and the surviving estimates inflate the standard error six- to eightfold relative to the Monte Carlo standard deviation---so far that the nominal 95\% interval covers in all 1,000 draws of every design, the signature of an interval too wide to be informative rather than a validated one. 
We treat those interval results as diagnostics, not as evidence for cluster-level inference.

\section{Empirical application}\label{sec:empirical}

We use Portuguese linked employer--employee data to study how alternative control specifications change the worker--firm wage decomposition.
Adding worker and firm-input controls lowers all three variance components. Allowing those controls to enter nonlinearly produces smaller movements in firm variance and sorting, while the worker component moves more across specifications.

\subsection{Data and sample selection}
Our empirical analysis combines two Portuguese sources of administrative data. 
The matched employer–employee dataset \emph{Quadros de Pessoal} (QP) reports worker characteristics (age, gender, occupation, qualification, education, contract type, hours, compensation) and firm characteristics (location, industry, employment) for the universe of private-sector firms. 
The firm accounts dataset \emph{Central de Balanços} (CB) provides annual balance-sheet and income-statement information, including fixed assets and intermediate inputs, for non-financial corporations. 
We link QP and CB at the firm level to obtain a panel with worker characteristics and firm inputs, following the same linkage of Portuguese wage records to firm accounts used by \textcite{cardcardosokline16}.

We restrict our sample to the period 2008--2018 and to full-time workers aged 20--65. 
The dependent variable is the log of individual hourly wages.
Following the standard AKM literature, we decompose the cross-sectional variance of log wages into the variance of worker fixed effects $(\sigma_{\alpha}^{2})$, the variance of firm fixed effects $(\sigma_{\psi}^{2})$, and their covariance $\cov(\alpha,\psi)$, which captures sorting.
Detailed sample descriptives, estimation-variable coverage, and sample-flow statistics are reported in Appendix Tables \ref{tab:descriptives}, \ref{tab:descriptive_coverage}, and \ref{tab:desc_cc}.

\subsection{Estimation}

We study two versions of this empirical exercise. 
The first is a parsimonious specification that uses only worker-side controls: gender and age. 
The second specification adds worker education and qualification together with log firm employment, $\log(1+\text{fixed assets})$, and $\log(1+\text{intermediate inputs})$. 
About $7.6\%$ of firm-years report exactly zero fixed assets, so the two monetary inputs enter as $\log(1+x)$ rather than $\log(x)$ to retain these economically meaningful zeros.\footnote{The transformation $\log(1+x)$ is not invariant to the units in which $x$ is measured \parencite{chenroth}: rescaling the monetary inputs changes where the accounting zeros sit relative to the positive values. This bears on the linear specifications more than on the flexible ones, which approximate an arbitrary smooth function of each input and are therefore nearly unaffected by the choice of units. Fixed assets and intermediate inputs are held at their recorded scale throughout, and both enter only as nuisance controls whose coefficients we do not interpret.} The transformations also keep the raw monetary scales from dominating the polynomial basis, and negative accounting entries are treated as missing.
In the empirical tables, we label these two exercises as Panel A (parsimonious controls) and Panel B (worker and firm-input controls). 
Because Panel B requires non-missing firm controls, the main empirical tables estimate both panels on the Panel B non-missing \emph{leave-match-out} set: the largest connected worker--firm set (linked by job moves, so that worker and firm fixed effects are jointly identified) that supports the match-level correction.\footnote{The source panel spans 2008--2018, but fixed assets are unavailable in 2008--2009. After imposing the Panel B non-missing firm-control requirement, the common Panel A--Panel B estimation sample is therefore effectively drawn from 2010--2018; Appendix Table \ref{tab:firm_controls_variation} reports the year-by-year joint availability of the firm controls.}
This keeps the main cross-panel comparison about controls rather than sample composition; Appendix Table \ref{tab:desc_cc} reports the larger parsimonious candidate sample as well as the common main estimation sample.

Within each of these two control sets, we estimate the same nested five-step specification ladder, progressively relaxing the restrictions on how covariates enter the wage equation: 
(i) a pure AKM baseline with no observed controls beyond the fixed effects; 
(ii) linear additive controls, meaning common quadratic and cubic age terms centered at 50, with the common linear age direction excluded because it is not identified with worker and year fixed effects, together with education, qualification, and firm inputs entering linearly and additively;
(iii) heterogeneous linear controls, which add demographic-group deviations in the age profile and group-specific slopes on the firm inputs;
(iv) nonlinear homogeneous controls, which replace those terms with a common degree-5 polynomial basis; and
(v) the full model, which interacts that basis with demographic groups.
The ladder separates the roles of adding observables, allowing slope heterogeneity, and allowing nonlinearity.
Because the linear additive rung already includes quadratic and cubic age terms, the move to the degree-5 basis adds higher-order age terms and richer functions of the firm inputs, including their multivariate interactions.
The polynomial basis is interacted with a fully saturated set of worker categorical variables, so the full model allows group-specific nonlinear effects of each continuous covariate: in Panel A the interaction groups are defined by gender, while in Panel B they are defined by gender $\times$ education $\times$ qualification.

We use the polynomial basis for the main ladder and re-estimate its two nonlinear rungs with two alternatives. Probabilists' Hermite polynomials provide a different coordinate system for degree-5 polynomial terms.\footnote{A low-order term---such as the linear age profile, which is excluded because it is not separately identified from the fixed effects---can be absorbed differently by different bases: the fixed-effect normalization pins down which specific combination of basis terms counts as part of the fixed effects rather than the control function, so changing the basis (monomial versus Hermite) changes how that excluded, collinear variation is split between the two. The resulting decomposition can therefore differ slightly from the monomial specification, even though both bases approximate the same underlying function.} Additive cubic B-splines provide a different functional form with quantile-spaced knots. We choose the spline counts so that each alternative has the same realized control dimension as the corresponding polynomial model: $k=4$ and $9$ in Panel A, and $k=129$ and $3{,}004$ in Panel B for the common and group-specific nonlinear specifications. Appendix Table \ref{tab:empirical_basis_robustness} compares the resulting full-regressor corrected decompositions.

\paragraph{Mapping to theory}
To map the empirical specification into the model in Section \ref{sec:setup}, let $z_i^{\text{cont}}$ denote the continuous covariates that enter flexibly and let $s_i \in \{1, \ldots, G\}$ denote the worker categorical group that defines the interaction cell.
In Panel A, $z_i^{\text{cont}}$ contains age and $s_i$ is gender. 
In Panel B, $z_i^{\text{cont}}$ contains age, log employment, $\log(1+\text{fixed assets})$, and $\log(1+\text{intermediate inputs})$, while $s_i$ indexes the fixed collection of gender $\times$ education $\times$ qualification cells. 
Writing $x_i$ for the regressors that remain outside the flexible nuisance term---the worker and firm fixed effects, year effects, and any controls that enter linearly at a given specification---the full specification can be written as
\begin{equation*}
    y_i = x_i'\beta + f_{s_i}(z_i^{\text{cont}}) + e_i
    = x_i'\beta + \sum_{s = 1}^{G} \mathds{1}\{s_i = s\}\, f_s(z_i^{\text{cont}}) + e_i.
\end{equation*}
This is a special case of \eqref{main}: the unknown function is simply evaluated on the enlarged argument $(z_i^{\text{cont}\prime}, s_i)'$, so the nuisance is allowed to be a different smooth function in each worker group.

Our implementation centers and scales the continuous inputs, approximates each group-specific function $f_s$ with the same polynomial family, and lets the coefficients vary across groups.
Concretely, if $\{p^j(z_i^{\text{cont}})\}_{j=1}^k$ is the polynomial basis, then the interacted series uses the terms $\{\mathds{1}\{s_i = s\} p^j(z_i^{\text{cont}})\}_{s,j}$. Equivalently, the model fits a separate smooth profile of age in Panel A, and separate smooth profiles of age and the three transformed firm inputs in Panel B, within each worker category.

For the theory, these interactions do not increase the continuous dimension of the approximation problem. 
The indicators $\mathds{1}\{s_i = s\}$ split the sample into a fixed number of groups, but smoothness is still imposed only with respect to the continuous arguments in $z_i^{\text{cont}}$. 
Therefore Assumption \ref{asum1} applies within each group at rate $\alpha_f = s_f / d_{\text{cont}}$, where $d_{\text{cont}}$ is the number of continuous covariates entering the nuisance and $s_f$ is the number of their continuous derivatives. 
This is precisely the fixed-dimensional interacted-series case covered by Corollary \ref{cor:fixedd}: $d_{\text{cont}} = 1$ in Panel A and $d_{\text{cont}} = 4$ in Panel B.

The empirical point correction uses the same augmented regressors as the theory: worker effects, firm effects, year effects, and every control-function term enter one joint sparse regression, and the correction is computed from that full design rather than from a residualized fixed-effect regression. The usual two-step alternative --- preadjusting wages for covariates and then applying the leave-out correction to the residualized outcome --- introduces a higher-order bias because the first-step estimation error enters both the outcome and the residual used in the correction \parencite{kline2024}; the joint design avoids this by estimating the control function alongside the fixed effects rather than partialling it out beforehand. The implementation differs in its deletion and dependence structure. The theory deletes independent observations one at a time. The application instead uses 20 outcome-independent cross-fit folds, treating an entire worker--firm match as one cluster for movers (workers observed at more than one firm) and individual person-years as singleton clusters for stayers (workers observed at only one firm); deleting a stayer's only match would remove that worker's column from the training design. This full-regressor cluster-fold correction aligns the empirical point estimand with the augmented-regressor construction in Section \ref{sec:finite}. Its cluster structure and fold-shared residuals remain outside the observation-level inference theory, so the results below report point estimates only, without theorem-based standard errors.

The interaction groups are defined from worker-side categorical variables, not firm identifiers. 
This matters for identification. 
Interacting the series with firm identifiers would produce columns that are nearly collinear with the firm fixed effects already contained in $X$, undermining the full-rank condition on $S_{xx} = X'MX$ required by Assumption \ref{asum2}.\footnote{%
Age is normalized at 50 throughout, and the common linear age term is excluded because it is not identified in the presence of worker and year fixed effects.
All variance components in Tables \ref{tab:empirical_results} and \ref{tab:empirical_results_delta} are reported as shares of total log-wage variance on the same leave-match-out estimation set.}

The firm-side controls that distinguish Panel B from Panel A are employment, fixed assets, and intermediate inputs. Appendix Table \ref{tab:firm_controls_variation} reports their within-firm variation and missingness in the cleaned firm-year panel.
On their estimation scales, their within-firm variance shares are about 7.9\% for employment, 17.8\% for fixed assets, and 9.8\% for intermediate inputs, so most of their variation is cross-sectional. On the raw headcount and euro scales the corresponding within-firm shares are only 5.8\%, 1.6\%, and 5.0\%: differences in firm size across firms dominate the year-to-year movements within a firm. Taking logs removes that size gap and raises the within-firm share of fixed assets more than tenfold, from 1.6\% to 17.8\%.
Specification diagnostics for each specification are reported in Appendix Table \ref{tab:empirical_diagnostics}.

\subsection{Results}
Table \ref{tab:empirical_results} reports the full-regressor bias-corrected variance components for each specification on the common Panel B leave-match-out sample. We report point estimates and do not attach standard errors because the cluster-fold correction lies outside the observation-level inference theory developed above. Cross-specification differences are descriptive.
Table \ref{tab:empirical_results_residualized} reports $\var(\tilde y)/\var(y)$, the ratio of residualized-wage variance to total wage variance after partialling out observables and year effects; values above one signal that near-collinear controls are amplifying rather than absorbing variance.
Appendix Table \ref{tab:empirical_results_delta} decomposes the within-panel movements into stepwise changes along two paths---one adding heterogeneity first, the other nonlinearity first---both arriving at the full model.
Appendix Table \ref{tab:empirical_basis_robustness} compares the nonlinear estimates across polynomial, Hermite, and cubic B-spline bases at the same realized control dimension.

\begin{table}[t]
\centering
\caption{Wage variance decomposition}
\label{tab:empirical_results}
\vspace*{1mm}
\input{tables/empirical_results}
\note[Notes]{%
Entries report full-regressor cluster-fold bias-corrected variance components as shares of total log-wage variance on the common Panel B leave-match-out estimation sample: worker-effect variance ($\sigma_\alpha^2/\var(y)$), firm-effect variance ($\sigma_\psi^2/\var(y)$), and sorting covariance ($\cov(\alpha,\psi)/\var(y)$). The table reports the covariance once; its contribution to the variance decomposition is twice this term. The correction uses 20 outcome-independent folds, treating mover matches as clusters and stayer person-years as singleton clusters.
Panel A's control set is gender and age only. Panel B adds education, qualification, log employment, $\log(1+\text{fixed assets})$, and $\log(1+\text{intermediate inputs})$. Panel A is rerun on that same common Panel B leave-match-out sample, so the panels differ in controls rather than sample composition. Age is normalized at 50; the other continuous inputs are centered at their sample means, and all continuous inputs are scaled before the polynomial basis is constructed.
The specification sequence is defined within each panel: \emph{Linear additive controls} uses common quadratic and cubic age terms centered at 50, excludes the common linear age direction because it is not identified with worker and year fixed effects, and uses linear firm inputs in Panel B; \emph{Heterogeneous linear controls} allows group-specific age deviations and, in Panel B, firm-input slopes; \emph{Nonlinear homogeneous controls} uses a common degree-5 polynomial basis; and \emph{Full model} interacts that basis with worker groups. The table reports point estimates only; no statistical standard errors or formal tests are reported. Bias-corrected components need not sum to one.}
\end{table}

\begin{table}[ht]
\centering
\caption{Variance remaining after partialling out observables}
\label{tab:empirical_results_residualized}
\vspace*{1mm}
\input{tables/empirical_results_residualized_variance}
\note[Notes]{%
Entries report $\var(\tilde y)/\var(y)$ on the common Panel B leave-match-out estimation sample used in Table \ref{tab:empirical_results}, where $y$ is log hourly wage. Here $\tilde y$ is the AKM input outcome after partialling out observables and year effects, before the remaining variation is decomposed into worker, firm, and sorting components.
This is a descriptive accounting statistic. It is not a bounded share of variance explained by the controls or the full residual from the joint worker--firm--controls--year regression. Values above one are possible when partialling out observables and year effects amplifies variance because of collinearity. No statistical standard errors are reported.}
\end{table}

\paragraph{Panel A: parsimonious controls}
Panel A is stable across the specification ladder.
Worker variance ranges from $0.533$ to $0.562$, firm variance from $0.136$ to $0.144$, and sorting from $0.074$ to $0.080$.
The residualized-wage variance ratio ranges from $0.956$ to $0.999$.
Thus neither the standard age profile nor its heterogeneous and nonlinear extensions materially changes the decomposition on the common Panel B sample.

\paragraph{Panel B: worker and firm-input controls}
Adding the common quadratic and cubic age terms, education, qualification, and the transformed firm inputs changes the decomposition already in the linear additive specification.
Relative to the AKM baseline, the additive specification lowers worker variance from $0.551$ to $0.474$, firm variance from $0.144$ to $0.121$, and sorting from $0.080$ to $0.047$.
Allowing heterogeneous linear slopes lowers worker variance further to $0.451$, while firm variance and sorting move only slightly, to $0.120$ and $0.045$.

A common degree-5 polynomial raises worker variance relative to the additive model, from $0.474$ to $0.491$, while slightly lowering firm variance to $0.120$ and sorting to $0.041$.
Interacting the polynomial basis with worker groups yields worker variance of $0.490$, firm variance of $0.117$, and sorting of $0.042$.
The residualized-wage variance ratios are $0.827$, $0.798$, $0.832$, and $0.827$ for the additive, heterogeneous-linear, common-polynomial, and full specifications, respectively; the heterogeneous-linear specification has the smallest ratio.

\paragraph{What changes across specifications}
Employment, fixed assets, and intermediate inputs are economically meaningful firm-side controls, but most of their transformed variation is cross-sectional.
The largest movement in Panel B comes from adding observed controls. Relative to the baseline, the linear additive specification lowers worker variance by $0.077$, firm variance by $0.022$, and sorting by $0.034$.
Relative to that additive specification, the common polynomial raises worker variance by $0.018$ while lowering firm variance by $0.002$ and sorting by $0.006$.
Adding group interactions to the polynomial changes the three components by only $-0.001$, $-0.003$, and $+0.001$.
The ladder therefore distinguishes a substantial observables adjustment from the smaller reallocation generated by functional-form flexibility.

\paragraph{Robustness to the choice of basis}
The Panel B polynomial and Hermite estimates are nearly identical. Their common specifications give worker variance of $0.491$ and $0.490$, respectively, firm variance of $0.120$ under both, and sorting of $0.041$ under both. Their group-specific specifications also agree closely. The cubic B-spline gives a common decomposition of $0.492$, $0.116$, and $0.040$, and a group-specific decomposition of $0.474$, $0.114$, and $0.041$. Across the three group-specific bases, firm variance therefore lies between $0.114$ and $0.117$, and sorting between $0.041$ and $0.042$; worker variance ranges from $0.474$ to $0.491$.

Panel A shows the same distinction. Firm variance is $0.136$ under every common basis and $0.137$ under every group-specific basis, while sorting stays between $0.075$ and $0.078$. Worker variance moves more, ranging from $0.542$ to $0.551$ under the common bases and from $0.549$ to $0.562$ under the group-specific bases.

\subsection{Interpreting the role of flexible controls}
The within-panel ladder shows that adding observed characteristics materially changes the Panel B decomposition, while alternative functional forms produce smaller changes in its firm and sorting components.
The bulk of the movement comes from including the firm inputs at all, not from how flexibly they enter: once they are in the model linearly, a degree-5 basis reallocates little.
The basis comparison reinforces this conclusion for firm variance and sorting, but not for worker variance: the group-specific B-spline returns the worker component to the linear-additive level. For these covariates in Portuguese data, the standard linear-in-logs adjustment is therefore close to adequate for the firm and sorting components; flexible controls mainly reallocate the worker component.

The small changes in firm variance and sorting do not arise because the rich control design is mechanically unable to detect nonlinear wage schedules.
Relative to heterogeneous linear controls, the fully interacted specification lowers mean squared error by $1.25\%$ in sample and $1.08\%$ out of sample, on a holdout fold drawn so that the training data preserve both regressor support and the connectedness of the worker--firm network.
In a separate signal-recovery exercise, we inject a nonlinear component with variance equal to $5\%$ of the restricted residual variance; the recovered incremental fitted component has a slope of $0.998$ against the injected signal.
The underlying fixed effects also move even when their aggregate second moments change little: worker effects covering $54.5\%$ of observations and firm effects covering $24.0\%$ move to a different effect decile.
Appendix Table \ref{tab:empirical_covariate_validation} reports these diagnostics.
They indicate that incremental nonlinear wage variation is modest in the data and that coefficient reallocation need not translate one-for-one into firm variance or sorting.

The full specification allows the return to age and the transformed firm inputs to differ across demographic groups (gender in Panel A; gender $\times$ education $\times$ qualification in Panel B).

Several caveats apply.
First, this conclusion speaks only to heterogeneity along observed $z_{it}$. Match-specific premia or complementarities outside $z_{it}$ can still matter (e.g., a manager--worker fit that raises pay on a particular team).
Second, because the interaction groups are defined by worker categories rather than firm identifiers, the model does not estimate firm-specific pay schedules \emph{per se}. Firms differ in their effective covariate adjustment only to the extent that they employ different mixes of worker types and have different levels of the continuous firm controls. This is a different margin from the one studied by \textcite{cardcardosokline16}, who let the firm pay premium itself differ by worker group and find in Portuguese data that women receive about ninety percent of the premiums men receive; we let the \emph{control function} vary by group while holding the firm effect common. Allowing firm-level interaction groups, or group-specific firm effects, would require a separate design.
Third, our decomposition is static: time-varying firm policies or match dynamics can exist without affecting the estimated variance components.

\section{Conclusions}\label{sec:conclusions}

AKM decompositions attach economic meaning to worker effects, firm effects, and sorting only after wages have been adjusted for observed characteristics.
Generalized AKM estimates the original variance components while replacing the standard linear adjustment with an unknown smooth function that can interact with worker groups.
The method therefore changes the treatment of observables without changing the decomposition that researchers want to interpret.

We establish when a semiparametric leave-out correction remains valid with heteroskedastic errors, many fixed effects, and a growing series basis.
Because the targets are quadratic, approximation error enters both the plug-in decomposition and its bias correction.
Controlling both terms requires more smoothness than estimation of a linear functional, but the condition is mild for the low-dimensional covariate functions common in wage applications.
The fixed-rank and growing-rank limits cover the main variance-component cases, and the random-projection approximation makes the estimator feasible on large matched datasets.
The simulations show that this flexibility removes bias under strong nonlinearity and costs little when the linear specification is adequate.

The Portuguese application separates two specification choices that are often combined.
Adding education, qualification, and transformed firm inputs materially lowers all three variance components.
Once those controls are present, nonlinearities and worker-group interactions produce smaller additional changes in firm variance and sorting: across the group-specific bases, firm variance remains between $0.114$ and $0.117$ and sorting between $0.041$ and $0.042$.
Worker variance is more sensitive to the basis, ranging from $0.474$ to $0.491$.
The parsimonious age-and-gender decomposition is stable throughout the specification ladder.

The scope of these conclusions is precise.
The theory covers a fixed number of continuous covariates and observation-level leave-out inference.
The application instead uses a cluster-fold point correction, so its cross-specification differences are descriptive and cluster-level inference remains open.
The interactions vary across worker categories rather than firms, and the model does not estimate firm-specific pay schedules.
Extensions to a growing covariate dimension or machine-learning nuisance functions require new arguments.

The empirical lesson is not that nonlinear controls overturn AKM.
In this application, the main specification choice for measured firm variance and sorting is which observed characteristics are removed, not whether included characteristics enter linearly.
Generalized AKM lets the data determine whether linearity is harmless or reallocates wage dispersion, while preserving the worker, firm, and sorting components at the center of the analysis.

\newpage
\printbibliography

\newpage
\begin{appendix}
\makeatletter
\@addtoreset{equation}{section}
\@addtoreset{table}{section}
\makeatother
\renewcommand{\theequation}{\thesection\arabic{equation}}
\renewcommand{\thetable}{\thesection\arabic{table}}
\renewcommand{\theHequation}{appendix.\thesection.\arabic{equation}}
\renewcommand{\theHtable}{appendix.\thesection.\arabic{table}}
\section{Proofs}
This appendix collects the proofs. We first record the leave-one-out residual-variance identity used throughout, then prove the consistency and limiting-distribution results in the order they appear in the paper: Lemma \ref{lem1}, Lemma \ref{jla_lem}, Theorem \ref{chi_sq}, and Theorem \ref{big_r_norm}.

\paragraph*{Leave-one-out residual variance}

The representation in (\ref{sigma_elt}) holds by applying the Sherman--Morrison formula to the leave-one-out estimator of $\gamma$. Denoting $S_{ww} := \sum_{i = 1} ^n w_i w_i'$, it holds that
\begin{equation*}
    \left(S_{ww} - w_i w_i'\right)^{-1} = S_{ww}^{-1} + \frac{S_{ww}^{-1} w_i w_i' S_{ww}^{-1}}{1 - w_i' S_{ww}^{-1} w_i}.
\end{equation*}
Plugging the above into the definition of $\hat{\gamma}_{-i}$, we obtain
\begin{equation*}
\begin{split}
    \hat{\gamma}_{-i} &= \left(S_{ww}^{-1} + \frac{S_{ww}^{-1} w_i w_i' S_{ww}^{-1}}{1 - w_i' S_{ww}^{-1} w_i}\right) \left( \sum_{i = 1} ^n w_i y_i - w_i y_i \right) \\
    &= S_{ww}^{-1} \sum_{i = 1} ^n w_i y_i - S_{ww} w_i y_i + \frac{S_{ww}^{-1} w_i w_i' S_{ww}^{-1} \sum_{i = 1} ^n w_i y_i}{1 - w_i' S_{ww}^{-1} w_i} - \frac{S_{ww}^{-1} w_i w_i' S_{ww}^{-1} w_i y_i}{1 - w_i' S_{ww}^{-1} w_i} \\
    &= \hat{\gamma} + S_{ww}^{-1} w_i \left( \frac{w_i' \hat{\gamma}}{1 - w_i' S_{ww}^{-1} w_i} - y_i - \frac{w_i' S_{ww}^{-1} w_i y_i}{1 - w_i' S_{ww}^{-1} w_i} \right) \\
    &= \hat{\gamma} + S_{ww}^{-1} w_i \left(\frac{w_i' \hat{\gamma} - y_i}{1 - w_i' S_{ww}^{-1} w_i}\right) \\
    &= \hat{\gamma} - S_{ww}^{-1} w_i \frac{(y_i - w_i' \hat{\gamma})}{1 - w_i' S_{ww}^{-1} w_i}.
\end{split}
\end{equation*}
Substituting for $\hat{\gamma}_{-i}$ in (\ref{sigma_hat}), we have
\begin{equation*}
    \begin{split}
        \hat{\sigma}_i ^2 &= y_i \left(y_i - w_i'\hat{\gamma}_{-i}\right) \\
        &= y_i \left(y_i - w_i' \left( \hat{\gamma} - S_{ww}^{-1} w_i \frac{(y_i - w_i' \hat{\gamma})}{1 - w_i' S_{ww}^{-1} w_i} \right)\right) \\
        &= y_i \left(y_i - w_i' \hat{\gamma} + w_i' S_{ww}^{-1} w_i \frac{(y_i - w_i' \hat{\gamma})}{1 - w_i' S_{ww}^{-1} w_i} \right) \\
        &= y_i \left( (y_i - w_i' \hat{\gamma}) \left( \frac{1}{1 - w_i' S_{ww}^{-1} w_i} \right) \right) \\
        &= \frac{y_i \hat{e}_i}{M_{W, ii}}.
    \end{split}
\end{equation*}

\paragraph*{Proof of Lemma \ref{lem1}} 

The variance of $\hat{\beta}$ is
\begin{equation*}
    \begin{split}
        \var[\hat{\beta}] &= \var\left[ S_{xx}^{-1} \sum_{i = 1} ^n \sum_{j = 1} ^n M_{ij} x_i f(z_j) + S_{xx}^{-1} \sum_{i = 1} ^n \sum_{j = 1} ^n M_{ij} x_i e_j \right] \\
        &= S_{xx}^{-1} \var \left[\sum_{i = 1} ^n \sum_{j = 1} ^n M_{ij} x_i f(z_j) + \sum_{i = 1} ^n \sum_{j = 1} ^n M_{ij} x_i e_j \right] S_{xx} ^{-1} \\
        &= S_{xx} ^{-1} \left( \sum_{i = 1} ^n \sum_{j = 1} ^n M_{ij}^2 x_i x_i' \sigma ^2_j \right) S_{xx} ^{-1}
    \end{split}
\end{equation*}
under the assumption of fixed $x_i$ and $z_i$.

First, rewrite the difference between the estimator and the estimand as 
\begin{equation*}
    \begin{split}
        \hat{\theta} - \theta &= \hat{\beta}' A \hat{\beta} - \beta' A \beta - \sum_{i = 1} ^n B_{ii} \hat{\sigma}_i ^2 \\
        &= \sum_{i = 1} ^n \sum_{j = 1} ^n M_{ij} y_j x_i'  S_{xx}^{-1} A S_{xx}^{-1} \sum_{a = 1}^n \sum_{b = 1} ^n M_{ab} x_a y_b - \beta' A \beta - \sum_{i = 1} ^n B_{ii} \hat{\sigma}_i ^2.
    \end{split}
\end{equation*}
Given a data-generating process for $y_i$, we can expand $\hat{\theta} - \theta$ further as
\begin{equation*}
    \begin{split}
        \hat{\theta} - \theta &= \sum_{i = 1} ^n \sum_{j = 1} ^n M_{ij} \left(x_j' \beta + f(z_j) + e_j\right) x_i'  S_{xx}^{-1} A S_{xx}^{-1} \sum_{a = 1}^n \sum_{b = 1} ^n M_{ab} x_a \left(x_b' \beta + f(z_b) + e_b\right) - \beta' A \beta - \sum_{i = 1} ^n B_{ii} \hat{\sigma}_i ^2 \\
        &= \sum_{i = 1} ^n \sum_{j = 1} ^n M_{ij} x_j' \beta x_i' S_{xx}^{-1} A S_{xx}^{-1} \sum_{a = 1} ^n \sum_{b = 1} ^n M_{ab} x_a x_b' \beta + \sum_{i = 1} ^n \sum_{j = 1} ^n M_{ij} f(z_j) x_i' S_{xx}^{-1} A S_{xx}^{-1} \sum_{a = 1} ^n \sum_{b = 1} ^n M_{ab} x_a f(z_b) \\
        &+ \sum_{i = 1} ^n \sum_{j = 1} ^n M_{ij} e_j x_i' S_{xx}^{-1} A S_{xx}^{-1} \sum_{a = 1} ^n \sum_{b = 1} ^n M_{ab} x_a e_b + \sum_{i = 1} ^n \sum_{j = 1} ^n M_{ij} x_j' \beta x_i' S_{xx}^{-1} A S_{xx}^{-1} \sum_{a = 1} ^n \sum_{b = 1} ^n M_{ab} x_a f(z_b) \\
        &+ \sum_{i = 1} ^n \sum_{j = 1} ^n M_{ij} f(z_j) x_i' S_{xx}^{-1} A S_{xx}^{-1} \sum_{a = 1} ^n \sum_{b = 1} ^n M_{ab} x_a x_b' \beta + \sum_{i = 1} ^n \sum_{j = 1} ^n M_{ij} x_j' \beta x_i' S_{xx}^{-1} A S_{xx}^{-1} \sum_{a = 1} ^n \sum_{b = 1} ^n M_{ab} x_a e_b \\
        &+ \sum_{i = 1} ^n \sum_{j = 1} ^n M_{ij} e_j x_i' S_{xx}^{-1} A S_{xx}^{-1} \sum_{a = 1} ^n \sum_{b = 1} ^n M_{ab} x_a x_b' \beta + \sum_{i = 1} ^n \sum_{j = 1} ^n M_{ij} f(z_j) x_i' S_{xx}^{-1} A S_{xx}^{-1} \sum_{a = 1} ^n \sum_{b = 1} ^n M_{ab} x_a e_b \\
        &+ \sum_{i = 1} ^n \sum_{j = 1} ^n M_{ij} e_j x_i' S_{xx}^{-1} A S_{xx}^{-1} \sum_{a = 1} ^n \sum_{b = 1} ^n M_{ab} x_a f(z_b) - \beta' A \beta - \sum_{i = 1} ^n B_{ii} \hat{\sigma}_i ^2.
    \end{split}
\end{equation*}
Next, we use the definition $B_{i\ell} := \sum_{j = 1} ^n M_{ij} x_j' S_{xx} ^{-1} A S_{xx} ^{-1} \sum_{j = 1} ^n M_{\ell j} x_j$, so that
\begin{equation*}
    \begin{split}
        \hat{\theta} - \theta &= \sum_{i = 1} ^n \sum_{\ell = 1} ^n B_{i \ell} f(z_i)^2 + \sum_{i = 1} ^n \sum_{\ell = 1} ^n B_{i \ell} e_i ^2 - \sum_{i = 1} ^n B_{ii} \hat{\sigma}_i ^2 \\
        &+ 2 \sum_{i = 1} ^n \sum_{\ell = 1} ^n B_{i \ell} x_\ell' \beta f(z_i) + 2 \sum_{i = 1} ^n \sum_{\ell = 1} ^n B_{i \ell} x_\ell' \beta e_i + 2 \sum_{i = 1} ^n \sum_{\ell = 1} ^n B_{i \ell} e_\ell f(z_i).
    \end{split}
\end{equation*}
Finally, rearranging, we have
\begin{equation}
    \begin{split}\label{theta_diff}
        \hat{\theta} - \theta &= \sum_{i = 1} ^n \sum_{\ell = 1} ^n B_{i \ell} f(z_i)^2 + \sum_{i = 1} ^n \sum_{\ell \neq i}  B_{i \ell} e_i e_\ell + \sum_{i = 1} ^n B_{ii} (e_i^2 - \hat{\sigma}_i ^2) \\
        &+ 2 \sum_{i = 1} ^n \sum_{\ell = 1} ^n B_{i \ell} x_\ell' \beta f(z_i) + 2 \sum_{i = 1} ^n \sum_{\ell = 1} ^n B_{i \ell} x_\ell' \beta e_i + 2 \sum_{i = 1} ^n \sum_{\ell = 1} ^n B_{i \ell} e_\ell f(z_i).
    \end{split}
\end{equation}
In general, it holds that
\begin{equation*}
    \E[|\hat{\theta} - \theta|^2] = |\E[\hat{\theta} - \theta]|^2 + \text{trace}(\var[\hat{\theta} - \theta]).
\end{equation*}
To show that $\hat{\theta}$ is consistent for $\theta$, we need to show that the bias and the variance of the difference in (\ref{theta_diff}) goes to zero. The main idea is to compute bounds on each term and their variances, and show that these bounds are asymptotically negligible. Then, convergence in the quadratic mean would imply convergence in probability.

Applying expectations on both sides, and using independence and mean-zero properties of errors, we have
\begin{equation*}
    \E[\hat{\theta} - \theta] = \sum_{i = 1} ^n \sum_{\ell = 1} ^n B_{i \ell} f(z_i)^2 + 2 \sum_{i = 1} ^n \sum_{\ell = 1} ^n B_{i \ell} x_\ell' \beta f(z_i) + \sum_{i = 1} ^n B_{ii} \E[\sigma_i ^2 - \hat{\sigma}_i ^2].
\end{equation*}
We now prove that each term goes to zero in probability as $n \rightarrow \infty$, $k \rightarrow \infty$, and $p \rightarrow \infty$.

Denote $B := (B_{i\ell})_{i, \ell = 1} ^n \in \mathbb{R}^{n \times n}$, and $F := (f(z_1), \ldots, f(z_n))' \in \mathbb{R}^n$. Then
\begin{equation*}
    \begin{split}
        \sum_{i = 1} ^n \sum_{\ell = 1} ^n B_{i \ell} f(z_i)^2 &= F'BF \\
        &= F'MX S_{xx}^{-1} A S_{xx}^{-1} X'MF \\
        &= F'MX S_{xx}^{-1} A^{1/2} A^{1/2} S_{xx}^{-1} X'MF \\
        &= (F'MX S_{xx}^{-1} A^{1/2})^2,
    \end{split}
\end{equation*}
where we use $A = A^{1/2} A^{1/2}$ because $A$ is symmetric. By the Markov inequality, assumptions on $\mathcal{F}$, $M$ being idempotent, and the Cauchy-Schwarz inequality,
\begin{equation*}
    \begin{split}
        \norm{\frac{1}{n} F'MX S_{xx}^{-1} A^{1/2}} &\leq \text{trace} \left(\frac{1}{n} F'MF \right)^{1/2} \cdot \text{trace} \left(\frac{1}{n} A^{1/2} S_{xx}^{-1} X'M X S_{xx}^{-1} A^{1/2} \right)^{1/2} \\
        &= \text{trace} \left(\frac{1}{n} F'MF \right)^{1/2} \cdot \text{trace} \left(\frac{1}{n} A^{1/2} S_{xx}^{-1} A^{1/2} \right)^{1/2} \\
        &= \mathcal{O} (k^{- \alpha_f} \sqrt{p / n}).
    \end{split}
\end{equation*}
Thus, using the Assumption \ref{asum1} and \ref{asum3}, we have 
\begin{equation*}
    \sum_{i = 1} ^n \sum_{\ell = 1} ^n B_{i \ell} f(z_i)^2 = \mathcal{O} (k^{-2 \alpha_f} np) = \mathcal{O} (k^{-\alpha_f} n) \rightarrow 0.
\end{equation*}

Now, for the second term in the main decomposition,
\begin{equation*}
    \begin{split}
        \sum_{i = 1} ^n \sum_{\ell = 1} ^n B_{i \ell} x_\ell' \beta f(z_i) &= F'BX\beta \\
        &= F'MXS_{xx}^{-1}A S_{xx}^{-1} X'M X \beta \\
        &= F'MXS_{xx}^{-1} A \beta.
    \end{split}
\end{equation*}
Given that $\beta' A S_{xx}^{-1} A \beta = \mathcal{O} (1)$, by the Markov inequality, assumption on $\mathcal{F}$, $M$ being idempotent, and the Cauchy-Schwarz inequality we have
\begin{equation*}
    \begin{split}
        \norm{\frac{1}{n} F'MXS_{xx}^{-1} A \beta} &\leq \text{trace} \left(\frac{1}{n} F'MF \right)^{1/2} \cdot \text{trace} \left(\frac{1}{n} \beta' A S_{xx}^{-1} X'MX S_{xx}^{-1} A \beta \right)^{1/2} \\
        &= \text{trace} \left(\frac{1}{n} F'MF \right)^{1/2} \cdot \text{trace} \left(\frac{1}{n} \beta' A S_{xx}^{-1} A \beta \right)^{1/2} \\
        &= \mathcal{O} (k^{-\alpha_f} / \sqrt{n}).
    \end{split}
\end{equation*}
So that using the Assumption \ref{asum1} and \ref{asum3}, we have
\begin{equation*}
    \sum_{i = 1} ^n \sum_{\ell = 1} ^n B_{i \ell} x_\ell' \beta f(z_i) = \mathcal{O} (k^{-\alpha_f} \sqrt{n}) \rightarrow 0.
\end{equation*}

To bound the third term, we should show that
\begin{equation*}
    \E[\hat{\sigma}_i ^2 -\sigma_i ^2] = \left(x_i'\beta + f(z_i)\right) \left(x_i' \E[\beta - \hat{\beta}_{-i}] + \E[f(z_i) - p_k(z_i)' \hat{\alpha}_{-i}]\right)
\end{equation*}
goes in probability to zero. This is equivalent to bounding the bias of the estimator of the slope coefficient and the bias of the function approximation. 

The bias of the slope estimator is
\begin{equation*}
    \E[\hat{\beta} - \beta] = \E\left[\left(\frac{1}{n} X'MX\right)^{-1} \frac{1}{n} X'MF\right].
\end{equation*}
By the Markov inequality, assumption on $\mathcal{F}$, $M$ being idempotent, and the Cauchy-Schwarz inequality we have that:
\begin{equation*}
    \begin{split}
        \norm{\frac{1}{n} X'MF} &\leq \text{trace}\left(\frac{1}{n} X'MX\right)^{1/2} \cdot \text{trace}\left(\frac{1}{n} F'MF\right)^{1/2} \\
        &= \mathcal{O}(k ^{-\alpha_f} \sqrt{p / n}),
    \end{split}
\end{equation*}
and, similarly,
\begin{equation*}
    \norm{\frac{1}{n} X'MX} = \mathcal{O}(p / n),
\end{equation*}
so that using Assumption \ref{asum1} and \ref{asum3}, we ultimately have:
\begin{equation*}
    \E[\norm{\hat{\beta} - \beta}] = \mathcal{O}(\sqrt{n} / k^{\alpha_f} \sqrt{p}) \rightarrow 0.
\end{equation*}

Because of the assumption on the functional class $\mathcal{F}$, we can bound the bias of the function approximation using Jensen's inequality for convex $x \mapsto x^2$,
\begin{equation*}
    \begin{split}
        \E[|f(z_i) - p_k (z_i)' \hat{\alpha}_{-i}|] &\leq \E[|f(z_i) - p_k (z_i)' \hat{\alpha}_{-i}|^2]^{1/2} \\
        &\leq (Ck^{-2\alpha_f})^{1/2} = \mathcal{O}_p(k^{-\alpha_f}),
    \end{split}
\end{equation*}
so that:
\begin{equation*}
    \E[\lvert\hat{\sigma}_i ^2 -\sigma_i ^2\rvert] = \mathcal{O}_p(\sqrt{n} / k^{\alpha_f} \sqrt{p}) \overset{p}{\rightarrow} 0.
\end{equation*}
From this it follows that:
\begin{equation*}
    \E[\lvert\hat{\theta} - \theta\rvert] \overset{p}{\rightarrow} 0.
\end{equation*}

We next turn our attention to bounding the variance. Let matrix $$\tilde{A} := S_{xx}^{-1/2} A S_{xx}^{-1/2},$$
and let $\lambda_1, \ldots, \lambda_r$ be its nonzero eigenvalues. We assume that $\lambda_1 ^2 \geq \ldots \geq \lambda_r^2$, and that each eigenvalue appears as many times as its algebraic multiplicity. Under these assumptions, we can spectrally decompose $\tilde{A} = Q D Q'$, where $Q$ is a matrix of orthonormal vectors and $D = \text{diag} (\lambda_1, \ldots, \lambda_r)$. 

The variance of $2 \sum_{i = 1} ^n \sum_{\ell = 1} ^n B_{i \ell} x_\ell' \beta e_i$ is
\begin{equation*}
    \begin{split}
        4 \sum_{i = 1} ^n \left( \sum_{\ell = 1} ^n B_{i\ell} x_\ell ' \beta \right)^2 \sigma_i ^2 &\leq \max_i \sigma_i ^2 \beta' X'B^2 X\beta = \max_i \sigma_i ^2 \beta' A S_{xx}^{-1} A \beta \\
        &\leq \max_i \sigma_i ^2 \lambda_1 \theta = o(1).
    \end{split}
\end{equation*}
To explain why the last inequality holds, define $\tilde{\beta} = S_{xx}^{1/2} \beta$ so that $\beta = S_{xx}^{-1/2} \tilde{\beta}$,
\begin{equation*}
    \beta' A S_{xx}^{-1} A \beta = \tilde{\beta}' S_{xx}^{-1/2} A S_{xx}^{-1/2} \tilde{\beta} = \tilde{\beta}' (S_{xx}^{-1/2} A S_{xx}^{-1/2})^2 \tilde{\beta} = \tilde{\beta}' \tilde{A}^2 \tilde{\beta},
\end{equation*}
and
\begin{equation*}
    \theta = \tilde{\beta}' S_{xx}^{-1/2} A S_{xx}^{-1/2} \tilde{\beta} = \tilde{\beta}' \tilde{A} \tilde{\beta}.
\end{equation*}
Then, using the Rayleigh quotient argument, it holds that
\begin{equation*}
    \frac{\tilde{\beta}' \tilde{A}^2 \tilde{\beta}}{\tilde{\beta}' \tilde{A} \tilde{\beta}} \leq \lambda_{\text{max}} (\tilde{A}),
\end{equation*}
because $\tilde{A}$ is positive semi-definite, $\tilde{\beta} \neq 0$, and $\lambda_{\text{max}} (\tilde{A})$ is the largest eigenvalue of $\tilde{A}$. From this it follows that we can bound
\begin{equation*}
    \beta' A S_{xx}^{-1} A \beta \leq \lambda_1 \theta,
\end{equation*}
use assumptions $\theta = \mathcal{O}(1)$, $\lambda_1 \leq \text{trace}(\tilde{A^2})^{1/2} = o(1)$, and the fact that the variance is bounded. The conclusion then follows. 

The variance of $2 \sum_{i = 1} ^n \sum_{\ell = 1} ^n B_{i \ell} e_\ell f(z_i)$ is
\begin{equation*}
    \begin{split}
        4 \sum_{i = 1} ^n \sum_{\ell = 1} ^n B_{i\ell}^2 \sigma^2 _\ell f(z_i)^2 &\leq \max_i 4 \sigma_i ^2 f(z_i) ^2 \sum_{i = 1} ^n \sum_{\ell = 1} ^n B_{i\ell}^2 = \max_i 4 \sigma_i ^2 f(z_i) ^2 \text{trace} (\tilde{A}^2) = o(1),
    \end{split}
\end{equation*}
which follows since
\begin{equation*}
    \begin{split}
        \sum_{i = 1} ^n \sum_{\ell = 1} ^n B_{i \ell} ^2 &= \norm{B}_F ^2 = \text{trace}(B'B) \\
        &= \text{trace}(MX S_{xx}^{-1} A S_{xx}^{-1} A S_{xx}^{-1} X'M) \\
        &= \text{trace} (MX S_{xx}^{-1/2} S_{xx}^{-1/2} A S_{xx}^{-1/2} S_{xx}^{-1/2} A S_{xx}^{-1/2} S_{xx}^{-1/2} X'M) \\
        &= \text{trace} (S_{xx}^{-1/2} A S_{xx}^{-1/2} S_{xx}^{-1/2} A S_{xx}^{-1/2} S_{xx}^{-1/2} X'M X S_{xx}^{-1/2}) \\
        &= \text{trace} (S_{xx}^{-1/2} A S_{xx}^{-1/2} S_{xx}^{-1/2} A S_{xx}^{-1/2}) = \text{trace} (\tilde{A}^2),
    \end{split}
\end{equation*}
and we use the assumptions as above but now, instead of the bounded variance, we assume that $\mathcal{F}$ is a class of bounded functions so that $\max |f(z_i)| < C$ for $i = 1, \ldots, n$ for some absolute constant $C$ (this is implied by the main assumption on $\mathcal{F}$).

Because $B_{i\ell}^2 = B_{\ell i}^2$ for any $i, \ell$, the variance of $\sum_{i = 1} ^n \sum_{\ell \neq i}  B_{i \ell} e_i e_\ell$ is
\begin{equation*}
    2 \sum_{i = 1} ^n \sum_{\ell \neq i}  B_{i \ell}^2 \sigma^2_i \sigma^2_\ell \leq \max_i 2 \sigma_i ^4 \sum_{i = 1}^n \sum_{\ell = 1} ^n B_{i\ell}^2 = \max_i 2 \sigma_i ^4 \text{trace} (\tilde{A}^2) = o(1).
\end{equation*}

To compute the variance of $\sum_{i = 1} ^n B_{ii} (e_i^2 - \hat{\sigma}_i ^2)$, we represent the leave-one-out variance estimator as
\begin{equation*}
    \begin{split}
        \hat{\sigma}_i ^2 &= y_i (y_i - w_i' \hat{\gamma}_{-i}) = y_i M_{W, ii}^{-1} (y_i - w_i' \hat{\gamma}) \\
        &= y_i M_{W, ii}^{-1} \hat{e}_i = y_i M_{W, ii}^{-1} \sum_{\ell = 1} ^n M_{W, i\ell} y_\ell \\
        &= y_i M_{W, ii}^{-1} \sum_{\ell = 1} ^n M_{W, i\ell} \left(x_\ell ' \beta + f(z_\ell) + e_\ell\right) \\
        &= y_i M_{W, ii}^{-1} \sum_{\ell = 1} ^n M_{W, i\ell} \left(x_\ell' \beta + f(z_\ell)\right) + y_i M_{W, ii}^{-1} \sum_{\ell = 1} ^n M_{W, i\ell} e_\ell.
    \end{split}
\end{equation*}  
Thus, the variance is now expressed as:
\begin{multline*}
        \sum_{i = 1} ^n \left( \sum_{\ell = 1} ^n M_{W, \ell \ell} ^{-1} B_{\ell \ell} M_{W, i\ell} \left(x_\ell '\beta + f(z_\ell)\right) \right)^2 \sigma_i ^2 + 2 \sum_{i = 1} ^n \sum_{\ell \neq i} M_{W, ii}^{-2} B_{ii}^2 M_{W, i\ell}^2 \sigma_i ^2 \sigma_\ell ^2 \\ 
        \leq \frac{1}{c^2} \max_i \sigma_i ^2 \max_i \left(x_i' \beta + f(z_i)\right)^2 \sum_{i = 1} ^n B_{ii}^2 + \frac{2}{c} \max_i \sigma_i ^4 \sum_{i = 1} ^n B_{ii}^2 = o(1),
\end{multline*}
because $\min_i M_{W, ii} \geq c > 0$, $\sum_{i = 1} ^n B_{ii} ^2 \leq \text{trace}(\tilde{A}^2) = o(1)$, and $\max_i \left(x_i' \beta + f(z_i)\right)^2 \leq 2 \max_i (x_i' \beta)^2 + 2 \max_i f(z_i)^2 = \mathcal{O}(1)$. 

Because we have that
\begin{equation*}
    \E[\hat{\theta} - \theta] \overset{p}{\rightarrow} 0, \quad \var[\hat{\theta} - \theta] \overset{p}{\rightarrow} 0,
\end{equation*}
the proposed estimator $\hat{\theta}$ is consistent. \hfill \qedsymbol

\paragraph*{Proof of Lemma \ref{jla_lem}} We prove the result by considering a second-order approximation of $\hat{\theta}_{\text{JLA}} - \hat{\theta}$ around $\hat{a}_i := (1 - P_{W, ii})^{-1} (\hat{P}_{W, ii} - P_{W, ii})$ as
    \begin{equation*}
        (\hat{\theta}_{\text{JLA}} - \hat{\theta})_2 := \sum_{i = 1} ^n \hat{\sigma}_i ^2 \left(B_{ii} - \hat{B}_{ii} - \hat{B}_{ii} \hat{a}_i - \hat{B}_{ii} \left(\hat{a}_i ^2 - \frac{1}{m} \frac{3 P_{W, ii}^3 + P_{W, ii}^2}{1 - P_{W, ii}}\right) \right),
    \end{equation*}
    and an approximation error that we show to be negligible,
    \begin{equation*}
        \text{AE}_2 := \sum_{i = 1} ^n \hat{\sigma}_i ^2 \hat{B}_{ii} \left( \frac{1}{m} \frac{3 \hat{P}_{W, ii} ^3 + \hat{P}_{W, ii} ^2 - (3 P_{W, ii}^3 + P_{W, ii}^2) (1 - \hat{a}_i)^2}{(1 - \hat{a}_i)^2 (1 - P_{W, ii})} - \frac{\hat{a}_i ^3}{1 - \hat{a}_i} \right).
    \end{equation*}
    To decompose $\hat{\theta}_{\text{JLA}} - \hat{\theta} = (\hat{\theta}_{\text{JLA}} - \hat{\theta})_2 + \text{AE}_2$, note that
    \begin{equation*}
        \begin{split}
            \hat{\sigma}^2_{i, \text{JLA}} &= \frac{y_i (y_i - w_i' \hat{\gamma})}{1 - \hat{P}_{W, ii}} \left(1 - \frac{1}{m} \frac{3 \hat{P}_{W, ii} ^3 + \hat{P}_{W, ii} ^2}{1 - \hat{P}_{W, ii}}\right) \\
            &= \frac{1 - P_{W, ii}}{1 - \hat{P}_{W, ii}} \hat{\sigma}_i ^2 \left(1 - \frac{1}{m} \frac{3 \hat{P}_{W, ii} ^3 + \hat{P}_{W, ii} ^2}{1 - \hat{P}_{W, ii}}\right),
        \end{split}
    \end{equation*}
    so that
    \begin{equation*}
        \begin{split}
            \hat{\theta}_{\text{JLA}} - \hat{\theta} &= \sum_{i = 1} ^n B_{ii} \hat{\sigma}_i ^2 - \hat{B}_{ii} \hat{\sigma}^2_{i, \text{JLA}} \\
            &= \sum_{i = 1} ^n B_{ii} \hat{\sigma}_i ^2 - \hat{B}_{ii} \hat{\sigma}_i ^2 \left(1 - \frac{1}{m} \frac{3 \hat{P}_{W, ii} ^3 + \hat{P}_{W, ii} ^2}{1 - \hat{P}_{W, ii}}\right) \\
            &= \sum_{i = 1} ^n B_{ii} \hat{\sigma}_i ^2 + \hat{B}_{ii} \hat{\sigma}_i ^2 \left(\frac{1}{m} \frac{1 - P_{W, ii}}{1 - \hat{P}_{W, ii}} \frac{3 \hat{P}_{W, ii}^3 + \hat{P}_{W, ii}^2}{1 - \hat{P}_{W, ii}} - \frac{1 - P_{W, ii}}{1 - \hat{P}_{W, ii}}\right).
        \end{split}
    \end{equation*}
    Add and subtract $m^{-1} \hat{\sigma}_i ^2 \hat{B}_{ii} (1 - P_{W, ii})^{-1} (3 P_{W, ii}^3 + P_{W, ii}^2)$ to obtain
    \begin{equation*}
        \begin{split}
            \hat{\theta}_{\text{JLA}} - \hat{\theta} &= \sum_{i = 1} ^n B_{ii} \hat{\sigma}_i ^2 + \frac{1}{m} \hat{\sigma}_i ^2 \hat{B}_{ii} \frac{3 P_{W, ii}^3 + P_{W, ii}^2}{1 - P_{W, ii}} + \hat{B}_{ii} \hat{\sigma}_i ^2 \left(\frac{1}{m} \frac{1 - P_{W, ii}}{1 - \hat{P}_{W, ii}} \frac{3 \hat{P}_{W, ii}^3 + \hat{P}_{W, ii}^2}{1 - \hat{P}_{W, ii}} - \frac{1 - P_{W, ii}}{1 - \hat{P}_{W, ii}}\right) \\&
            - \frac{1}{m} \hat{\sigma}_i ^2 \hat{B}_{ii} \frac{3 P_{W, ii}^3 + P_{W, ii}^2}{1 - P_{W, ii}} \\
            &= \sum_{i = 1} ^n \hat{\sigma}_i ^2 \left(B_{ii} + \frac{1}{m} \hat{B}_{ii} \frac{3 P_{W, ii}^3 + P_{W, ii}^2}{1 - P_{W, ii}}\right) \\
            &+ \sum_{i = 1} ^n \hat{\sigma}_i ^2 \hat{B}_{ii} \left(\frac{1}{m} \frac{(1 - P_{W, ii})^2 (3 \hat{P}_{W, ii}^3 + \hat{P}_{W, ii} ^2) - (1 - \hat{P}_{W, ii})^2 (3 P_{W, ii}^3 + P_{W, ii}^2)}{(1 - \hat{P}_{W, ii})^2 (1 - P_{W, ii})} - \frac{1 - P_{W, ii}}{1 - \hat{P}_{W, ii}}\right),
        \end{split}
    \end{equation*}
    and using $(1 - \hat{P}_{W, ii})^{-1} (1 - P_{W, ii}) = (1 - \hat{a}_i)^{-1}$, and expanding up to the third order as
    \begin{equation*}
        \begin{split}
            \frac{1 - P_{W, ii}}{1 - \hat{P}_{W, ii}} &= 1 + \frac{1 - P_{W, ii}}{1 - \hat{P}_{W, ii}} \hat{a}_i \\
            &= 1 + \hat{a}_i + \hat{a}_i ^2 + \frac{1 - P_{W, ii}}{1 - \hat{P}_{W, ii}} \hat{a}_i ^3,
        \end{split}
    \end{equation*}
    we have that
    \begin{equation*}
        \begin{split}
            \hat{\theta}_{\text{JLA}} - \hat{\theta} &= \sum_{i = 1} ^n \hat{\sigma}_i ^2 \left(B_{ii} + \frac{1}{m} \hat{B}_{ii} \frac{3 P_{W, ii}^3 + P_{W, ii}^2}{1 - P_{W, ii}} \right) \\
            &+ \sum_{i = 1} ^n \hat{\sigma}_i ^2 \hat{B}_{ii} \left( \frac{1}{m} \frac{3 \hat{P}_{W, ii} ^3 + \hat{P}_{W, ii}^2 - (1 - \hat{a}_i)^2 (3 P_{W, ii}^3 + P_{W, ii}^2)}{(1 - \hat{a}_i)^2 (1 - P_{W, ii})} - \left(1 + \hat{a}_i + \hat{a}_i ^2 + \frac{\hat{a}_i ^3}{1 - \hat{a}_i}\right) \right) \\
            &= (\hat{\theta}_{\text{JLA}} - \hat{\theta})_2 + \text{AE}_2.
        \end{split}
    \end{equation*}

To describe the bias, we note that $\hat{P}_{W, ii}$, $\hat{B}_{ii}$, and $\hat{\sigma}_i ^2$ are independent of each other, $\E[\hat{P}_{W, ii}] = P_{W, ii}$, $\E[\hat{B}_{ii}] = B_{ii}$, $\E[\hat{\sigma}_i ^2] = \sigma_i ^2 + \mathcal{O}_p (\sqrt{n} / k^{\alpha_f} \sqrt{p})$, and using properties of the Rademacher random variables,
\begin{equation*}
    \var[\hat{a}_i] = \frac{2}{m} \frac{P_{W, ii} - \sum_{\ell = 1} ^n P_{W, i \ell}^4}{(1 - P_{W, ii})^2} = \frac{1}{m} \frac{3 P_{W, ii}^3 + P_{W, ii}^2}{1 - P_{W, ii}} + \frac{P_{W, ii} (1 - P_{W, ii})^2 - 2 \sum_{\ell \neq i} ^n P_{W, i\ell}^4}{m (1 - P_{W, ii})^2}.
\end{equation*}
Therefore, in total we have:
\begin{equation*}
    \E[(\hat{\theta}_{\text{JLA}} - \hat{\theta})_2] = - \sum_{i = 1} ^n \sigma_i ^2 B_{ii} \left(\var[\hat{a}_i] - \frac{1}{m} \frac{3 P_{W, ii}^3 + P_{W, ii}^2}{1 - P_{W, ii}}\right) + \mathcal{O}_p (\sqrt{n} / k^{\alpha_f} \sqrt{p}),
\end{equation*}
or, assuming $k \rightarrow \infty$,
\begin{equation*}
    \E[(\hat{\theta}_{\text{JLA}} - \hat{\theta})_2] = \mathrm{B}_m + o(1), \quad \mathrm{B}_m := \sum_{i = 1} ^n B_{ii} \sigma_i ^2 \left(\frac{2 \sum_{\ell \neq i} P_{W, i\ell}^4 - P_{W, ii}^2 (1 - P_{W, ii})^2}{m (1 - P_{W, ii})^2} \right).
\end{equation*}

Focusing on the variance next, denote $y := (y_1, \ldots, y_n)'$, so that
\begin{equation*}
    \begin{split}
        \var\left[\sum_{i = 1} ^n \hat{\sigma}_i ^2 (B_{ii} - \hat{B}_{ii})\right] &= \E\left[\var\left[\sum_{i = 1} ^n \hat{\sigma}_i ^2 \hat{B}_{ii}\right] \middle| y\right] + \var\left[\E\left[\sum_{i = 1} ^n \hat{\sigma}_i ^2 \hat{B}_{ii}\right] \middle| y\right] = \E\left[\var\left[\sum_{i = 1} ^n \hat{\sigma}_i ^2 \hat{B}_{ii}\right] \middle| y\right] \\
        &\leq 2 m^{-1} \sum_{i = 1} ^n \sum_{\ell = 1} ^n B_{i \ell}^2 \E[\hat{\sigma}_i ^2 \hat{\sigma}_\ell ^2] = \mathcal{O}\left(m^{-1} \text{trace} (\tilde{A}^2)\right), 
    \end{split}
\end{equation*}
\begin{equation*}
    \begin{split}
        \var\left[\sum_{i = 1} ^n \hat{\sigma}_i ^2 \hat{B}_{ii} \hat{a}_i\right] &= \E\left[\var\left[ \sum_{i = 1} ^n \hat{\sigma}_i ^2 \hat{B}_{ii} \hat{a}_i\middle | y, R_B\right]\right] \leq 2 m^{-1} \sum_{i = 1} ^n \sum_{\ell = 1} ^n P_{W, i \ell} ^2 \frac{\E[\hat{B}_{ii} \hat{B}_{\ell \ell}] \E[\hat{\sigma}_i ^2 \hat{\sigma}_\ell ^2]}{(1 - P_{W, ii}) (1 - P_{W, \ell \ell})} \\
        &= \mathcal{O}\left(m^{-1} \text{trace} (\tilde{A}^2) + m^{-2} \text{trace} (\tilde{A}_1 ^2)^{1/2} \text{trace} (\tilde{A}_2 ^2)^{1/2}\right)
    \end{split}
\end{equation*}
for $\tilde{A}_k := S_{xx}^{-1/2} A_k' A_k S_{xx} ^{-1/2}$ for $k = 1, 2$. Regarding the ensuing terms, it holds that:
\begin{equation*}
    \begin{split}
        \var\left[\sum_{i = 1} ^n \hat{\sigma}_i ^2 \hat{B}_{ii} \left(\hat{a}_i ^2 - \var[\hat{a}_i]\right)\right] &= \sum_{i = 1} ^n \sum_{\ell = 1} ^n \E[\hat{B}_{ii} \hat{B}_{\ell \ell}] \E[\hat{\sigma}_i ^2 \hat{\sigma}_\ell ^2] \cov[\hat{a}_i ^2, \hat{a}_\ell ^2] \\
        &= \mathcal{O} \left(m^{-2} \text{trace} (\tilde{A}^2) + m^{-3} \text{trace} (\tilde{A}_1 ^2)^{1/2} \text{trace} (\tilde{A}_2 ^2)^{1/2}\right),
    \end{split}
\end{equation*}
\begin{equation*}
    \begin{split}
        \var\left[ \sum_{i = 1} ^n \hat{\sigma}_i ^2 (\hat{B}_{ii} - B_{ii}) \frac{2 \sum_{\ell \neq i} ^n P_{W, i\ell}^4 - P_{W, ii} (1 - P_{W, ii})^2}{m (1 - P_{W, ii})^2} \right] = \mathcal{O}\left(m^{-3} \text{trace} (\tilde{A}^2) \right),
    \end{split}
\end{equation*}
\begin{equation*}
    \var\left[\sum_{i = 1} ^n B_{ii} \left(\hat{\sigma}_i ^2 - \sigma_i ^2\right) \frac{2 \sum_{\ell \neq i} ^n P_{W, i\ell}^4 - P_{W, ii} (1 - P_{W, ii})^2}{m (1 - P_{W, ii})^2} \right] = \mathcal{O} \left(m^{-2} \var[\hat{\theta}]\right).
\end{equation*}
Because $\text{trace} (\tilde{A}^2) = \mathcal{O}(\var[\hat{\theta}])$ and $m^{-4} \var[\hat{\theta}]^{-2} \var[\hat{\theta}_1] \var[\hat{\theta}_2] = o(1)$, it can be ultimately established that $\var[\hat{\theta}]^{-1/2} \left((\hat{\theta}_{\text{JLA}} - \hat{\theta})_2 - \mathrm{B}_m\right) = o_p(1)$.

Using that $\E[\hat{a}_i^3] = \mathcal{O}(m^{-2})$, $\E[\hat{a}_i ^4] = \mathcal{O}(m^{-2})$, and $\max_i |\hat{a}_i| = o_p (\log n / \sqrt{m})$, the terms in the approximation error are as follows:
\begin{gather*}
    \sum_{i = 1} ^n \hat{\sigma}_i ^2 \hat{B}_{ii} \hat{a}_i ^3 + \sum_{i = 1} ^n \hat{\sigma}_i ^2 \hat{B}_{ii} \hat{a}_i ^4 = m^{-2} \mathcal{O}_p \left(\E[\hat{\theta}_{1, \text{PI}} - \theta_1] + \E[\hat{\theta}_{2, \text{PI}} - \theta_2] \right), \\
    \sum_{i = 1} ^n \hat{\sigma}_i ^2 \hat{B}_{ii} \frac{\hat{a}_i ^5}{1 - \hat{a}_i} = \frac{\log n}{m^{5/4}} \mathcal{O}_p \left(\E[\hat{\theta}_{1, \text{PI}} - \theta_1] + \E[\hat{\theta}_{2, \text{PI}} - \theta_2]\right), \\
    \frac{1}{m} \sum_{i = 1} ^n \hat{\sigma}_i ^2 \hat{B}_{ii} \frac{3 \hat{P}_{W, ii}^3 + \hat{P}_{W, ii} ^2 - (3 P_{W, ii}^3 + P_{W, ii}^2) (1 - \hat{a}_i)^2}{(1 - \hat{a}_i)^2 (1 - P_{W, ii})} \\
    = \left(m^{-2} + \frac{\log n}{p^{5/4}}\right) \mathcal{O}_p \left(\E[\hat{\theta}_{1, \text{PI}} - \theta_1] + \E[\hat{\theta}_{2, \text{PI}} - \theta_2]\right).
\end{gather*}
\hfill \qedsymbol

\paragraph*{Proof of Theorem \ref{chi_sq}}

Representation in the theorem holds because
\begin{equation*}
    \sum_{\ell = 1} ^r \lambda_\ell \hat{b}_\ell ^2 = \hat{\beta}' S_{xx}^{1/2} Q D Q' S_{xx}^{1/2} \hat{\beta} = \hat{\beta}' S_{xx}^{1/2} S_{xx}^{-1/2} A S_{xx}^{-1/2} S_{xx}^{1/2} \hat{\beta} = \hat{\beta}' A \hat{\beta},
\end{equation*}
and
\begin{equation*}
    \sum_{i = 1} ^n B_{ii} \hat{\sigma}_i ^2 = \text{trace} (A \widehat{\var}[\hat{\beta}]) = \text{trace} (D \widehat{\var}[\hat{b}]) = \sum_{\ell = 1} ^r \lambda_\ell \widehat{\var}[\hat{b}_\ell].
\end{equation*}
We prove the theorem in three steps.
\\
\textit{Approximation}. Equivalently, represent $\hat{\theta}$ as
\begin{equation*}
    \hat{\theta} = \sum_{\ell = 1} ^r \lambda_\ell \left(\hat{b}_\ell ^2 - \var[\hat{b}_\ell]\right) + \sum_{i = 1} ^n B_{ii} (\sigma_i ^2 - \hat{\sigma}_i ^2),
\end{equation*}
and below we show that the second term is asymptotically dominated by the variance of the estimator, $\var[\hat{\theta}]$. Given that, the asymptotic distribution of $\hat{\theta}$ is then driven by the joint distribution of random vector $\hat{b}$. 

We need to show that the second term is asymptotically mean-zero and is of smaller order than variance of $\hat{\theta}$. The first claim is immediate because we have shown in Lemma \ref{lem1} that
\begin{equation*}
    \E[\hat{\sigma}_i ^2 - \sigma_i ^2] = \mathcal{O}_p(\sqrt{n} / k^{\alpha_f} \sqrt{p}) \overset{p}{\rightarrow} 0,
\end{equation*}
so that
\begin{equation*}
    \E\left[\sum_{i = 1} ^n B_{ii} (\sigma_i ^2 - \hat{\sigma}_i ^2)\right] \overset{p}{\rightarrow} 0.
\end{equation*}
Now, rewrite
\begin{equation*}
    \begin{split}
        \sum_{i = 1} ^n B_{ii} (\hat{\sigma}_i ^2 - \sigma_i ^2) &= \sum_{i = 1} ^n B_{ii} M_{W, ii}^{-1}  x_i' \beta \sum_{\ell = 1} ^n M_{W, i \ell} e_\ell + \sum_{i = 1} ^n (e_i ^2 - \sigma_i ^2) \\
        &+ \sum_{i = 1} ^n B_{ii} M_{W, ii}^{-1}  f(z_i) \sum_{\ell = 1} ^n M_{W, i \ell} e_\ell + \sum_{i = 1} ^n B_{ii} M_{W, ii}^{-1} \sum_{\ell \neq i} M_{W, i\ell} e_i e_\ell.
    \end{split}
\end{equation*}
The variances of the first and the third terms are
\begin{equation*}
        \sum_{\ell = 1} ^n \sigma_\ell ^2 \left(\sum_{i = 1} ^n M_{W, i \ell} B_{ii} M_{W, ii}^{-1} x_i' \beta \right)^2
        \leq \max_i \sigma_i ^2 \sum_{i = 1} ^n B_{ii}^2 M_{W, ii}^{-2} (x_i' \beta)^2 \leq \max_i \sigma_i ^2 \max_i (x_i' \beta)^2 M_{W, ii}^{-2} \sum_{i = 1} ^n B_{ii}^2,
\end{equation*}
\begin{equation*}
        \sum_{\ell = 1} ^n \sigma_\ell ^2 \left(\sum_{i = 1} ^n M_{W, i \ell} B_{ii} M_{W, ii}^{-1} f(z_i) \right)^2
        \leq \max_i \sigma_i ^2 \sum_{i = 1} ^n B_{ii}^2 M_{W, ii}^{-2} f(z_i)^2 \leq \max_i \sigma_i ^2 \max_i f(z_i)^2 M_{W, ii}^{-2} \sum_{i = 1} ^n B_{ii}^2,
\end{equation*}
and of the second and the fourth
\begin{equation*}
    \sum_{i = 1} ^n B_{ii} ^2 \var[e_i ^2] \leq \max_i \E[e_i ^4] \sum_{i = 1} ^n B_{ii}^2,
\end{equation*}
\begin{equation*}
    \sum_{i = 1} ^n \sum_{\ell \neq i} (B_{ii}^2 M_{W, ii}^{-2} + B_{ii}M_{W, ii}^{-1} B_{\ell \ell} M_{W, \ell \ell}^{-1}) M_{W, i \ell} ^2 \sigma_i ^2 \sigma_\ell ^2 \leq 2 \max_i \sigma_i ^4 M_{W, ii}^{-2} \sum_{i = 1} ^n B_{ii} ^2.
\end{equation*}
Because each variance is bounded by $C \sum_{i = 1} ^n B_{ii} ^2$, to show that it is of smaller order than the variance of $\hat{\theta}$, we need $\var[\hat{\theta}]^{-1} \sum_{i = 1} ^n B_{ii} ^2 = o(1)$. It holds because
\begin{equation*}
    \var[\hat{\theta}]^{-1} \sum_{i = 1} ^n B_{ii} ^2 \leq \max_i v_i'v_i \var[\hat{\theta}]^{-1} \sum_{\ell = 1} ^r \lambda_\ell ^2 \leq \max_i v_i' v_i \max_i \sigma_i ^{-4} = o(1).
\end{equation*}
\\
\textit{Variance estimator consistency.} Now we show that the variance estimator is consistent, i.e. $\var[\hat{b}]^{-1} \widehat{\var}[\hat{b}] \overset{p}{\rightarrow} I_r$. For it we need to show that
\begin{equation*}
    \var[\vartheta'\hat{b}]^{-1} \left(\widehat{\var}[\vartheta'\hat{b}] - \var[\vartheta'\hat{b}]\right) = o_p(1), \quad \vartheta \in \mathbb{R}^r, \quad \vartheta'\vartheta = 1
\end{equation*}
for some non-random $\vartheta$. Rewrite the expression above as
\begin{equation}\label{delta_theta}
    \delta(\vartheta) := \sum_{i = 1} ^n v_i(\vartheta) (\hat{\sigma}_i ^2 - \sigma_i ^2),
\end{equation}
where
\begin{equation*}
    v_i(\vartheta) := \frac{(\vartheta' v_i)^2}{\sum_{i = 1} ^n \sigma_i ^2 (\vartheta' v_i)^2}.
\end{equation*}
We know that $\E[\delta(\vartheta)] = o_p(1)$ because by the triangle inequality, the Cauchy-Schwarz inequality, and $|\E[\hat{\sigma}_i ^2 - \sigma_i ^2]| = o_p(1)$,
\begin{equation*}
    | v(\vartheta) \E[\hat{\sigma}_i ^2 - \sigma_i ^2]| \leq (v(\vartheta)^2)^{1/2} \cdot \left(|\E[\hat{\sigma}_i ^2 - \sigma_i ^2]|^2\right)^{1/2} = o_p(1)
\end{equation*}
for $i = 1, \ldots, n$. The variance of $\delta(\vartheta)$ is 
\begin{equation*}
    \begin{split}
        \sum_{i = 1} ^n \delta(\vartheta)^2 \var[\hat{\sigma}_i ^2] &\leq \sum_{i = 1} ^n v_i(\vartheta)^4 \\
        &\leq \max_i \sigma_i ^{-4} \max_i v_i v_i' \frac{\vartheta'\vartheta}{\sum_{i = 1} ^n v_i v_i' \vartheta'\vartheta} \\
        &= \max_i \sigma_i ^{-4} \max_i v_i v_i' = o(1)
    \end{split}
\end{equation*}
because $\max_i v_i v_i' = o(1)$ by assumption.
\\
\textit{Asymptotic normality.} Our objective is to prove that
\begin{equation*}
    \var[\vartheta' \hat{b}]^{-1/2} \left(\vartheta'\hat{b} - \vartheta'b \right) \overset{d}{\rightarrow} \mathcal{N}(0, 1),
\end{equation*}
where $\hat{b} := Q' S_{xx}^{1/2} \hat{\beta}$, and $b := Q' S_{xx}^{1/2} \beta$. Lyapunov's condition implies that it is sufficient to show that
\begin{equation*}
    \var[\vartheta'\hat{b}]^{-2} \sum_{i = 1} ^n \E\left[\left(\vartheta' (\hat{b} - b)\right)^4\right] = o_p(1).
\end{equation*}
Because we have that
\begin{equation*}
    \begin{split}
        \vartheta'(\hat{b} - b) &= \vartheta'(Q'S_{xx}^{1/2} \hat{\beta} - Q'S_{xx}^{1/2} \beta) \\
        &= \sum_{i = 1} ^n \vartheta' v_i \left(f(z_i) + e_i\right),
    \end{split}
\end{equation*}
and $(a + b)^4 \leq C(a^4 + b^4)$ for some constant $C$, Lyapunov's condition is equivalent to
\begin{equation*}
    \var[\vartheta'\hat{b}]^{-2} \sum_{i = 1} ^n C \left(f(z_i)^4 + \E[e_i^4]\right) \cdot (\vartheta' v_i)^4 = o_p(1).
\end{equation*}
It holds because $\max_i |f(z_i)| = \mathcal{O}(1)$, and $\max_i \E[e_i ^4] = \mathcal{O}(1)$ by assumption, so that $\max_i f(z_i)^4 + \E[e_i^4] = \mathcal{O}(1)$, also $\max_i (\vartheta' v_i)^2 \leq \max_i v_i' v_i = o(1)$, $\sum_{i = 1} ^n (\vartheta' v_i)^2 = \sum_{i = 1} ^n \vartheta' v_i v_i' \allowbreak \vartheta = 1$, and $\var[\vartheta' \hat{b}]^{-2} \leq \max_i \sigma_i ^{-2} = \mathcal{O}(1)$. \hfill \qedsymbol

\paragraph*{Proof of Theorem \ref{big_r_norm}}

We derive the limiting distribution of $\hat{\theta}$ with growing rank based on the following result regarding the joint normality of independent and not necessarily identical random variables as in \textcite{kline}. 

Let $\{q_{n, i}\}_{i,n}$ be a triangular array of row-wise independent random variables with $\E[q_{n,i}] = 0$ and $\var[q_{n,i}] = \sigma^2 _{n, i}$, let $\{\dot{w}_{n,i}\}_{i,n}$ be a triangular array of non-random weights that satisfy $\sum_{i = 1} ^n \dot{w}_{n,i} \sigma_{n,i}^2 = 1$ for $\forall n$, and let $(Q_n)_n$ be a sequence of symmetric non-random matrices in $\mathbb{R}^{n \times n}$ with zeros on the diagonal and having $2 \sum_{i = 1} ^n \sum_{\ell \neq i} Q_{n, i\ell} ^2 \sigma_{n,i}^2 \sigma_{n, \ell}^2 = 1$. Define
\begin{equation*}
    \mathcal{S}_n := \sum_{i = 1} ^n \dot{w}_{n,i} q_{n,i}, \quad \mathcal{U}_n := \sum_{i = 1} ^n \sum_{\ell \neq i} Q_{n, i\ell} q_{n, i} q_{n, \ell}.
\end{equation*}
\begin{lem}\label{solv}
    If $\max_i \E[q_{n, i}^4] + \sigma_{n,i}^{-2} = O(1)$, (i) $\max_i \dot{w}_{n,i}^2 = o(1)$, and (ii) $\text{trace} (Q_{n}^4) = o(1)$, then $(\mathcal{S}_n, \mathcal{U}_n)' \overset{d}{\rightarrow} \mathcal{N}(0, I_2)$.
\end{lem}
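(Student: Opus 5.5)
I will prove the joint limit with the Cramér--Wold device combined with a martingale central limit theorem. Fix $(t_1, t_2) \in \mathbb{R}^2$ with $t_1^2 + t_2^2 = 1$ and set $T_n := t_1 \mathcal{S}_n + t_2 \mathcal{U}_n$. Because $Q_n$ is symmetric with zero diagonal, $\mathcal{U}_n = 2\sum_{i=1}^n q_{n,i} \sum_{\ell < i} Q_{n,i\ell} q_{n,\ell}$. This gives the decomposition
\begin{equation*}
    T_n = \sum_{i=1}^n \xi_{n,i}, \qquad \xi_{n,i} := q_{n,i}\Bigl(t_1 \dot{w}_{n,i} + 2 t_2 \sum_{\ell < i} Q_{n,i\ell} q_{n,\ell}\Bigr),
\end{equation*}
which is a martingale difference array with respect to the filtration $\mathcal{F}_{n,i} := \sigma(q_{n,1}, \ldots, q_{n,i})$. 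The cross term between $\mathcal{S}_n$ and $\mathcal{U}_n$ has mean zero because the diagonal of $Q_n$ vanishes and the $q_{n,i}$ are independent and centered. Hence $\var[T_n] = t_1^2 + t_2^2 = 1$ by the two normalizations. It then suffices to verify the two conditions of the martingale CLT (e.g.\ Hall and Heyde): a conditional Lindeberg condition, which I will check through the Lyapunov sum $\sum_i \E[\xi_{n,i}^4] \to 0$, and convergence of the conditional variance $V_n := \sum_i \E[\xi_{n,i}^2 \mid \mathcal{F}_{n,i-1}] \overset{p}{\to} 1$.

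\textbf{Lyapunov step.} Write $\xi_{n,i} = q_{n,i} a_{n,i}$, where $a_{n,i}$ is $\mathcal{F}_{n,i-1}$-measurable. Then $\E[\xi_{n,i}^4] \le \max_i \E[q_{n,i}^4] \, \E[a_{n,i}^4]$. The linear part contributes $\sum_i \dot{w}_{n,i}^4$. Since $\sigma_{n,i}^{-2} = O(1)$, the normalization $\sum_i \dot{w}_{n,i}\sigma_{n,i}^2 = 1$ bounds $\sum_i \dot{w}_{n,i}^2$ (this reading of the normalization as squared weights is what the linear CLT requires). Therefore $\sum_i \dot{w}_{n,i}^4 \le \max_i \dot{w}_{n,i}^2 \sum_i \dot{w}_{n,i}^2 = o(1)$ by (i). The quadratic part contributes $\sum_i \E[(\sum_{\ell<i} Q_{n,i\ell} q_{n,\ell})^4]$. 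Expanding with Rosenthal-type moment bounds gives $O(\sum_i (\sum_\ell Q_{n,i\ell}^2)^2)$. Now $\max_i \sum_\ell Q_{n,i\ell}^2 \le \lambda_{\max}(Q_n^2) \le \text{trace}(Q_n^4)^{1/2}$, and $\sum_{i,\ell} Q_{n,i\ell}^2 = O(1)$ by the normalization. So this part is $o(1)$ by (ii).

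\textbf{Conditional variance step.} This is the main obstacle. Expanding,
\begin{equation*}
    V_n = \sum_i \sigma_{n,i}^2 \Bigl(t_1^2 \dot{w}_{n,i}^2 + 4 t_1 t_2 \dot{w}_{n,i} \sum_{\ell<i} Q_{n,i\ell} q_{n,\ell} + 4 t_2^2 \Bigl(\sum_{\ell<i} Q_{n,i\ell} q_{n,\ell}\Bigr)^2\Bigr).
\end{equation*}
The mean of $V_n$ equals $\var[T_n] = 1$, so I only need $\var[V_n] \to 0$, and I will bound each piece separately. The middle term is a linear form $\sum_\ell q_{n,\ell} c_\ell$ with $c_\ell = \sum_{i>\ell} \sigma_{n,i}^2 \dot{w}_{n,i} Q_{n,i\ell}$. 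Its variance is bounded by $\lambda_{\max}(Q_n^2)\sum_i \dot{w}_{n,i}^2 = o(1)$. The last term splits into a diagonal sum $\sum_\ell (q_{n,\ell}^2 - \sigma_{n,\ell}^2)\sum_{i>\ell}\sigma_{n,i}^2 Q_{n,i\ell}^2$ and an off-diagonal sum $\sum_{\ell\neq m} q_{n,\ell}q_{n,m}\sum_{i>\max(\ell,m)}\sigma_{n,i}^2 Q_{n,i\ell}Q_{n,im}$. The diagonal sum has variance $O(\sum_\ell (\sum_i Q_{n,i\ell}^2)^2) = o(1)$ by the row bound above. The off-diagonal sum has variance $O(\sum_{\ell,m} ((Q_n D Q_n)_{\ell m})^2) = O(\text{trace}((Q_nDQ_n)^2))$, where $D$ is the diagonal matrix of the $\sigma^2$'s, and this is $O(\text{trace}(Q_n^4)) = o(1)$. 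The delicate part is carrying out this last bound honestly despite the truncation $i > \max(\ell,m)$. I expect to handle it the way Chatterjee and KSS handle triangular truncations, by bounding the truncated products through the full-matrix quantity plus row-norm terms. Once $V_n \overset{p}{\to} 1$, the martingale CLT gives $T_n \overset{d}{\to} \mathcal{N}(0,1)$ for every unit $(t_1,t_2)$, and Cramér--Wold yields $(\mathcal{S}_n,\mathcal{U}_n)' \overset{d}{\to} \mathcal{N}(0, I_2)$.
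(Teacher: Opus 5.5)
Your route (Cram\'er--Wold, the martingale-difference array $\xi_{n,i}=q_{n,i}a_{n,i}$, and a martingale CLT) is the standard one for this lemma; the paper gives no argument of its own and only cites KSS and S{\o}lvsten. Your Lyapunov step is correct, and so is your reading of the normalization as $\sum_i \dot{w}_{n,i}^2\sigma_{n,i}^2=1$. The gap is in the conditional-variance step, and it is the crux of the proof, not a technicality.

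Let $L$ be the strictly lower-triangular part of $Q_n$ ($L_{i\ell}=Q_{n,i\ell}$ for $\ell<i$, zero otherwise) and let $D=\mathrm{diag}(\sigma_{n,i}^2)$. The middle term has coefficients $c=L'D\dot{w}$. The off-diagonal term has kernel $(L'DL)_{\ell m}$, not $(Q_nDQ_n)_{\ell m}$.

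Your bound $\lambda_{\max}(Q_n^2)\sum_i\dot{w}_{n,i}^2$ for the middle term silently replaces $L$ by $Q_n$, i.e.\ it assumes $\lVert L\rVert_{\mathrm{op}}\le C\lVert Q_n\rVert_{\mathrm{op}}$. This is false uniformly in $n$: triangular truncation has operator norm of order $\log n$. Interleaving a rescaled Hilbert-type block $H_{ab}=1/(a-b+\tfrac12)$ into a symmetric $Q_n$ attains this growth. Using the $\log n$ bound instead would require $\text{trace}(Q_n^4)=o((\log n)^{-4})$, which is stronger than (ii). The same problem affects the off-diagonal variance $2\sum_{\ell\neq m}\sigma_\ell^2\sigma_m^2(L'DL)_{\ell m}^2$. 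So ``full-matrix quantity plus row-norm terms'' cannot be reached by operator-norm arguments. What you need is a truncation bound in the Schatten-4 norm.

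That bound has a short proof for symmetric zero-diagonal matrices, and it closes your argument. Write $Q_n=L+L'$, $a:=\lVert L'L\rVert_F=\lVert LL'\rVert_F$, $b:=\lVert L^2\rVert_F$, and $q:=\text{trace}(Q_n^4)^{1/2}$. Three facts hold:
\begin{itemize}
\item $LL'+L'L=Q_n^2-L^2-L'^2$;
\item $\langle LL',L'L\rangle_F=\text{trace}(L'^2L^2)=b^2$;
\item $\langle L^2,L'^2\rangle_F=\text{trace}(L'^4)=0$ by nilpotency.
\end{itemize}
Together these give
\[
\sqrt{2a^2+2b^2}\le q+\sqrt{2}\,b, \qquad b^2=\text{trace}\bigl((L'L)(LL')\bigr)\le a^2 .
\]
Hence $a^2\le q^2/2+\sqrt2\,qa$, and therefore $\text{trace}\bigl((L'L)^2\bigr)=a^2\le(1+2^{-1/2})^2\,\text{trace}(Q_n^4)$.

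Apply this to $\tilde{Q}:=D^{1/2}Q_nD^{1/2}$. Its strictly lower part is $\tilde{L}=D^{1/2}LD^{1/2}$, and $\text{trace}(\tilde{Q}^4)\le(\max_i\sigma_{n,i}^2)^4\,\text{trace}(Q_n^4)$.
\begin{itemize}
\item The middle-term variance is $\lVert\tilde{L}'D^{1/2}\dot{w}\rVert^2\le\lVert\tilde{L}'\tilde{L}\rVert_F\sum_i\sigma_{n,i}^2\dot{w}_{n,i}^2=O(\text{trace}(Q_n^4)^{1/2})$.
\item The off-diagonal variance is at most $2\lVert\tilde{L}'\tilde{L}\rVert_F^2=O(\text{trace}(Q_n^4))$.
\end{itemize}
Both vanish under (ii), which gives $V_n\overset{p}{\rightarrow}1$ and completes the argument.
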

\begin{proof}
    See Appendix B in \textcite{kline} and Appendix A2 in \textcite{solvsten}.
\end{proof}

The $U$-statistic representation holds because 
\begin{equation*}
    \begin{split}
        \hat{\theta} &= \sum_{i = 1} ^n y_i \tilde{w}_i' \hat{\gamma}_{-i} = \sum_{i = 1} ^n y_i \tilde{w}_i' \left(S_{ww}^{-1} - w_i w_i'\right) \sum_{\ell \neq i} w_\ell y_\ell \\
        &= \sum_{i = 1} ^n y_i \tilde{w}_i' \left(S_{ww}^{-1} + \frac{S_{ww}^{-1} w_i w_i' S_{ww}^{-1}}{1 - w_i' S_{ww}^{-1} w_i}\right) \sum_{\ell \neq i} w_\ell y_\ell \\
        &= \sum_{i = 1} ^n y_i \tilde{w}_i' \left( S_{ww}^{-1} + M_{W, ii}^{-1} S_{ww}^{-1} w_i w_i' S_{ww}^{-1} \right) \sum_{\ell \neq i} w_\ell y_\ell \\
        &= \sum_{i = 1} ^n y_i w_i' S_{ww}^{-1} \check{A} S_{ww}^{-1} \sum_{\ell \neq i} w_\ell y_\ell + \sum_{i = 1} ^n y_i w_i' S_{ww}^{-1} \check{A} M_{W, ii}^{-1} S_{ww}^{-1} w_i w_i' S_{ww}^{-1} \sum_{\ell \neq i} w_\ell y_\ell \\
        &= \sum_{i = 1} ^n \sum_{\ell \neq i} y_i y_\ell w_i' S_{ww}^{-1} \check{A} S_{ww}^{-1} w_\ell + \sum_{i = 1} ^n \sum_{\ell \neq i} y_i y_\ell M_{W, ii}^{-1} w_i' S_{ww}^{-1} \check{A} S_{ww}^{-1} w_i w_i' S_{ww}^{-1} w_\ell \\
        &= \sum_{i = 1} ^n \sum_{\ell \neq i} y_i y_\ell B_{W, i\ell} + y_i y_\ell M_{W, ii}^{-1} B_{W, ii} (1 - M_{W, i\ell}) = \sum_{i = 1} ^n \sum_{\ell \neq i} C_{i \ell} y_i y_\ell.
    \end{split}
\end{equation*}
Now, define $\check{e}_i := f(z_i) - p_k (z_i)' \alpha$ to be an approximation error (it is implicitly indexed by the unknown function but we omit this dependence for brevity). We can write the difference
\begin{align*}
        \hat{\theta} - \theta &= \sum_{i = 1} ^n \sum_{\ell \neq i} C_{i\ell} y_i y_\ell - \sum_{i = 1} ^n \gamma' w_i \tilde{w}_i' \gamma \\
        &= \sum_{i = 1} ^n \sum_{\ell \neq i} C_{i\ell} \left(x_i'\beta + f(z_i) + e_i \right) \left(x_\ell' \beta + f(z_\ell) + e_\ell \right) - \sum_{i = 1} ^n \gamma' w_i \tilde{w}_i' \gamma \\
        &= \sum_{i = 1} ^n \sum_{\ell \neq i} C_{i \ell} \left( \gamma' w_i + e_i + \check{e}_i \right) \left(\gamma' w_\ell + e_\ell + \check{e}_\ell\right) - \sum_{i = 1} ^n \gamma' w_i \tilde{w}_i' \gamma \\
        &= \sum_{i = 1} ^n \sum_{\ell \neq i} C_{i \ell} \gamma' w_i \gamma' w_\ell + \sum_{i = 1} ^n \sum_{\ell \neq i} C_{i \ell} (\gamma' w_i e_\ell + \gamma' w_\ell e_i) - \sum_{i = 1} ^n \gamma' w_i \tilde{w}_i' \gamma 
        \\&\qquad+ \sum_{i = 1} ^n \sum_{\ell \neq i} C_{i \ell} (\gamma' w_i \check{e}_\ell + \gamma' w_\ell \check{e}_i + e_i \check{e}_\ell + e_\ell \check{e}_i) + \sum_{i = 1} ^n \sum_{\ell \neq i} C_{i \ell} e_i e_\ell + \sum_{i = 1} ^n \sum_{\ell \neq i} C_{i \ell} \check{e}_i \check{e}_\ell\\
        &= \sum_{i = 1} ^n e_i \sum_{\ell \neq i} (\gamma' w_i + \gamma' w_\ell) C_{i\ell} + \sum_{i = 1} ^n \check{e}_i \sum_{\ell \neq i} (\gamma' w_i + \gamma' w_\ell + e_i + e_\ell)\\&\qquad+ \sum_{i = 1} ^n \sum_{\ell \neq i} C_{i \ell} e_i e_\ell + \sum_{i = 1} ^n \sum_{\ell \neq i} C_{i \ell} \check{e}_i \check{e}_\ell \\
        &= \sum_{i = 1} ^n (2 \tilde{w}_i' \gamma - \check{w}_i' \gamma) e_i + \sum_{i = 1} ^n \sum_{\ell \neq i} C_{i \ell} e_i e_\ell + \mathcal{O}_p (n k^{- \alpha_f}),
\end{align*}
where the last equality follows from defining $\check{w}_i := \sum_{\ell = 1} ^n M_{W, i\ell} \frac{B_{W, \ell \ell}}{1 - P_{W,\ell \ell}} w_\ell$, and the fact that by the Assumption \ref{asum1}
\begin{equation*}
    \check{e}_i ^2 \leq \left( f(z_i) - p_k (z_i)' \alpha \right)^2 = \E[\left( f(z_i) - p_k (z_i)' \alpha \right)^2] \leq \min_{\alpha \in \mathbb{R}^k} \E[\left( f(z_i) - p_k (z_i)' \alpha \right)^2] \leq Ck^{-2 \alpha_f}. \\
\end{equation*}
As $k \rightarrow \infty$, we have that
\begin{equation*}
    \hat{\theta} - \theta = \sum_{i = 1} ^n \left(2 \tilde{w}_i' \gamma - \check{w}_i' \gamma\right) e_i + \sum_{i = 1} ^n \sum_{\ell \neq i} C_{i \ell} e_i e_\ell + o_p(1).
\end{equation*}

Having dispensed with asymptotically negligible contributions to $\hat{\theta}$, asymptotic variance is 
\begin{equation*}
    \var[\hat{\theta}] = \sum_{i = 1} ^n (2\tilde{w}_i' \gamma - \check{w}_i' \gamma)^2 \sigma_i ^2 + 2 \sum_{i = 1} ^n \sum_{\ell \neq i} C_{i \ell} \sigma_i ^2 \sigma_\ell ^2,
\end{equation*}
that is, a sum of two components given by
\begin{equation*}
    \mathcal{V}_s := \sum_{i = 1} ^n (2\tilde{w}_i' \gamma - \check{w}_i' \gamma)^2 \sigma_i ^2, \quad \mathcal{V}_u := 2 \sum_{i = 1} ^n \sum_{\ell \neq i} C_{i \ell} \sigma_i ^2 \sigma_\ell ^2.
\end{equation*}
The normalized difference is then given by
\begin{equation*}
    \var[\hat{\theta}]^{-1/2} (\hat{\theta} - \theta) = \omega_1 \mathcal{S}_n + \omega_2 \mathcal{U}_n,
\end{equation*}
with $\omega_1 := \var[\hat{\theta}]^{-1/2} \mathcal{V}_s^{1/2}$, $\omega_2 := \var[\hat{\theta}]^{-1/2} \mathcal{V}_u^{1/2}$, and
\begin{equation*}
    \mathcal{S}_n := \mathcal{V}_s^{-1/2} \sum_{i = 1} ^n (2 \tilde{w}_i' \gamma - \check{w}_i' \gamma) e_i, \quad \mathcal{U}_n := \mathcal{V}_u ^{-1/2} \sum_{i = 1} ^n \sum_{\ell \neq i} C_{i \ell} e_i e_\ell.
\end{equation*}

Consider the case where the limit of $\omega_1$ is nonzero. If it is not, then asymptotic normality of the difference $\var[\hat{\theta}]^{-1/2} (\hat{\theta} - \theta)$ is implied by asymptotic normality of $\mathcal{U}_n$. Using notation of Lemma \ref{solv}, we have that $\dot{w}_i = \mathcal{V}_s^{-1/2} (2 \tilde{w}_i' \gamma - \check{w}_i' \gamma)$, and $Q_{i\ell} = \mathcal{V}_u ^{-1/2} C_{i\ell}$. 

To verify the condition \textit{(i)} of Lemma \ref{solv}, note that
\begin{equation*}
\begin{split}
    \max_i \dot{w}_i ^2 = \max_i \mathcal{V}_s^{-1} (2 \tilde{w}_i' \gamma - \check{w}_i' \gamma)^2 &\leq 
    \max_i 4 \mathcal{V}_s^{-1} \left((\tilde{w}_i'\gamma)^2 + (\check{w}_i'\gamma)^2\right) \\
    &= \max_i 4 \omega_1 ^{-2} \frac{(\tilde{w}_i'\gamma)^2 + (\check{w}_i'\gamma)^2}{\var[\hat{\theta}]} = o(1), 
\end{split}
\end{equation*}
where the last equality follows from Theorem \ref{big_r_norm} \textit{(i)}, and the nonzero limit of $\omega_1$.

For the condition \textit{(ii)} of Lemma \ref{solv}, denote $\tilde{A}_W := S_{ww}^{-1/2} \check{A} S_{ww}^{-1/2}$, and note that the first $r$ eigenvalues of the $\tilde{A}_W$ matrix are equal to eigenvalues of the $\tilde{A}$ matrix. Now, with constants $c_U$ and $c_L$ not dependent on $n$, we have that $\text{trace}(C^4) \leq c_U \cdot \text{trace}(B_W ^4) = c_U \cdot \text{trace}(\tilde{A}_W ^4) \leq c_U \lambda_1 ^2 \cdot \text{trace}(\tilde{A}_W ^2)$, and $\mathcal{V}_u \geq c_L \min_i \sigma_i ^4 \cdot \text{trace}(\tilde{A}_W) $, which implies
\begin{equation*}
    \text{trace} (Q^4) \leq \frac{c_U \lambda_1 ^2 \cdot \text{trace}(\tilde{A}_W ^2)}{\left(c_L \min_i \sigma_i ^4 \cdot \text{trace}(\tilde{A}_W ^2)\right)^2} = \mathcal{O}\left(\frac{\lambda_1 ^2}{\text{trace}(\tilde{A}_W ^2)}\right) = o(1),
\end{equation*}
where the last equality follows from Theorem \ref{big_r_norm} \textit{(ii)}. \hfill \qedsymbol

\section{Additional Tables}\label{a:tables}
Table \ref{tab:simulation_coverage} reports the finite-sample coverage diagnostic for the degree-5 leave-out estimator. Tables \ref{tab:descriptives}--\ref{tab:desc_cc} document the data and sample restrictions behind the estimation sample. Table \ref{tab:firm_controls_variation} documents the firm controls that drive the richer specification, and Tables \ref{tab:empirical_results_delta}--\ref{tab:empirical_diagnostics} report the specification ladder, its sensitivity to the nonlinear basis, the validation exercises, and numerical diagnostics.

\begin{table}[t]
\centering
\caption{Finite-sample coverage diagnostic for the degree-5 leave-out estimator}
\label{tab:simulation_coverage}
\vspace*{1mm}
\input{tables/simulation_coverage}
\note[Notes]{%
Entries report the share of 5,000 Monte Carlo replications in which the nominal 95\% confidence interval for the degree-5 leave-out estimator contains the quadratic-form target. The intervals use the estimated noncentral variance of the quadratic form and leave-one-out residual variances $\hat{\sigma}_i^2$. \emph{NA} means that the complete degree-5 basis is not estimable. This table is a finite-sample diagnostic, not an applied standard-error claim.}
\end{table}

\begin{table}[h!]
    \caption{Descriptive statistics of the cleaned worker--firm panel}
    \label{tab:descriptives}
    \centering
        \input{tables/descriptive_fdp}
\note[Notes]{%
The table uses the cleaned full-time worker--firm panel for workers aged 20--65 in 2008--2018, before the non-missing-control, connected-component, and leave-match-out restrictions. Panels A and C report worker-year and firm-year statistics, respectively: means, standard deviations, and the 25th, 50th, and 75th percentiles. Units are shown in the row labels. Panel B reports counts and sample shares; its percentages use worker-year rows as the denominator, the job-to-job move share uses non-missing switch indicators, and categorical missing values are shown explicitly. Panel D reports counts.}
\end{table}

\begin{table}[h!]
    \caption{Missingness of estimation variables}
    \label{tab:descriptive_coverage}
    \centering
        \input{tables/descriptive_coverage}
\note[Notes]{%
Entries report the share of worker-year observations in the cleaned analysis sample (2008--2018) with missing values for each variable required by at least one empirical specification, including the outcome, identifiers, year, and controls. This table uses worker-year denominators and is computed before the non-missing-control, connected-component, and leave-match-out restrictions. The firm-control diagnostic in Table \ref{tab:firm_controls_variation} uses firm-year denominators and an imputed employment field, so its missingness rates are not directly comparable. The main empirical tables use the Panel B leave-match-out sample described in Table \ref{tab:desc_cc}.}
\end{table}

\begin{table}[t]
    \caption{Sample flow into the leave-match-out estimation set}
    \label{tab:desc_cc}
    \centering
        \input{tables/descriptive_cc_master}
\note[Notes]{%
The table reports sample flow from loaded QP rows through the cleaned worker-year panel, non-missing controls, the largest worker--firm connected component, and the final leave-match-out set used by the full-regressor cluster-fold point correction. $N$ is the number of observations at each stage; Workers and Firms are unique identifiers, Movers are workers observed at more than one firm, and Matches are unique worker--firm pairs.
The Panel A rows describe the larger parsimonious candidate sample. The main empirical tables use the Panel B leave-match-out set for both panels.}
\end{table}

\begin{table}[t]
    \caption{Firm controls: within-firm time variation and missingness}
    \label{tab:firm_controls_variation}
    \centering
        \input{tables/firm_controls_variation}
\note[Notes]{%
Panel A uses the cleaned 2008--2018 firm-year panel used to form the estimation sample, not the final leave-match-out sample. It reports within-firm time variation for the Panel B firm controls, restricting to firms observed for at least two years. Each row uses its own non-missing firm-years, so row counts differ across controls and need not equal the joint estimation sample.
The within-firm share is the share of total variance on the estimation scale attributable to within-firm changes: log employment and $\log(1+x)$ for fixed assets and intermediate inputs. Zero within SD is the share of firms with at least two observed years and zero within-firm standard deviation of the transformed control. Panel B reports year-by-year missingness rates in the firm-year panel. The worker-count column has zero missingness because employment is imputed from the matched employer--employee data; fixed assets are unavailable in 2008--2009 and are about 18--24\% missing thereafter; intermediate inputs are about 18--26\% missing throughout the sample. The final column reports the share of firm-years with workers, fixed assets, and intermediate inputs all observed. These firm-year rates are not directly comparable to the worker-year rates in Table \ref{tab:descriptive_coverage}.}
\end{table}

\begin{table}[t]
\centering
\caption{Stepwise changes in wage variance decomposition}
\label{tab:empirical_results_delta}
\vspace*{1mm}
\input{tables/empirical_results_delta}
\note[Notes]{%
Entries are next-specification minus comparison-specification differences in bias-corrected variance-component shares from the full-regressor cluster-fold point correction on the common Panel B leave-match-out estimation sample. The five rows are, respectively: linear additive controls minus the AKM baseline; heterogeneous linear controls minus linear additive controls; nonlinear homogeneous controls minus linear additive controls; the full model minus nonlinear homogeneous controls; and the full model minus heterogeneous linear controls. A positive value denotes an increase.
Starting from the linear additive specification, one path adds heterogeneity at degree 1 and then nonlinearity with interactions; the other adds nonlinearity without interactions and then heterogeneity in the nonlinear basis. Both paths arrive at the full model. These are point-estimate differences; no statistical standard errors or formal tests are reported.}
\end{table}

\begin{table}[t]
\centering
\caption{Alternative series bases and the wage variance decomposition}
\label{tab:empirical_basis_robustness}
\vspace*{1mm}
\input{tables/empirical_basis_robustness}
\note[Notes]{%
Entries report full-regressor cluster-fold bias-corrected variance-component shares and the residualized-wage variance ratio on the common Panel B leave-match-out estimation sample. \emph{Common} denotes the nonlinear homogeneous specification; \emph{Group-specific} denotes the full specification that interacts the basis with worker groups. The residualized-wage ratio uses $\tilde y$ as defined in the note to Table \ref{tab:empirical_results_residualized}.
The polynomial and Hermite specifications use complete degree-5 multivariate bases. The B-spline specification is additive across continuous inputs and uses cubic bases with quantile-spaced knots; its spline counts match the polynomial model's realized control dimension.
All specifications use the same 20 outcome-independent folds, treating mover matches as clusters and stayer person-years as singleton clusters. The randomized correction satisfies a numerical-precision rule: the Monte Carlo standard error of each correction term is at most $6 \times 10^{-4}$ of total log-wage variance. This is a numerical-precision check, not a statistical inference statement. The table reports point estimates only.}
\end{table}

\begin{table}[t]
\centering
\caption{Validation of the flexible covariate specification}
\label{tab:empirical_covariate_validation}
\vspace*{1mm}
\input{tables/empirical_covariate_diagnostics}
\note[Notes]{%
These diagnostics use the common Panel B leave-match-out sample. Panel A compares the nonlinear specifications with their nested linear benchmarks. Partial $R^2$ is one minus the ratio of the rich-model mean squared error to the restricted-model mean squared error. The held-out result uses 94,493 observations from one held-out fold of an outcome-independent, graph-aware 100-fold assignment; the assignment preserves regressor support and connectedness in the training sample.
Panel B injects a nonlinear component that is orthogonal to the heterogeneous-linear design and has variance equal to 5\% of that model's residual variance. The recovery statistics compare the rich model's incremental fitted component with the injected signal.
Panel C compares the plug-in worker and firm coefficients from the heterogeneous-linear and full joint regressions. The root mean squared (RMS) changes and decile movements use observation weights. These diagnostics are descriptive and are not the bias-corrected variance components reported in the main tables; no statistical standard errors are reported.}
\end{table}

\begin{table}[t]
\centering
        \caption{Control dimension and specification diagnostics}
    \label{tab:empirical_diagnostics}
    \centering
        \input{tables/empirical_results_diagnostics}
\note[Notes]{%
$k$ counts the observed-control and basis columns in the joint augmented regression; worker, firm, and year fixed-effect columns are excluded. Both panels use the common Panel B leave-match-out sample; Panel A changes the controls only. This table reports control dimension only. Sample-flow counts are reported separately in Table \ref{tab:desc_cc}.}
\end{table}

\end{appendix}

\end{document}

%% file: tables/simulation_results.tex
\begin{tabular*}{\textwidth}[]{p{4.5cm}@{\extracolsep\fill}ccccc}
\toprule
Scenario & PI(1) & LOO(1) & LOO(3) & LOO(5) & RMSE(5) \\
\midrule
\multicolumn{6}{l}{\textit{Approximation and nonlinearity}}\\
\midrule
Mild nonlinearity & 0.01 & 0.00 & 0.00 & 0.00 & 0.05 \\
Linear benchmark & 0.00 & 0.00 & 0.00 & 0.00 & 0.05 \\
Strong nonlinearity & $-$0.04 & $-$0.04 & $-$0.02 & $-$0.01 & 0.05 \\
\midrule
\multicolumn{6}{l}{\textit{Strong radial design: one-feature changes}}\\
\midrule
Heteroskedasticity & $-$0.03 & $-$0.04 & $-$0.02 & $-$0.01 & 0.05 \\
Heavy-tailed errors & $-$0.03 & $-$0.04 & $-$0.01 & $-$0.01 & 0.05 \\
Higher nuisance dimension & $-$0.05 & $-$0.05 & $-$0.02 & NA & NA \\
Many regressors & $-$0.01 & $-$0.03 & $-$0.02 & $-$0.01 & 0.06 \\
High leverage & $-$0.02 & $-$0.03 & $-$0.01 & 0.00 & 0.02 \\
Larger sample & 0.02 & 0.01 & $-$0.01 & 0.00 & 0.03 \\
\bottomrule
\end{tabular*}

%% file: tables/data/gakm_akm_calibrated/simulation_results_akm_calibrated.tex
\begin{tabular*}{\textwidth}[]{p{2.25cm}p{1.65cm}@{\extracolsep\fill}rrr}
\toprule
DGP & Nuisance fit & KSS bias (MC SD) & HO bias & PI bias \\
\midrule
Linear DGP & Flexible & 0.06\% (0.50\%) & 1.24\% & 1.90\% \\
Linear DGP & Linear & 0.06\% (0.48\%) & 1.25\% & 1.90\% \\
Nonlinear DGP & Flexible & 0.07\% (0.62\%) & 1.31\% & 1.98\% \\
Nonlinear DGP & Linear & 0.18\% (0.62\%) & 1.60\% & 2.37\% \\
\bottomrule
\end{tabular*}

%% file: tables/empirical_results.tex
\begin{tabular*}{\textwidth}[]{p{4.8cm}@{\extracolsep\fill}ccc}
\toprule
Model & $\sigma_\alpha^2$ & $\sigma_\psi^2$ & $\cov(\alpha,\psi)$ \\
\midrule
\multicolumn{4}{l}{\textit{Panel A: Parsimonious controls (Panel B sample)}}\\
\midrule
AKM baseline & 0.5507 & 0.1435 & 0.0803 \\
Linear additive controls & 0.5326 & 0.1367 & 0.0738 \\
Heterogeneous linear controls & 0.5388 & 0.1366 & 0.0748 \\
Nonlinear homogeneous controls & 0.5483 & 0.1358 & 0.0775 \\
Full model & 0.5619 & 0.1371 & 0.0767 \\
\midrule
\multicolumn{4}{l}{\textit{Panel B: Worker and firm-input controls}}\\
\midrule
AKM baseline & 0.5507 & 0.1435 & 0.0803 \\
Linear additive controls & 0.4736 & 0.1214 & 0.0465 \\
Heterogeneous linear controls & 0.4506 & 0.1198 & 0.0446 \\
Nonlinear homogeneous controls & 0.4913 & 0.1199 & 0.0410 \\
Full model & 0.4904 & 0.1166 & 0.0420 \\
\bottomrule
\end{tabular*}

%% file: tables/empirical_results_residualized_variance.tex
\begin{tabular*}{0.70\textwidth}[]{p{4.8cm}@{\extracolsep\fill}c}
\toprule
Model & $\var(\tilde y)/\var(y)$ \\
\midrule
\multicolumn{2}{l}{\textit{Panel A: Parsimonious controls (Panel B sample)}}\\
\midrule
AKM baseline & 0.9985 \\
Linear additive controls & 0.9561 \\
Heterogeneous linear controls & 0.9642 \\
Nonlinear homogeneous controls & 0.9783 \\
Full model & 0.9913 \\
\midrule
\multicolumn{2}{l}{\textit{Panel B: Worker and firm-input controls}}\\
\midrule
AKM baseline & 0.9985 \\
Linear additive controls & 0.8271 \\
Heterogeneous linear controls & 0.7982 \\
Nonlinear homogeneous controls & 0.8319 \\
Full model & 0.8271 \\
\bottomrule
\end{tabular*}

%% file: tables/simulation_coverage.tex
\begin{tabular*}{\textwidth}[]{p{4.5cm}@{\extracolsep\fill}c}
\toprule
Scenario & 95\% coverage (LOO(5)) \\
\midrule
\multicolumn{2}{l}{\textit{Approximation and nonlinearity}}\\
\midrule
Mild nonlinearity & 0.93 \\
Linear benchmark & 0.93 \\
Strong nonlinearity & 0.90 \\
\midrule
\multicolumn{2}{l}{\textit{Strong radial design: one-feature changes}}\\
\midrule
Heteroskedasticity & 0.90 \\
Heavy-tailed errors & 0.91 \\
Higher nuisance dimension & NA \\
Many regressors & 0.97 \\
High leverage & 0.86 \\
Larger sample & 0.95 \\
\bottomrule
\end{tabular*}

%% file: tables/descriptive_fdp.tex
\begin{tabular*}{\textwidth}[]{p{5.5cm}@{\extracolsep\fill}ccccc}
\toprule
& Mean & SD & p25 & Median & p75 \\
\midrule
\multicolumn{6}{l}{\textit{A. Worker-year panel}}\\
Age & 39.20 & 10.52 & 31.00 & 38.00 & 47.00 \\
Months worked & 8.19 & 1.69 & 8.00 & 9.00 & 9.00 \\
Tenure (months) & 103.57 & 103.76 & 22.00 & 69.00 & 153.00 \\
Annual wage (k€) & 13.92 & 10.46 & 7.99 & 10.53 & 16.02 \\
Hourly wage (€) & 7.17 & 5.17 & 4.15 & 5.33 & 7.99 \\
Number of jobs & 1.01 & 0.12 & 1.00 & 1.00 & 1.00 \\
\midrule
\multicolumn{6}{l}{\textit{B. Worker composition and mobility}}\\
Job-to-job moves (count) & 1,803,534 & -- & -- & -- & -- \\
Job-to-job movers (\%) & 9.85\% & -- & -- & -- & -- \\
Female workers (\%) & 42.14\% & -- & -- & -- & -- \\
Education: At most primary (\%) & 47.91\% & -- & -- & -- & -- \\
Education: Secondary (\%) & 34.25\% & -- & -- & -- & -- \\
Education: At least bachelor's (\%) & 17.52\% & -- & -- & -- & -- \\
Education missing (\%) & 0.32\% & -- & -- & -- & -- \\
Qualification: Specialized workers (\%) & 60.44\% & -- & -- & -- & -- \\
Qualification: Generic workers (\%) & 15.27\% & -- & -- & -- & -- \\
Qualification missing (\%) & 11.88\% & -- & -- & -- & -- \\
Qualification: Top managers (\%) & 7.10\% & -- & -- & -- & -- \\
Qualification: Middle managers (\%) & 5.31\% & -- & -- & -- & -- \\
Lisboa (\%) & 35.13\% & -- & -- & -- & -- \\
Norte (\%) & 35.06\% & -- & -- & -- & -- \\
Centro (\%) & 18.28\% & -- & -- & -- & -- \\
Alentejo (\%) & 4.54\% & -- & -- & -- & -- \\
Algarve (\%) & 3.67\% & -- & -- & -- & -- \\
Region (NUT2) other (\%) & 3.32\% & -- & -- & -- & -- \\
\midrule
\multicolumn{6}{l}{\textit{C. Firm-year panel}}\\
Workers & 13.35 & 119.97 & 3.00 & 4.00 & 9.00 \\
Fixed assets (M€) & 0.72 & 26.38 & 0.01 & 0.03 & 0.14 \\
Intermediate inputs (M€) & 0.72 & 22.49 & 0.03 & 0.06 & 0.19 \\
\midrule
\multicolumn{6}{l}{\textit{D. Panel counts}}\\
Unique workers & 3,532,497 & -- & -- & -- & -- \\
Unique firms & 484,704 & -- & -- & -- & -- \\
Worker-year observations & 18,314,740 & -- & -- & -- & -- \\
\bottomrule
\end{tabular*}

%% file: tables/descriptive_coverage.tex
\begin{tabular*}{0.68\textwidth}[]{p{5.5cm}@{\extracolsep\fill}r}
\toprule
Variable & Missing share \\
\midrule
Age & 0.00\% \\
Education & 0.32\% \\
Firm ID & 0.00\% \\
Fixed assets & 25.50\% \\
Intermediate inputs & 8.26\% \\
Log hourly wage & 0.01\% \\
Qualification & 11.88\% \\
Gender & 0.00\% \\
Worker ID & 0.00\% \\
Workers & 8.26\% \\
Year & 0.00\% \\
\bottomrule
\end{tabular*}

%% file: tables/descriptive_cc_master.tex
\begingroup
\small
\begin{tabular*}{\textwidth}[]{p{4.2cm}@{\extracolsep\fill}rrrrr}
\toprule
Stage & $N$ & Workers & Firms & Movers & Matches \\
\midrule
\multicolumn{6}{l}{\textit{Panel A: Parsimonious controls}}\\
\midrule
Loaded QP rows & 23,304,646 & 4,217,716 & 533,360 & 1,758,967 & 7,104,038 \\
Cleaned analysis sample & 18,314,740 & 3,532,497 & 484,704 & 1,231,386 & 5,291,365 \\
Non-missing controls & 18,312,694 & 3,532,350 & 484,696 & 1,231,231 & 5,290,982 \\
Largest connected component & 17,256,654 & 3,282,116 & 333,180 & 1,212,893 & 5,021,308 \\
Leave-match-out set & 15,232,458 & 2,432,843 & 193,487 & 1,075,443 & 3,969,503 \\
\midrule
\multicolumn{6}{l}{\textit{Panel B: Worker and firm-input controls}}\\
\midrule
Loaded QP rows & 23,304,646 & 4,217,716 & 533,360 & 1,758,967 & 7,104,038 \\
Cleaned analysis sample & 18,314,740 & 3,532,497 & 484,704 & 1,231,386 & 5,291,365 \\
Non-missing controls & 11,892,426 & 2,642,775 & 296,549 & 663,784 & 3,487,931 \\
Largest connected component & 10,891,978 & 2,392,780 & 180,478 & 653,930 & 3,227,750 \\
Leave-match-out set & 9,439,305 & 1,719,643 & 105,442 & 578,409 & 2,457,963 \\
\bottomrule
\end{tabular*}
\endgroup

%% file: tables/firm_controls_variation.tex
\begingroup
\begin{tabular*}{\textwidth}[]{p{2.7cm}@{\extracolsep\fill}>{\raggedleft\arraybackslash}p{1.65cm}>{\raggedleft\arraybackslash}p{1.55cm}>{\raggedleft\arraybackslash}p{1.35cm}>{\raggedleft\arraybackslash}p{1.55cm}>{\raggedleft\arraybackslash}p{1.30cm}}
\toprule
& Non-missing & Firms ($\geq$2y) & Mean years & Within share & Zero within SD \\
& firm-year & & (firms $\geq$2y) & (\%) & (\%) \\
\midrule
\multicolumn{6}{l}{\textit{A. Firm-control time variation in estimation sample}}\\
Employment & 2,397,757 & 383,679 & 5.99 & 7.88\% & 27.11\% \\
Fixed assets & 1,504,836 & 253,373 & 5.71 & 17.75\% & 5.99\% \\
Intermediate inputs & 1,874,310 & 290,932 & 6.22 & 9.75\% & 0.72\% \\
\midrule
\multicolumn{6}{l}{\textit{B. Missingness by year (firm-year panel)}}\\
Year & Firm-years & Workers miss. & Fixed assets miss. & Intermed. inputs miss. & All controls obs. \\
2008 & 253,005 & 0.00\% & 100.00\% & 26.27\% & 0.00\% \\
2009 & 245,077 & 0.00\% & 100.00\% & 25.37\% & 0.00\% \\
2010 & 223,267 & 0.00\% & 23.66\% & 23.66\% & 76.34\% \\
2011 & 218,229 & 0.00\% & 23.01\% & 23.01\% & 76.99\% \\
2012 & 202,744 & 0.00\% & 22.47\% & 22.47\% & 77.53\% \\
2013 & 198,597 & 0.00\% & 21.20\% & 21.20\% & 78.80\% \\
2014 & 202,625 & 0.00\% & 20.36\% & 20.36\% & 79.64\% \\
2015 & 207,037 & 0.00\% & 19.89\% & 19.89\% & 80.11\% \\
2016 & 211,892 & 0.00\% & 19.39\% & 19.39\% & 80.61\% \\
2017 & 215,908 & 0.00\% & 18.80\% & 18.80\% & 81.20\% \\
2018 & 219,376 & 0.00\% & 18.24\% & 18.24\% & 81.76\% \\
\bottomrule
\end{tabular*}
\endgroup

%% file: tables/empirical_results_delta.tex
\begin{tabular*}{\textwidth}[]{p{6.2cm}@{\extracolsep\fill}ccc}
\toprule
Change & $\Delta\sigma_\alpha^2$ & $\Delta\sigma_\psi^2$ & $\Delta\cov(\alpha,\psi)$ \\
\midrule
\multicolumn{4}{l}{\textit{Panel A: Parsimonious controls (Panel B sample)}}\\
\midrule
Add linear observables & -0.0181 & -0.0068 & -0.0065 \\
Add heterogeneity at degree 1 & 0.0061 & -0.0001 & 0.0010 \\
Add nonlinearity without interactions & 0.0157 & -0.0009 & 0.0037 \\
Add heterogeneity in nonlinear basis & 0.0136 & 0.0013 & -0.0008 \\
Add nonlinearity with interactions & 0.0232 & 0.0005 & 0.0018 \\
\midrule
\multicolumn{4}{l}{\textit{Panel B: Worker and firm-input controls}}\\
\midrule
Add linear observables & -0.0771 & -0.0221 & -0.0338 \\
Add heterogeneity at degree 1 & -0.0230 & -0.0016 & -0.0019 \\
Add nonlinearity without interactions & 0.0177 & -0.0016 & -0.0056 \\
Add heterogeneity in nonlinear basis & -0.0009 & -0.0032 & 0.0010 \\
Add nonlinearity with interactions & 0.0398 & -0.0032 & -0.0026 \\
\bottomrule
\end{tabular*}

%% file: tables/empirical_basis_robustness.tex
\begin{tabular*}{\textwidth}[]{p{3.0cm}p{3.2cm}@{\extracolsep\fill}rrrr}
\toprule
Basis & Nonlinear specification & $\sigma_\alpha^2$ & $\sigma_\psi^2$ & $\cov(\alpha,\psi)$ & $\var(\tilde y)/\var(y)$ \\
\midrule
\multicolumn{6}{l}{\textit{Panel A: Parsimonious controls (Panel B sample)}}\\
\midrule
Degree-5 polynomial & Common & 0.5483 & 0.1358 & 0.0775 & 0.9783 \\
Degree-5 polynomial & Group-specific & 0.5619 & 0.1371 & 0.0767 & 0.9913 \\
Degree-5 Hermite & Common & 0.5418 & 0.1358 & 0.0770 & 0.9709 \\
Degree-5 Hermite & Group-specific & 0.5491 & 0.1371 & 0.0757 & 0.9765 \\
Cubic B-spline & Common & 0.5505 & 0.1358 & 0.0776 & 0.9808 \\
Cubic B-spline & Group-specific & 0.5502 & 0.1371 & 0.0751 & 0.9764 \\
\midrule
\multicolumn{6}{l}{\textit{Panel B: Worker and firm-input controls}}\\
\midrule
Degree-5 polynomial & Common & 0.4913 & 0.1199 & 0.0410 & 0.8319 \\
Degree-5 polynomial & Group-specific & 0.4904 & 0.1166 & 0.0420 & 0.8271 \\
Degree-5 Hermite & Common & 0.4900 & 0.1199 & 0.0409 & 0.8306 \\
Degree-5 Hermite & Group-specific & 0.4908 & 0.1165 & 0.0424 & 0.8286 \\
Cubic B-spline & Common & 0.4925 & 0.1159 & 0.0400 & 0.8276 \\
Cubic B-spline & Group-specific & 0.4740 & 0.1143 & 0.0411 & 0.8070 \\
\bottomrule
\end{tabular*}

%% file: tables/empirical_covariate_diagnostics.tex
\begin{tabular*}{\textwidth}{@{\extracolsep{\fill}}lrr}
\toprule
Diagnostic & Worker effect & Firm effect \\
\midrule
\multicolumn{3}{l}{\textit{Panel A. Incremental wage fit}} \\
In-sample partial $R^2$: homogeneous nonlinear vs. additive linear & \multicolumn{2}{c}{0.39\%} \\
In-sample partial $R^2$: interacted nonlinear vs. heterogeneous linear & \multicolumn{2}{c}{1.25\%} \\
Held-out partial $R^2$: interacted nonlinear vs. heterogeneous linear & \multicolumn{2}{c}{1.08\%} \\
Held-out observations & \multicolumn{2}{c}{94{,}493} \\
\addlinespace
\multicolumn{3}{l}{\textit{Panel B. Injected nonlinear signal}} \\
Signal variance relative to restricted residual variance & \multicolumn{2}{c}{5.0\%} \\
Recovery slope & \multicolumn{2}{c}{0.998} \\
Recovery correlation & \multicolumn{2}{c}{0.994} \\
\addlinespace
\multicolumn{3}{l}{\textit{Panel C. Fixed-effect reallocation, heterogeneous linear to full model}} \\
RMS coefficient change & 0.178 & 0.033 \\
Coefficient correlation & 0.956 & 0.994 \\
Observation weight attached to effects changing decile & 54.5\% & 24.0\% \\
\bottomrule
\end{tabular*}

%% file: tables/empirical_results_diagnostics.tex
\begin{tabular*}{0.56\textwidth}[]{p{5.0cm}@{\extracolsep\fill}c}
\toprule
Model & $k$ \\
\midrule
\multicolumn{2}{l}{\textit{Panel A: Parsimonious controls (Panel B sample)}}\\
\midrule
AKM baseline & 0 \\
Linear additive controls & 2 \\
Heterogeneous linear controls & 3 \\
Nonlinear homogeneous controls & 4 \\
Full model & 9 \\
\midrule
\multicolumn{2}{l}{\textit{Panel B: Worker and firm-input controls}}\\
\midrule
AKM baseline & 0 \\
Linear additive controls & 10 \\
Heterogeneous linear controls & 102 \\
Nonlinear homogeneous controls & 129 \\
Full model & 3004 \\
\bottomrule
\end{tabular*}